\newtheorem{definition}{Definition}
\newtheorem{proposition}[definition]{Proposition}
\newtheorem{lemma}[definition]{Lemma}
\newtheorem{theorem}[definition]{Theorem}
\newtheorem{corollary}[definition]{Corollary}
\newtheorem{observation}[definition]{Observation}
\newtheorem{assumption}[definition]{Assumption}
\newcommand{\e}{\epsilon}
\newcommand{\p}{\mathbb{P}}
\newcommand{\N}{\mathbb{N}}
\newcommand{\eqclass}[1]{[#1]}
\begin{document}

\setlength{\pdfpageheight}{\paperheight}
\setlength{\pdfpagewidth}{\paperwidth}

\conferenceinfo{CONF 'yy}{Month d--d, 20yy, City, ST, Country}
\copyrightyear{20yy}
\copyrightdata{978-1-nnnn-nnnn-n/yy/mm}
\copyrightdoi{nnnnnnn.nnnnnnn}


\titlebanner{banner above paper title}        
\preprintfooter{Betz, Le Roux, Stable states of perturbed MC}   

\title{Stable states of Perturbed Markov Chains}

\authorinfo{Volker Betz}
           {Technische Universit\"at Darmstadt}
           {betz@mathematik.tu-darmstadt.de}
\authorinfo{St\'ephane Le Roux\thanks{This author is supported by the ERC inVEST (279499) project and was in TU Darmstadt when this research started.}}
           {Universit\'e libre de Bruxelles}
           {stephane.le.roux@ulb.ac.be}

\maketitle

\begin{abstract}
\noindent Given an infinitesimal perturbation of a discrete-time finite Markov chain, we seek the states that are stable despite the perturbation, \textit{i.e.} the states whose weights in the stationary distributions can be bounded away from $0$ as the noise fades away. Chemists, economists, and computer scientists have been studying irreducible perturbations built with exponential maps. Under these assumptions, Young proved the existence of and computed the stable states in cubic time. We fully drop these assumptions, generalize Young's technique, and show that stability is decidable as long as $f\in O(g)$ is. Furthermore, if the perturbation maps (and their multiplications) satisfy $f\in O(g)$ or $g\in O(f)$, we prove the existence of and compute the stable states and the metastable dynamics at all time scales where some states vanish. Conversely, if the big-$O$ assumption does not hold, we build a perturbation with these maps and no stable state. Our algorithm also runs in cubic time despite the general assumptions and the additional work. Proving the correctness of the algorithm relies on new or rephrased results in Markov chain theory, and on algebraic abstractions thereof.
\end{abstract}

\category{G}{3}{}

\terms

\keywords
evolution, learning, metastability, tropical algebra, shortest path, SCC, cubic time algorithm

\section{Introduction}\label{sect:intro}

Motivated by the dynamics of chemical reactions, Eyring~\cite{Eyring35} and Kramers~\cite{Kramers40} studied how infinitesimal perturbations of a Markov chain affects its stationary distributions. This topic has been further investigated by several academic communities including probability theorists, economists, and computer scientists. In several fields of application, such as learning and game theory, it is sometimes unnecessary to describe the exact values of the limit stationary distributions: it suffices to know whether these values are zero or not. Thus, the \emph{stochastically stable states} (\cite{FY90},~\cite{KMR93},~\cite{Young93}) were defined in different contexts as the states that have positive probability in the limit. We rephrase a definition below.

\begin{definition}[Markov chain perturbation and stochastic stability]\label{defn:mcp-ss}
Let $I$ be a subset of positive real numbers with $0$ as a limit point for the usual topology\footnote{This implies that $I$ is infinite. $]0,1]$ and $\{\frac{1}{2^n}\,|\,n\in \mathbb{N}\}$ are typical $I$.}. A perturbation is a family $((X_n^{(\e)})_{n\in\N})_{\e\in I}$ of discrete-time Markov chains sharing the same finite state space. If the chain $(X_n^{(\e)})_{n\in\N}$ is irreducible for all $\e \in I$, then $((X_n^{(\e)})_{n\in\N})_{\e\in I}$ is said to be an irreducible perturbation.

A state $x$ of $((X_n^{(\e)})_{n\in\N})_{\e\in I}$ is stochastically stable if there exists a family of corresponding stationary distributions $(\mu_\e)_{\e\in I}$ such that $\liminf_{\e\to 0} \mu_{\e}(x) > 0$. It is stochastically fully vanishing if $\limsup_{\e\to 0} \mu_{\e}(x) = 0$ for all $(\mu_\e)_{\e\in I}$. Non-stable states are called vanishing.
\end{definition}

\noindent Definition~\ref{defn:mcp-ss} may be motivated in at least two ways. First, a dynamical system (\textit{e.g.} modeled by a Markov chain) has been perturbed from the outside, and the laws governing the systems (\textit{e.g} the transition probability matrix) have been changed. As time elapses (\textit{i.e.} as $\e$ approaches zero), the laws slowly go back to normal. What are the almost sure states of the system after infinite time? Second, a very complex Markov chain is the sum of a simple chain and a complex perturbation matrix that is described \textit{via} a small, fixed $\e_0$. The stationary distributions of the complex chain are hard to compute, but which states have significantly positive probability after infinite time? Our main result below answers these questions.

\begin{theorem}\label{thm:teaser}
Consider a perturbation such that $f \in O(g)$ or $g \in O(f)$ for all $f$ and $g$ in the multiplicative closure of the transition probability functions $\e \mapsto p_\e(x,y)$ with $x \neq y$. Then the perturbation has stable states, and stability can be decided in $O(n^3)$, where $n$ is the number of states.
\end{theorem}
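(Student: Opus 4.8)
The plan is to generalize the Freidlin--Wentzell/Young route: express the stationary weights through spanning in-trees, use the big-$O$ totality hypothesis to collapse the asymptotics onto an optimization in a totally preordered ``max--times'' algebra, and then organize that optimization as a hierarchical contraction running in cubic time. First I would dispose of reducibility: since every stationary distribution is supported on recurrent classes, a short graph-theoretic reduction (passing to the sink strongly connected components of the relevant digraph of non-vanishing transitions) lets us work as if each $X^{(\e)}_n$ were irreducible. Then the Markov chain tree theorem gives, for every $\e$, $\mu_\e(x)=N_x(\e)/\sum_y N_y(\e)$, where $N_x(\e)$ is the sum over spanning in-arborescences $T$ rooted at $x$ of $\prod_{(a,b)\in T} p_\e(a,b)$. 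Arborescences are loop-free, so each such product lies in the multiplicative closure of the functions $\e\mapsto p_\e(x,y)$ with $x\ne y$, and the hypothesis applies to all of them.

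The algebraic core is two observations about the $O$-preorder, which is total on that closure by assumption. (i) A finite sum of nonnegative, pairwise $O$-comparable functions is $\Theta$ of an $O$-greatest summand; hence $N_x\in\Theta(M_x)$ where $M_x:=\max_T \prod_{(a,b)\in T} p_\e(a,b)$, the maximum in the $O$-preorder over arborescences rooted at $x$ (attained by a single, $\e$-independent tree). (ii) For nonnegative, $O$-comparable $f,g$ one has $\liminf f/g>0 \iff g\in O(f)$, and otherwise $\limsup f/g=0$. Writing $M^\ast:=\max_y M_y$ and combining, $\mu_\e(x)\in\Theta(M_x/M^\ast)$, so $x$ is stable iff $M_x$ is $O$-maximal among the $M_y$, and every other state is stochastically fully vanishing. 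Since the $M_y$ form a finite totally preordered set, $O$-maximal elements exist --- this gives the existence of stable states (indeed the full stable/fully-vanishing dichotomy). This is exactly where totality is used, and the abstract's converse construction shows it cannot be removed.

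It then remains to compute $\{x : M_x\ \text{is}\ O\text{-maximal}\}$. This is a maximum-in-arborescence problem, for all roots simultaneously, over the max--times semiring of $O$-classes; I would solve it by the Freidlin--Wentzell cycle hierarchy, a semiring analogue of the Chu--Liu/Edmonds optimum-branching algorithm: repeatedly take the subgraph of currently $O$-dominant out-edges, contract each of its strongly connected components to a super-state, and re-weight that super-state's out-edges by the cheapest ``escape products'', which requires an all-pairs best-path computation inside each contracted block. Every round strictly decreases the number of super-states, so there are $O(n)$ rounds; organizing the interior best-path work as in Young's analysis keeps the total at $O(n^3)$, and all $O$-comparisons needed are between the polynomially many products the algorithm builds, hence available by hypothesis.

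The main obstacle I anticipate is making the ``max--times'' abstraction genuinely carry the Young-style hierarchy in full generality: proving that contraction and re-weighting preserve the set of $O$-maximal roots --- the semiring-branching correctness statement, i.e.\ the ``rephrased Markov chain theory'' the abstract alludes to --- and, for the complexity claim, that the re-weighting step does not cost an extra factor of $n$, so the bound stays cubic rather than quartic. Getting the reducible-to-irreducible reduction exactly right, and checking that observation (i) survives the presence of products that are identically $0$, are secondary but real nuisances.
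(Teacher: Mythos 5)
Your route is not the paper's: you are reconstructing the generalized Freidlin--Wentzell/Young argument via the Markov chain tree theorem, whereas the paper's proof of this theorem goes through a recursive triple of transformations (outgoing scaling, essential collapse, transient deletion), abstracted into an ``ordered-division semiring'' and implemented as the \textsf{Hub} algorithm (Theorems~\ref{thm:stable-states} and~\ref{thm:vanish-time}). The paper does develop your route in Section~5.2 (Lemma~\ref{lem:gen-Young} is exactly your criterion ``$x$ is stable iff $\beta^x$ is $\precsim$-maximal'', and Observation~\ref{obs:gen-pos-ss} is your existence argument), but only for irreducible perturbations whose off-diagonal maps are identically zero or strictly positive --- and it explicitly abandons this route for the main theorem.

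The gap is in your very first step. Under the hypothesis, a map $p(x,y)$ may be non-zero yet have zeros (comparability only forces the supports to be nested near $0$), so ``the digraph of non-vanishing transitions'' is not a well-defined $\e$-independent object and your reduction to the irreducible case does not go through. The paper's fix along your route (Observation~\ref{obs:gen-ss}) partitions $I$ into up to $n^2$ pieces on which every map is identically zero or positive and runs the tree-theorem argument on each piece; this yields decidability in $O(n^5)$, not $O(n^3)$, and --- as the paper states explicitly --- existence of stable states ``is no longer guaranteed by the same proof'', since you would need a state that is simultaneously $\precsim$-maximal on every piece. So both halves of the theorem (existence in full generality, and the cubic bound) are exactly the parts your proposal does not deliver; the issue you flag as a ``secondary nuisance'' is in fact the reason the paper changes approach. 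A smaller error: your claim~(ii) that non-comparability in one direction forces $\limsup f/g=0$ is false (take $f=\e\cdot\tfrac{1+\cos(\e^{-1})}{2}$, $g=\e$), so the stable/fully-vanishing dichotomy you announce fails under the theorem's hypothesis alone (the paper needs the stronger Assumption~\ref{Assum3} and Corollary~\ref{cor:strong-assum-stable} for that); this does not affect the stated theorem, but it is an over-claim.
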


Note that by finiteness of the state space it is easy to prove that every perturbation has a state that is not fully vanishing.

\subsection{Related works and comparisons}

In 1990 Foster and Young~\cite{FY90} defined the stochastically stable states of a general (continuous) evolutionary process, as an alternative to the evolutionary stable strategies~\cite{MP73}. Stochastically stable states were soon adapted by Kandori, Mailath, and Rob~\cite{KMR93} for evolutionary game theory with $2\times 2$ games. Then Young~\cite[Theorem 4]{Young93} proved "a finite version of results obtained by Freidlin and Wentzel" in~\cite{FW98}. Namely, he characterized the stochastically stable states if the perturbation satisfies the following assumptions: 1) the perturbed matrices $P^{\e}$ are aperiodic and irreducible; 2) the $P^\e$ converge to the unperturbed matrix $P^0$ when $\e$ approaches zero; 3) every transition probability is a function of $\e$ that is equivalent to $c \cdot \e^{\alpha}$ for some non-negative real numbers $c$ and $\alpha$. The main tool in Young's proof was proved by Kohler and Vollmerhaus~\cite{KV80} and is the special case for irreducible chains of the Markov chain tree theorem (see~\cite{LR83} or~\cite{FW98}). Young's characterization involves minimum directed spanning trees, which can be computed in $O(n^2)$~\cite{GGST86} for graphs with $n$ vertices. Since there are at most $n$ roots for directed spanning trees in a graph with $n$ vertices, Young can compute the stable states in $O(n^3)$.

In 2000, Ellison~\cite{Ellison00} characterized the stable states \textit{via} the alternative notion of the radius of a basin of attraction. The major drawback of his characterization compared to Young's is that it is "not universally applicable"~\cite{Ellison00}; the advantages are that it provides "a bound on the convergence rate as well as a long-run limit" and "intuition for why the long-run stochastically stable set of a model is what it is". 
In 2005, Wicks and Greenwald~\cite{WG05} designed an algorithm to express the exact values of the limit stationary distribution of a perturbation, which, as a byproduct, also computes the set of the stable states. Like~\cite{Young93} they consider perturbations that are related to the functions $\e\mapsto \e^{\alpha}$, but they only require that the functions converge exponentially fast. Also, instead of requiring that the $P^\e$ be irreducible for $\e > 0$, they only require that they have exactly one essential class. They do not analyze the complexity of their algorithm but it might be polynomial time. We improve upon~\cite{Young93},~\cite{Ellison00}, and~\cite{WG05} in several ways.
\begin{enumerate}
\item\label{improve1} The perturbation maps in the literature relate to the maps $\e\mapsto \e^{\alpha}$. Their specific form and their continuity, especially at $0$, are used in the existing proofs. Theorem~\ref{thm:teaser} dramatically relaxes this assumption. Continuity, even at $0$, is irrelevant, which allows for aggressive, \textit{i.e.}, non-continuous perturbations. We show that our assumption is (almost) unavoidable.

\item The perturbations in the literature are irreducible (but \cite{WG05} slightly weakened this assumption). It is general enough for perturbations relating to the maps $\e \mapsto \e^\alpha$, since it suffices to process each sink (aka bottom) irreducible component independently, and gather the results. Although this trick does not work for general perturbation maps, Theorem~\ref{thm:teaser} manages not to assume irreducibility.


\item The perturbation is abstracted into a weighted graph and shrunk by combining recursively a shortest-path algorithm (w.r.t. some tropical-like semiring) and a strongly-connected-component algorithm. Using tropical-like algebra to abstract over Markov chains has already been done before, but not to solve the stable state problem. (\cite{GKMS15} did it to prove an algebraic version of the Markov chain tree theorem.)

\item Our algorithm computes the stable states in $O(n^3)$, as in~\cite{Young93}, which is the best known complexity. In addition, the computation itself is a summary of the asymptotic behavior of the perturbation: it says at which time scales the vanishing states vanish, and the intermediate graph obtained at each recursive stage of the algorithm accounts for the metastable dynamics of the perturbation at this vanishing time scale.
\end{enumerate}

\noindent Section~\ref{sect:nota} sets some notations; Section~\ref{sect:tga} analyses which assumptions are relevant for the existence of stable states; Section~\ref{sect:pp-ess} proves the existential part of Theorem~\ref{thm:teaser}, \textit{i.e.} it develops the probabilistic machinery to prove the existence of stable states; hinging on this, Section~\ref{sect:aqa} proves the algorithmic part of Theorem~\ref{thm:teaser}, \textit{i.e.} it abstracts the relevant objects using a new algebraic structure, presents the algorithm, and proves its correctness and complexity; Section~\ref{sect:disc} discusses two important special cases and an induction proof principle related to the termination of our algorithm.

\subsection{Notations}\label{sect:nota}

\begin{itemize}
\item The set $\mathbb{N}$ of the natural numbers contains $0$. For a set $S$ and $n\in\mathbb{N}$, let $S^n$ be the words $\gamma$ over $S$ of length $|\gamma| = n$. Let $S^* := \cup_{n\in\mathbb{N}}S^n$ be the finite words over $S$. The set-theoretical notation $\cup E := \cup_{x\in E}x$ is used in some occasions.
\item Let $(X_n)_{n\in\N}$ be a Markov chain with state space $S$. For all $A\subseteq S$ let $\tau_A := \inf \{n \geq 0: X_n \in A\}$ ($\tau^+_A := \inf \{n > 0: X_n \in A\}$)
be the first time (first positive time) that the chain hits a state inside $A$. Usually $\tau_{\{x\}}$ and $\tau^+_{\{x\}}$ are written $\tau_x$ and $\tau^+_x$, respectively.
\item Given a Markov chain $(X_n)_{n\in\N}$, the corresponding matrix representation, law of the chain when started at state $x$, expectation when started at state $x$, and possible stationary distributions are respectively denoted $p$, $\p^x$, $\mathbb{E}^x$, and $\mu$. When considering other Markov chains $(\tilde{X}_n)_{n\in\N}$ or $(\widehat{X}_n)_{n\in\N}$, the derived notions are denoted with tilde or circumflex, as in $\tilde{p}$ or $\widehat{\mu}$.
\item A perturbation $((X_n^{(\e)})_{n\in\N})_{\e\in I}$ will often be denoted $X$ for short, and when it is clear from the context that we refer to a perturbation, $p$ will denote the function $(\e,x,y)\mapsto p_\e(x,y)$ (instead of $(x,y)\mapsto p(x,y)$), and $p(x,y)$ will denote $\e\mapsto p_\e(x,y)$ (instead of a mere real number). The other derived notions are treated likewise.
\item The probability of a path is defined inductively by $p(xy) := p(x,y)$ and $p(xy\gamma) := p(x,y)p(y\gamma)$ for all $x,y\in S$ and $\gamma\in S\times S^*$.
\item Given $x$, $y$, and a set $A$, a simple $A$-path from $x$ to $y$ is a repetition-free (unless $x = y$) word $\gamma$ starting with $x$ and ending with $y$, and using beside $x$ and $y$ only elements in $A$. Formally, $\Gamma_A(x,y) := \{\gamma\in \{x\}\times A^* \times\{y\} \,|\, (1 \leq i < j \leq |\gamma| \wedge \gamma_i = \gamma_j) \Rightarrow (i = 1 \wedge j = |\gamma|)\}.$
\end{itemize}

\subsection{Towards general assumptions}\label{sect:tga}

\noindent A state $x$ of a perturbation is stable if there exists a related family $(\mu_\e)_{\e\in I}$ of stationary distributions such that $\mu(x) = O(1)$, but even continuous perturbations that converge when $\e$ approaches $0$ may fail to have stable states. For instance let $S := \{x,y\}$ and for all $\e \in]0,1]$ let $p_\e(x,y) := \e^2$ and $p_\e(y,x) := \e^{2+\cos(\e^{-1})}$ as in Figure~\ref{fig:no-stable1}, where the self-loops are omitted. In the unique stationary distribution $x$ has a weight $\mu_\e(x) = (1+\e^{-\cos(\e^{-1})})^{-1}$. Since $\mu_{(2n\pi)^{-1}}(x) = \frac{2n\pi}{1 + 2n\pi} \to_{n \to \infty} 1$ and $\mu_{(2(n+1)\pi)^{-1}}(x) = \frac{1}{1 + 2(n+1)\pi} \to_{n \to \infty} 0$, neither $x$ nor $y$ is stable.

As mentioned above, the perturbations in the literature are related to the functions $\e \mapsto \e^\alpha$ with $\alpha \geq 0$, which rules out the example from Figure~\ref{fig:no-stable1} and implies the existence of a stable state~\cite{Young93}. Here, however, we want to assume as little as possible about the perturbations, while still guaranteeing the existence of stable states. Towards it let us first rephrase the big $O$ notation as a binary relation. It is well-known that big $O$ enjoys various algebraic properties. The ones we need are mentioned in the appendix.

\begin{figure}
\centering
\begin{subfigure}[b]{0.15\textwidth}
\begin{tikzpicture}[shorten >=1pt,node distance=1.6cm, auto]
  \node[state] (x) {$x$};
  \node[state] (y) [right of = x] {$y$};

\path[->] (x) edge [bend right] node [below] {$\e^2$} (y)		
		(y) edge [bend right] node [above] {$\e^{2+\cos(\e^{-1})}$} (x) ;
 \end{tikzpicture}
 \caption{}
 \label{fig:no-stable1}
\end{subfigure}
\begin{subfigure}[b]{0.15\textwidth}
\begin{tikzpicture}[shorten >=1pt,node distance=1.5cm, auto]
  \node[state] (y) {$y$};
  \node[state] (z) [right of = y] {$z$};
 \node[state] (x) [above of = z]{$x$};

\path[->] (x) edge [bend right] node [above left] {$\e^6$} (y)
		(z) edge node [right] {$\e^4$} (x)
		(z) edge [bend right] node [above] {$1 - \e^4$} (y)		
		(y) edge [bend right ] node [below] {$\e^{2+\cos(\e^{-1})}$} (z) ;
 \end{tikzpicture}
  \caption{}
  \label{fig:no-stable2}
 \end{subfigure}
\begin{subfigure}[b]{0.1\textwidth}
\begin{tikzpicture}[shorten >=1pt,node distance=1.5cm, auto]
  \node[state] (x) {$x$};
  \node[state] (y) [right of = x] {$y$};
    \node[state] (z) [above of = y] {$z$};

\path[->] (x) edge [bend left] node [above] {$\frac{1+\cos (\e^{-1})}{2}$} (y)		
		(y) edge [bend left] node [below] {$\e$} (x)
		(z) edge[bend left] node {$\frac{1}{2}$} (y);
 \end{tikzpicture}
 \caption{}
 \label{fig:no-stable3}
\end{subfigure}

\begin{subfigure}[b]{0.45\textwidth}
\begin{tikzpicture}[shorten >=1pt,node distance=2.4cm, auto]
  \node[state] (x1) {$x_1$};
  \node[state] (x2) [right of = x1] {$x_2$};
  \node[state] (xn1) [right of = x2] {$x_{n-1}$};
  \node[state] (xn) [right of = xn1] {$x_n$};
  \node[state] (y1) [below of = xn] {$y_1$};
  \node[state] (y2) [left of = y1] {$y_2$};
  \node[state] (ym1) [left of = y2] {$y_{m-1}$};
  \node[state] (ym) [left of = ym1] {$y_m$};    

\path[->] (x1) edge node {$f_1$} (x2)		
		 (xn1) edge node {$f_{n-1}$} (xn)
		 (xn) edge node {$f_n$} (y1)		 
		 (y1) edge node {$g_1$} (y2)
		 (ym1) edge node {$g_{m-1}$} (ym)		 
		 (ym) edge node {$g_m$} (x1)
		 (x1) edge [loop above] node {$1-f_1$} ()
		 (x2) edge [bend left = 20] node {$1-f_2$} (x1)
		 (xn1) edge [bend left = 50] node [above] {$1-f_{n-1}$} (x1)
		 (xn) edge [bend right] node {$1-f_n$} (x1)
		 (y1) edge [loop below] node {$1-g_1$} ()		 
		 (y2) edge [bend left = 20] node {$1-g_2$} (y1)
		 (ym1) edge [bend left = 50] node [below] {$1-g_{m-1}$} (y1)
		 (ym) edge [bend right] node {$1-g_m$} (y1);
\path[dashed] (x2) edge node {} (xn1)
		 (y2) edge node {} (ym1);		 
 \end{tikzpicture}
 \caption{}
 \label{fig:no-stable4}
\end{subfigure}
 \caption{Perturbations without stable states}
 \label{fig:no-stable}
 \end{figure}
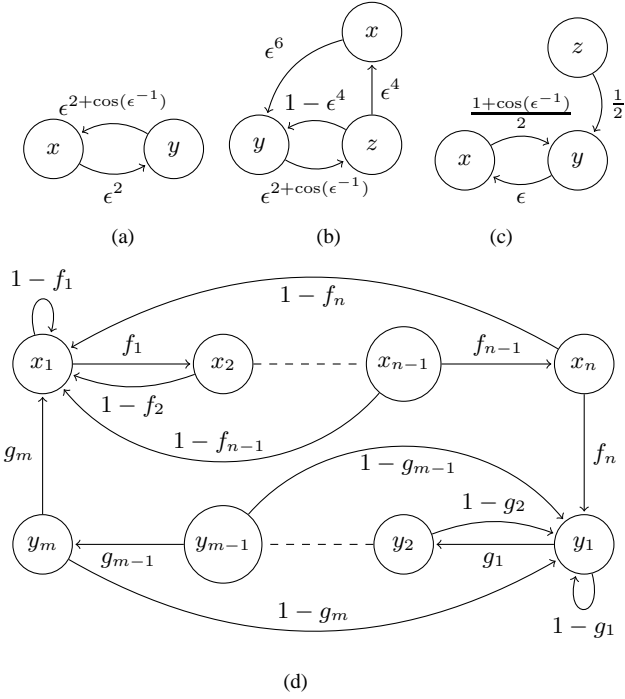

  \begin{figure}
\centering
\begin{subfigure}[b]{0.15\textwidth}
\begin{tikzpicture}[shorten >=1pt,node distance=1.2cm, auto]
  \node[state] (x) {$x$};
  \node[state] (y) [right of = x] {$y$};
    \node[state] (z) [above of = y] {$z$};

\path[->] (x) edge [bend left] node [below] {$\e$} (y)		
		(y) edge [bend left] node [below] {$\e^2$} (x)
		(z) edge node [left]{$\frac{1-\e}{3}$} (y);
 \end{tikzpicture}
  \caption{}
  \label{fig:stable1}
 \end{subfigure}
\begin{subfigure}[b]{0.15\textwidth}
\begin{tikzpicture}[shorten >=1pt,node distance=1.8cm, auto]
  \node[state] (x) {$x$};
  \node[state] (y) [right of = x] {$y$};

\path[->] (x) edge [bend right] node [below] {$\e(2-\cos(\e^{-1}))$} (y)		
		(y) edge [bend right] node [above] {$\e(2+\cos(\e^{-1}))$} (x) ;
 \end{tikzpicture}
 \caption{}
 \label{fig:stable2}
\end{subfigure}
\begin{subfigure}[b]{0.15\textwidth}
\begin{tikzpicture}[shorten >=1pt,node distance=1.5cm, auto]
  \node[state] (x) {$x$};
  \node[state] (y) [right of = x] {$y$};
  \node[state] (z) [above of = y] {$z$};

\path[->] (x) edge [bend left] node [above] {$\frac{1+\cos (\e^{-1})}{2}$} (y)		
		(y) edge [bend left] node [below] {$\frac{1+\cos (\e^{-1})}{2}$} (x)
		(z) edge [bend left]node {$1$} (y);
 \end{tikzpicture}
 \caption{}
 \label{fig:stable3}
\end{subfigure}
 \caption{Perturbations with stable states}
 \label{fig:stable}
 \end{figure}

\begin{definition}[Order]\label{def:cong}
For $f,g: I\to[0,1]$, let us write $f \precsim g$ if there exist positive $b$ and $\e$ such that $f(\e') \leq b \cdot g(\e')$ for all $\e' < \e$; let $f \cong g$ stand for $f \precsim g \,\wedge\, g \precsim f$.
\end{definition}

\noindent Requiring that every two transition probability maps $f$ and $g$ occurring in the perturbation satisfy $f \precsim g$ or $g \precsim f$ rules out the example from Figure~\ref{fig:no-stable1}, but not the one from Figure~\ref{fig:no-stable2}. There $\mu_\e(z) \leq \mu_\e(x) = \frac{\e^{\cos(\e^{-1})}}{1 + \e^{\cos(\e^{-1})} (1 + \e^2)}$ and $\mu_\e(y) = \frac{1}{1 + \e^{\cos(\e^{-1})} (1 + \e^2)}$. So $\mu_\e(z) \to_{\e\to 0}0$ and $\mu_{2n\pi}(y) \to_{n\to \infty}0$ and $\mu_{2(n+1)\pi}(x) \to_{n\to \infty}0$, no state is stable. Informally, $z$ is not stable because it gives everything but receives at most $\e$; neither $x$ nor $y$ is stable since their interaction resembles Figure~\ref{fig:no-stable1} due to $\e^6$ and $\e^4 \cdot \e^{2+\cos(\e^{-1})}$. This remark is turned into a general Observation~\ref{obs:unstable-constr} below.

\begin{observation}\label{obs:unstable-constr}
For $1 \leq i \leq n$ and $1 \leq j \leq m$ let $f_i,g_j : I \to [0,1]$ be such that $\prod_i f_i$ and $\prod_j g_j$ are not $\precsim$-comparable. Then there exists a perturbation without stable states that is built only with the $f_1,\dots,f_n,g_1,\dots,g_m$ and the $1-f_1,\dots,1-f_n,1-g_1,\dots,1-g_m$. See Figure~\ref{fig:no-stable4}.
\end{observation}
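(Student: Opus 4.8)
The idea is to generalize the analysis sketched for Figure~\ref{fig:no-stable2}: we build the perturbation drawn in Figure~\ref{fig:no-stable4}, consisting of two "arms" $x_1,\dots,x_n$ and $y_1,\dots,y_m$ arranged so that the only way to leave arm $x$ is to traverse it completely with probabilities $f_1,\dots,f_n$ and land on $y_1$, and symmetrically for arm $y$ with $g_1,\dots,g_m$ back to $x_1$; at every other step the chain resets to the head of its arm ($x_1$ or $y_1$). I will first argue that, since $f_i, g_j$ are transition probabilities, we may assume without loss of generality $f_i, g_j \le 1/2$ (otherwise rescale, or simply note the construction only needs the products; I will handle the technicality by replacing $f_i$ with $f_i/2$ etc., which does not affect $\precsim$-comparability of the products up to a constant). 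This makes all the displayed transition probabilities legitimate (each row sums to $1$, and $1-f_i, 1-g_j \ge 0$).

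\smallskip

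Next I compute the stationary distribution explicitly. The chain is a finite irreducible Markov chain for each $\e\in I$, so there is a unique stationary distribution $\mu_\e$. The key structural observation is that from $x_1$ the chain reaches $x_n$ and then $y_1$ only with probability $\prod_{i=1}^n f_i =: F(\e)$ per "attempt", and similarly from $y_1$ it reaches $x_1$ with probability $\prod_{j=1}^m g_j =: G(\e)$; all intermediate states $x_2,\dots,x_n$ (resp. $y_2,\dots,y_m$) are transient-like "corridors" with exponentially small occupation. Summing the balance equations — or better, using the regenerative/renewal structure at $x_1$ and $y_1$ — one gets that the total mass on the $x$-arm and the total mass on the $y$-arm are in ratio $G(\e) : F(\e)$ (the flow from the $x$-arm into $y_1$ must balance the flow from the $y$-arm into $x_1$, and these flows are $\mu_\e(x_n)f_n$ and $\mu_\e(y_m)g_m$; unwinding the corridor balance equations $\mu_\e(x_{i+1}) = \mu_\e(x_i)f_i$ gives $\mu_\e(x_n)f_n = \mu_\e(x_1)F(\e)$ and likewise for $y$). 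Hence
\[
\frac{\mu_\e(\{x_1,\dots,x_n\})}{\mu_\e(\{y_1,\dots,y_m\})} \;\cong\; \frac{\mu_\e(x_1)}{\mu_\e(y_1)} \;=\; \frac{G(\e)}{F(\e)},
\]
where the first $\cong$ holds because each arm's total mass is within a bounded factor of its head's mass (the geometric-like sums $\sum_i \prod_{k\le i} f_k$ are dominated by their first term $1$ since $f_k \le 1/2$).

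\smallskip

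Finally I conclude. Since $F = \prod_i f_i$ and $G = \prod_j g_j$ are not $\precsim$-comparable, neither $F \precsim G$ nor $G \precsim F$ holds; equivalently, $\limsup_{\e\to 0} F(\e)/G(\e) = \infty$ and $\limsup_{\e\to 0} G(\e)/F(\e) = \infty$. Pick a sequence $\e_k \to 0$ along which $G(\e_k)/F(\e_k)\to\infty$: then the $y$-arm carries a vanishing fraction of the total mass, so $\mu_{\e_k}(y_j)\to 0$ for every $j$, hence no $y_j$ is stable. Pick another sequence $\e'_k \to 0$ along which $F(\e'_k)/G(\e'_k)\to\infty$: then $\mu_{\e'_k}(x_i)\to 0$ for every $i$, so no $x_i$ is stable. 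Since this exhausts the state space, the perturbation has no stable state. (To argue about an arbitrary choice of stationary distributions, as required by the definition of stable: here each $\mu_\e$ is unique, so "there exists a family" coincides with "the family"; this is where irreducibility of the construction is used.)

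\smallskip

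The main obstacle I anticipate is \textbf{the bookkeeping in the stationary-distribution computation}: writing down the balance equations for the corridor states and verifying that the arm totals are $\cong$ to the head masses (i.e. that the corridors really are negligible) requires the normalization $f_i,g_j \le 1/2$ and a small geometric-sum estimate, and one must be careful that the $\precsim$-incomparability of the \emph{products} $F,G$ — rather than of individual maps — is exactly what drives the argument, so the reduction must be phrased at the level of $F$ and $G$ throughout. Everything else (legitimacy of the matrix, irreducibility, extracting the two subsequences from $\limsup = \infty$) is routine.
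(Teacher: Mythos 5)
Your construction is the paper's (Figure~\ref{fig:no-stable4}), and your renewal/flow-balance computation is a nice, more elementary substitute for the paper's route through the Markov chain tree theorem (Lemma~\ref{lem:gen-Young}): the identity $\mu_\e(x_1)\prod_i f_i(\e)=\mu_\e(y_1)\prod_j g_j(\e)$, together with the observation that each arm's mass is within a factor $\max(n,m)$ of its head's mass, does settle the statement \emph{whenever the chain is irreducible}. (Two small points there: the normalization $f_i\mapsto f_i/2$ is both unnecessary --- the rows already sum to $1$ and the corridor estimate only needs $\prod_{k\le i}f_k\le 1$ --- and not permitted, since the statement requires the perturbation to be built from the given maps themselves.)

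The genuine gap is the sentence ``The chain is a finite irreducible Markov chain for each $\e\in I$.'' The maps $f_i,g_j:I\to[0,1]$ may vanish, and the chain at $\e$ is irreducible only when both $F(\e)=\prod_i f_i(\e)$ and $G(\e)=\prod_j g_j(\e)$ are positive. Incomparability under $\precsim$ can be driven entirely by the supports: e.g.\ $F=\e\cdot 1_A$ and $G=\e\cdot 1_{I\setminus A}$ with $0$ a limit point of both $A$ and $I\setminus A$ are incomparable, yet the chain is reducible for \emph{every} $\e$, the stationary distribution is not unique, the ratio $G(\e)/F(\e)$ is undefined, and your two-subsequence extraction from $\limsup F/G=\infty$ has nothing to bite on. Since stability quantifies existentially over families of stationary distributions, reducibility genuinely matters: you must rule out that some clever selection of stationary distributions keeps a state's weight bounded away from $0$. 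This degenerate regime is precisely where the paper spends the second half of its proof, arguing by induction on the number of maps (so that shorter products are WLOG comparable), restricting $I$ to the support $J$ of the smallest-support map, and gluing the conclusions on $J$ and $I\setminus J$ via Lemma~\ref{lem:cong}.\ref{lem:cong8} and Lemma~\ref{lem:cong}.\ref{lem:cong7}. Your argument needs an analogous case analysis (or at least an explicit treatment of the $\e$'s where $F(\e)=0$ or $G(\e)=0$, identifying the closed classes and showing every stationary distribution still kills the right arm) before it is complete.
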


\noindent Observation~\ref{obs:unstable-constr} motivates the following "unavoidable" assumption.

\begin{assumption}\label{Assum1}
The multiplicative closure of the maps $\e\mapsto p_\e(x,y)$ with $x \neq y$ is totally preordered by $\precsim$.
\end{assumption}

\noindent For example, the classical maps $\e\mapsto c\cdot \e^{\alpha}$ with $c > 0$ and $\alpha\in\mathbb{R}$ constitute a multiplicative group totally preordered by $\precsim$. One reason why we can afford such a weak Assumption~\ref {Assum1} is that we are not interested in the exact weights of some putative limit stationary distribution, but only whether the weights are bounded away from zero.

Let us show the significance of Assumption~\ref {Assum1}, which is satisfied by the perturbations in Figure~\ref{fig:stable} and \ref{fig:trans-del5}: Young's result shows that $y$ is the unique stable state of the perturbation in Figure~\ref{fig:stable1}, but it cannot say anything about Figures~\ref{fig:stable2}, \ref{fig:stable3}, and \ref{fig:trans-del5}: Figure~\ref{fig:stable2} is not regular, \textit{i.e.}, $\frac{2+\cos (\e^{-1})}{2-\cos (\e^{-1})}$ does not converge, and neither do the weights $\mu_\e(x)$ and $\mu_\e(y)$, but it is possible to show that both limits inferior are $1/4$ nonetheless, so both $x$ and $y$ are stable; the transition probabilities in Figure~\ref{fig:stable3} do not converge, and $\frac{1 + \cos(\e^{-1})}{2}$ and $1-\frac{1 + \cos(\e^{-1})}{2}$ are not even comparable, but it is easy to see that $\mu_\e(x)=\mu_\e(y) =\frac{1}{2}$; and in Figure~\ref{fig:trans-del5} $x$ is the only stable state since its weight oscillates between $\frac{1}{2}$ and $1$. Note that Assumption~\ref {Assum1} rules out the perturbations in Figure~\ref{fig:no-stable}, which have no stable state.

\section{Existence of stable states}\label{sect:pp-ess}

\noindent This section presents three transformations that simplify perturbations while retaining the relevant information about the stable states. Two of them are defined \textit{via} the dynamics of the original perturbation. The relevance of these two transformations relies on the close relation between the stationary distributions and the dynamics of Markov chains. Lemma~\ref{lem:hsr} below pinpoints this relation.

\begin{lemma}\label{lem:hsr}
A distribution $\mu$ of a finite Markov chain is stationary iff its support involves only essential states and for all states $x$ and $y$ we have $\mu(x)\p^x(\tau^+_y < \tau^+_x) = \mu(y)\p^y(\tau^+_x < \tau^+_y)$.
\end{lemma}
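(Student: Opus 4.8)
\noindent\textit{Proof plan.} The plan is to show that each of the two stated conditions is a consequence of stationarity, and conversely that the two together force it. For the forward direction, assume $\mu p=\mu$. That $\mu$ charges only essential (recurrent) states is the standard fact that a stationary distribution puts no mass on transient states: with $T$ the set of transient states one has $\mu(T)=\sum_s\mu(s)\,\p^s(X_n\in T)$ for every $n$, and letting $n\to\infty$ kills every summand --- for recurrent $s$ because $T$ is unreachable from $s$, for transient $s$ because transient states are visited finitely often. For the detailed-balance-type identity, fix $x\neq y$ and let $h$ be the function on the state space with $h(x)=0$, $h(y)=1$, and $h(w)=\p^w(\tau_y<\tau_x)$ for $w\notin\{x,y\}$. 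A one-step analysis shows (i) $h$ is harmonic off $\{x,y\}$, i.e.\ $h(a)=\sum_w p(a,w)h(w)$ whenever $a\notin\{x,y\}$; (ii) $\sum_w p(x,w)h(w)=\p^x(\tau^+_y<\tau^+_x)$; and (iii) $\sum_w p(y,w)\bigl(1-h(w)\bigr)=\p^y(\tau^+_x\le\tau^+_y)$. Then stationarity gives $\sum_a\mu(a)\sum_w p(a,w)h(w)=\sum_w h(w)\sum_a\mu(a)p(a,w)=\sum_w\mu(w)h(w)$, hence $\sum_a\mu(a)\bigl(\sum_w p(a,w)h(w)-h(a)\bigr)=0$; by (i) only the $a\in\{x,y\}$ terms survive, and substituting $h(x)=0$, $h(y)=1$ and using (ii), (iii) this reads precisely $\mu(x)\p^x(\tau^+_y<\tau^+_x)=\mu(y)\p^y(\tau^+_x\le\tau^+_y)$. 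Finally $\p^y(\tau^+_x\le\tau^+_y)$ beats $\p^y(\tau^+_x<\tau^+_y)$ only by $\p^y(\tau^+_x=\tau^+_y=\infty)$ (a finite tie is impossible as $x\neq y$), and this contributes nothing after multiplication by $\mu(y)$: if $\mu(y)>0$ then $y$ is recurrent, so $\tau^+_y<\infty$ almost surely. The case $x=y$ is trivial.

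\noindent For the converse, let $\mu$ be a distribution for which the identity holds for all $x,y$. The support condition comes for free (so it is really redundant in the statement): if $x$ is inessential, pick $y$ reachable from $x$ but not conversely; a shortest $x\to y$ path repeats no vertex, so $\p^x(\tau^+_y<\tau^+_x)>0$, whereas $\p^y(\tau^+_x<\tau^+_y)=0$ since $x$ is unreachable from $y$, and the identity forces $\mu(x)=0$. Hence $\mu$ is carried by the disjoint union of the essential classes, each of which is closed and irreducible. Fix such a class $C$ and a state $z\in C$. As $C$ is closed, hitting probabilities among states of $C$ agree with those of the sub-chain on $C$; as that sub-chain is finite and irreducible, $\p^x(\tau^+_z<\tau^+_x)>0$ for each $x\in C\setminus\{z\}$ (same shortest-path argument). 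The identity for the pair $(z,x)$ then yields $\mu(x)=\mu(z)\,\p^z(\tau^+_x<\tau^+_z)\,/\,\p^x(\tau^+_z<\tau^+_x)$. The unique stationary distribution $\mu_C$ of the sub-chain on $C$ satisfies the very same formula --- apply the forward direction inside $C$, where no $\le$-versus-$<$ term arises --- so $\mu$ restricted to $C$ equals $(\mu(z)/\mu_C(z))\,\mu_C$. Extended by $0$ this restriction is stationary for the ambient chain because $C$ is closed, and summing over the classes gives that $\mu=\sum_C\mu|_C$ is stationary.

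\noindent The forward direction is a short computation once the test function $h$ is in hand; its one delicate point is the $\le$-versus-$<$ discrepancy, settled by recalling that stationary mass avoids transient states. The substance is the converse: one must see that the whole family of identities rigidly determines $\mu$ on each irreducible class up to a single scalar, which rests on the positivity of $\p^x(\tau^+_z<\tau^+_x)$ under irreducibility and on the stability of hitting probabilities under restriction to a closed class --- marshalling these cleanly is the main (if modest) obstacle. A more probabilistic route to the forward direction runs the stationary chain and counts, over a window of length $N$, the excursions that leave $x$ and reach $y$ before returning to $x$: there are about $N\mu(x)\p^x(\tau^+_y<\tau^+_x)$ of them, they are exactly the ``$xy$'' adjacencies in the subsequence of visits to $\{x,y\}$, and since such adjacencies alternate with ``$yx$'' ones the two counts agree in the limit, giving the identity --- intuitive, but it leans on an ergodic theorem that the harmonic-function argument avoids.
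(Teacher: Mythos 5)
Your proof is correct, and your converse is essentially the paper's: both restrict $\mu$ to the essential classes, use positivity of the return-before probabilities in a finite irreducible class to see that the identities pin down $\mu$ on each class as a scalar multiple of that class's unique stationary distribution, and then recombine. Where you genuinely diverge is the forward direction. The paper handles it by a case disjunction on transient/essential and, for two states in the same essential class, outsources the key ratio $\mu(x)/\mu(y)=\p^y(\tau^+_x<\tau^+_y)/\p^x(\tau^+_y<\tau^+_x)$ to an external result for irreducible chains (\cite{BL14}, Proposition 2.1) combined with the convex decomposition of $\mu$ over essential classes. You instead prove the identity directly for an arbitrary finite chain via the test function $h(w)=\p^w(\tau_y<\tau_x)$, harmonic off $\{x,y\}$, so that stationarity collapses $\sum_a\mu(a)(\sum_w p(a,w)h(w)-h(a))=0$ to exactly the two boundary terms; the only residue is the $\le$-versus-$<$ discrepancy at $y$, which you correctly kill using that stationary mass sits on recurrent states. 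This buys a self-contained argument that needs no class decomposition in the forward direction and no citation, at the cost of the (correctly handled) bookkeeping around $\p^y(\tau^+_x=\tau^+_y=\infty)$. Your observation that the support condition is redundant in the ``if'' direction (it follows from the identities via a loop-free path from an inessential $x$ to a $y$ that cannot return) is a nice bonus the paper does not make.
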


\noindent Lemma~\ref{lem:hsr} can already help us find the stable states of small examples such as in Figures~\ref{fig:no-stable} and \ref{fig:stable}. In Figure~\ref{fig:no-stable1} it says that $\mu_\e(x) \e^2 = \mu_\e(y) \e^{2 + \cos(\e^{-1})}$ so we find $\liminf \mu_\e(x) = \liminf \mu_\e(y) = 0$ without calculating the stationary distributions. In Figure~\ref{fig:stable2} it says that $\mu_\e(x)(2 - \cos(\e^{-1})) = \mu_\e(y)(2 + \cos(\e^{-1}))$, so $\mu_\e(x) \leq 3 \mu_\e(y)$ and $\frac{1}{4} \leq \mu_\e(y)$, and likewise for $x$.

Lemma~\ref{lem:state-del} below shows further connections between the stationary distributions and the dynamics of Markov chains. Its proof involves Lemma~\ref{lem:hsr}, and its irreducible case is used in Section~\ref{sect:td}.

\begin{lemma}\label{lem:state-del}
Let $p$ be a Markov chain with state space $S$, and let $\tilde{p}$ be defined over $\tilde{S} \subseteq S$ by $\tilde{p}(x,y) := \p^x(X_{\tau^+_{\tilde{S}}} = y)$.
\begin{enumerate}
\item\label{lem:state-del1} Then $\p^x(\tau_y < \tau^+_x) = \tilde{\p}^x(\tau_y < \tau^+_x)$ for all $x,y \in \tilde{S}$.
\item\label{lem:state-del2} Let $\mu$ ($\tilde{\mu}$) be a stationary distribution for $p$ ($\tilde{p}$). If $\tilde{S}$ are essential states, there exists  $\tilde{\mu}$ ($\mu$) a stationary distribution for $\tilde{p}$ ($p$) such that $\mu(x) = \tilde{\mu}(x) \cdot \sum_{y\in \tilde{S}}\mu(y)$ for all $x\in \tilde{S}$.
\end{enumerate}
\end{lemma}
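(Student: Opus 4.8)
The plan is to establish the two parts in order, leaning on Lemma~\ref{lem:hsr} for the characterization of stationarity in terms of hitting-probability balance.

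\textbf{Part~\ref{lem:state-del1} (invariance of return-before-loop probabilities).} The key observation is that $\tilde{p}$ is exactly the \emph{trace} (induced chain) of $p$ on $\tilde{S}$: each step of $\tilde{X}$ corresponds to a block of steps of $X$ that stays outside $\tilde{S}\setminus\{\text{current state}\}$ until it first returns to $\tilde{S}$. First I would make this precise by coupling: start both chains at $x\in\tilde{S}$, run $X$, and record the subsequence $(X_{T_k})_k$ of visits to $\tilde{S}$, where $T_0=0$ and $T_{k+1}=\inf\{n>T_k : X_n\in\tilde{S}\}$; by the strong Markov property this subsequence is distributed exactly as $\tilde{X}$ started at $x$. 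Now the event $\{\tau_y<\tau^+_x\}$ for $X$ — ``hit $y$ strictly before returning to $x$'' with $x,y\in\tilde{S}$ — depends only on the order in which elements of $\tilde{S}$ are visited (an excursion through $S\setminus\tilde{S}$ never touches $x$ or $y$), so it is measurable with respect to $(X_{T_k})_k$ and its probability is unchanged when we replace that subsequence by $\tilde{X}$. That gives $\p^x(\tau_y<\tau^+_x)=\tilde{\p}^x(\tau_y<\tau^+_x)$. One should handle the degenerate case $y=x$ separately (both sides are $1$) and note the statement is vacuous/trivial if some state in $\tilde S$ cannot reach $\tilde S$ again, but since we only compare the two chains on $\tilde S$ this is fine. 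A cleaner alternative, avoiding explicit coupling, is a first-step/last-exit decomposition: condition on the first step out of $x$ and sum over excursions, matching the series for $p$ and $\tilde{p}$ term by term; either route is routine.

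\textbf{Part~\ref{lem:state-del2} (transfer of stationary distributions).} Assume $\tilde{S}$ consists of essential states of $p$. For the direction ``$\mu\mapsto\tilde\mu$'': define $\tilde\mu(x):=\mu(x)/\sum_{z\in\tilde S}\mu(z)$ for $x\in\tilde S$ (the denominator is positive because an essential state has positive stationary weight, and essential states form a union of closed classes, so $\mu$ is supported inside $\tilde S$ — in fact $\sum_{z\in\tilde S}\mu(z)$ could even be $1$, but we keep the general normalization to match the claimed formula). To check $\tilde\mu$ is stationary for $\tilde p$, I would invoke Lemma~\ref{lem:hsr}: its support is inside $\tilde S$, which are essential for $\tilde p$ as well (closedness is inherited by the trace), and for $x,y\in\tilde S$ the balance condition $\tilde\mu(x)\tilde\p^x(\tau^+_y<\tau^+_x)=\tilde\mu(y)\tilde\p^y(\tau^+_x<\tau^+_y)$ follows from the same condition for $\mu$ under $p$, because Part~\ref{lem:state-del1} identifies $\tilde\p^x(\tau^+_y<\tau^+_x)=\tilde\p^x(\tau_y<\tau^+_x)$ with $\p^x(\tau_y<\tau^+_x)=\p^x(\tau^+_y<\tau^+_x)$ (here the $\tau$ vs.\ $\tau^+$ coincide since $x\neq y$), and the $\mu$-weights are proportional to the $\tilde\mu$-weights with the common factor $\sum_z\mu(z)$. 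For the converse direction ``$\tilde\mu\mapsto\mu$'': given a stationary $\tilde\mu$ for $\tilde p$, define $\mu$ on all of $S$ by the expected-occupation formula, $\mu(x):=\sum_{z\in\tilde S}\tilde\mu(z)\,\mathbb{E}^z\!\big[\sum_{n=0}^{\tau^+_{\tilde S}-1}\mathbf 1_{X_n=x}\big]$ (time spent at $x$ during one $\tilde p$-step, averaged over the $\tilde p$-stationary starting point), suitably normalized to a probability vector. Standard trace-chain theory — or a direct check using the cycle/excursion decomposition together with stationarity of $\tilde\mu$ — shows $\mu$ is stationary for $p$, and by construction its restriction to $\tilde S$ is proportional to $\tilde\mu$ with constant $\sum_{y\in\tilde S}\mu(y)$, giving the displayed identity $\mu(x)=\tilde\mu(x)\sum_{y\in\tilde S}\mu(y)$.

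\textbf{Main obstacle.} The genuinely delicate point is Part~\ref{lem:state-del1}, or rather using it correctly in Part~\ref{lem:state-del2}: one must be careful that the trace construction only behaves well because $\tilde S$ contains essential states (so excursions outside $\tilde S$ return almost surely and $\tilde p$ is an honest stochastic matrix), and that the hitting events being compared really are $\sigma(X_{T_k})$-measurable, which needs $x,y\in\tilde S$ and the ``strictly before return'' phrasing. I also expect a small amount of friction in the converse of Part~\ref{lem:state-del2}, where exhibiting \emph{some} $\mu$ (not a canonical one) stationary for $p$ with the right restriction requires either citing the occupation-measure formula or redoing its proof; I would keep that at the level of a reference to standard Markov chain theory plus a one-line consistency check via Lemma~\ref{lem:hsr}.
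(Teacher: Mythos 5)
Your Part~\ref{lem:state-del1} is essentially the paper's argument: the authors also match the two chains excursion by excursion, writing $\p^x(\tau^+_y<\min(\tau^+_x,\sigma_N))$ as a sum over words $z_1\dots z_k$ in $\tilde S\setminus\{x,y\}$ of $\tilde p(x,z_1)\tilde p(z_1\dots z_k)\tilde p(z_k,y)$ and letting $N\to\infty$ — this is exactly your ``first-step decomposition matching series term by term'' alternative, and your trace-chain coupling is an equivalent packaging. For Part~\ref{lem:state-del2} you diverge from the paper in a useful way. The paper reduces to the irreducible case (where the ratio identity $\mu(y)/\mu(x)=\tilde\mu(y)/\tilde\mu(x)$ from Lemma~\ref{lem:hsr} plus Part~\ref{lem:state-del1} pins down the unique $\tilde\mu$), and then handles reducible chains by decomposing both $\mu$ and $\tilde\mu$ into convex combinations of the extremal per-class stationary distributions and recombining with explicitly tuned coefficients $L_i=\beta_i\prod_{j\neq i}\mu_j(x_j)/\tilde\mu_j(x_j)$. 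You instead apply Lemma~\ref{lem:hsr} directly in the reducible case for the direction $\mu\mapsto\tilde\mu$, and use the excursion/occupation-measure formula for $\tilde\mu\mapsto\mu$; both are sound (the expected excursion lengths are finite because $\tilde S$ consists of essential states, so your $\nu$ is a finite invariant measure whose restriction to $\tilde S$ is proportional to $\tilde\mu$), and your converse is arguably cleaner than the paper's coefficient juggling, at the cost of importing a standard fact rather than staying self-contained.

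One genuine slip to fix in your forward direction: you assert that $\sum_{z\in\tilde S}\mu(z)>0$ ``because an essential state has positive stationary weight,'' and that $\mu$ is supported inside $\tilde S$. Neither holds for reducible chains: $\tilde S$ is only \emph{a subset of} the essential states, and a stationary $\mu$ may be concentrated on an essential class disjoint from $\tilde S$, making your normalization undefined. The repair is immediate — when $\sum_{z\in\tilde S}\mu(z)=0$ the identity $\mu(x)=\tilde\mu(x)\sum_{y\in\tilde S}\mu(y)$ reads $0=0$ and any stationary $\tilde\mu$ of $\tilde p$ witnesses it — but as written your justification is false and should be replaced by this case split.
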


\noindent The dynamics, \textit{i.e.}, terms like $\p^x(\tau^+_y < \tau^+_x)$ or $\p^x(X_{\tau^+} = y)$ are usually hard to compute, and so will be the two transformations that are defined \textit{via} the dynamics, but Lemma~\ref{lem:congp-congmu} below shows that approximating them is safe as far as the stable states are concerned.

\begin{lemma}\label{lem:congp-congmu}
Let $p$ and $\tilde{p}$ be perturbations with the same state space, such that $x\neq y \Rightarrow p (x,y)\cong \tilde{p}(x,y)$. For all stationary distribution maps $\mu$ for $p$, there exists $\tilde{\mu}$ for $\tilde{p}$ such that $\mu \cong \tilde{\mu}$.
\end{lemma}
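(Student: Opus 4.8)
The plan is to use the characterization of stationarity from Lemma~\ref{lem:hsr} as a "flow balance" condition, and to show that the balance equations for $p$ and for $\tilde p$ are, up to bounded multiplicative factors, the same linear system. Concretely, fix a family $(\mu_\e)_{\e\in I}$ of stationary distributions for $p$. By Lemma~\ref{lem:hsr}, for every pair of states $x\neq y$ we have $\mu_\e(x)\,\p_\e^x(\tau_y^+<\tau_x^+)=\mu_\e(y)\,\p_\e^y(\tau_x^+<\tau_y^+)$, and the support of $\mu_\e$ consists of essential states. The first step is to relate the escape/hitting probabilities $\p_\e^x(\tau_y^+<\tau_x^+)$ of the two chains. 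I would expand $\p_\e^x(\tau_y^+<\tau_x^+)$ as a sum over simple $A$-paths, where $A=S\setminus\{x,y\}$: it equals $\sum_{\gamma\in\Gamma_A(x,y)} p_\e(\gamma)$ (paths that start at $x$, reach $y$ before returning to $x$, decomposed by their loop-erased trace). Since each factor $p_\e(x_i,x_{i+1})$ along such a path satisfies $p_\e(x_i,x_{i+1})\cong\tilde p_\e(x_i,x_{i+1})$ when $x_i\neq x_{i+1}$, and the diagonal entries $p_\e(x,x)$, $\tilde p_\e(x,x)$ are both in $[0,1]$ — so bounded above by $1$ and appearing only finitely often in finitely many simple paths — one gets bounds of the form $c\cdot\tilde p_\e(\gamma)\le p_\e(\gamma)$ for a uniform constant; summing over the finitely many simple traces yields $\p_\e^x(\tau_y^+<\tau_x^+)\precsim\tilde\p_\e^x(\tau_y^+<\tau_x^+)$, and symmetrically the reverse, hence $\cong$. (One must be slightly careful: a self-loop factor like $p_\e(x,x)$ could be bounded away from $0$ in $p$ but not in $\tilde p$; but since we only need a two-sided bound and self-loops contribute factors $\le 1$, and the path-length is bounded by $|S|$, the estimate goes through in one direction using $p_\e(\gamma)\le \tilde p_\e(x,y)\cdot\prod(\text{off-diagonal factors, each }\cong)$ — actually the cleanest route is to drop the loop-erasure and instead use the standard identity $\p_\e^x(\tau_y^+<\tau_x^+)=\sum_{\gamma\in\Gamma_A(x,y)}p_\e(\gamma)$ only after noting that simple paths carry no repeated-vertex self-loops, so all factors are off-diagonal except possibly the trivial length-one situation.)

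Granting the claim $\p_\e^x(\tau_y^+<\tau_x^+)\cong\tilde\p_\e^x(\tau_y^+<\tau_x^+)$ for all $x\neq y$, the second step is to build $\tilde\mu$. The balance equations for $\tilde p$ read $\tilde\mu_\e(x)\,\tilde\p_\e^x(\tau_y^+<\tau_x^+)=\tilde\mu_\e(y)\,\tilde\p_\e^y(\tau_x^+<\tau_y^+)$. These determine, for each $\e$, the stationary distribution on each essential class up to normalization (finitely many essential classes, each giving a one-dimensional solution space by irreducibility within the class). I would simply \emph{define} $\tilde\mu_\e$ to be the (unique) stationary distribution of $\tilde p_\e$ whose support is the \emph{same} set of essential states as $\operatorname{supp}\mu_\e$ and which puts the same total mass $\sum_{y\in C}\mu_\e(y)$ on each essential class $C$ that $\mu_\e$ does — this is well-defined because stationary distributions decompose as convex combinations over essential classes. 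Then within a single essential class $C$, both $\mu_\e\restriction C$ (renormalized) and $\tilde\mu_\e\restriction C$ (renormalized) solve linear systems whose coefficients agree up to factors in $[b^{-1},b]$; since $C$ is finite and fixed, and the solution of such a system is, up to normalization, a ratio of (sums of) products of coefficients — e.g. by the Markov chain tree theorem / Kohler–Vollmerhaus applied within $C$, $\tilde\mu_\e(x)$ is proportional to a sum of spanning-tree weights $\prod \tilde\p(\cdot,\cdot)$ — replacing each coefficient by a $\cong$-equivalent one changes the result by at most a bounded factor. Hence $\mu_\e(x)\cong\tilde\mu_\e(x)$ for each $x$ in the support, and for $x$ outside the support both are $0$; this is exactly $\mu\cong\tilde\mu$.

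The main obstacle I anticipate is the bookkeeping in the first step, namely transferring $\cong$ from the individual transition maps to the hitting probabilities $\p^x(\tau_y^+<\tau_x^+)$ uniformly over $\e$. The subtlety is that $\cong$ is an asymptotic ($\e\to 0$) and purely multiplicative relation with no control on constants, the self-loop probabilities may degenerate, and one is comparing \emph{sums} of products of possibly wildly varying sizes; one has to argue that the dominant simple-path traces are the same for $p$ and $\tilde p$, or more robustly, that a termwise bound $c_1\tilde p_\e(\gamma)\le p_\e(\gamma)\le c_2\tilde p_\e(\gamma)$ on each of the finitely many simple traces already forces $c_1\sum\tilde p_\e(\gamma)\le\sum p_\e(\gamma)\le c_2\sum\tilde p_\e(\gamma)$ — which is immediate once the termwise bounds are in place, the real work being to establish those termwise bounds despite the diagonal entries. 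The second and third steps are then essentially linear algebra plus the Markov chain tree theorem within each essential class, and I expect them to be routine given the machinery the paper has already set up (Lemmas~\ref{lem:hsr} and, if needed, \ref{lem:state-del}).
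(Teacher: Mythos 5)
There is a genuine gap in your Step 1, and it is load-bearing. The claimed ``standard identity'' $\p^x(\tau^+_y<\tau^+_x)=\sum_{\gamma\in\Gamma_A(x,y)}p(\gamma)$ is false: the sum over \emph{simple} paths is only a lower bound, since a trajectory realizing $\{\tau^+_y<\tau^+_x\}$ may revisit intermediate states arbitrarily often. The correct loop-erased decomposition (the paper's Lemma~\ref{lem:escape-decomp}) carries denominators $1-\p^{\gamma_i}(X_{\tau^+_{\cdot}}=\gamma_i)$, and these return probabilities are global quantities that are not termwise comparable between $p$ and $\tilde p$ from the hypothesis $x\neq y\Rightarrow p(x,y)\cong\tilde p(x,y)$ alone (the paper only controls them under Assumption~\ref{Assum2}, which is not available here). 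So even granting your termwise bounds on simple traces, you would only compare two lower bounds, not the hitting probabilities themselves; the ``main obstacle'' you flag is exactly the point where the argument breaks, and it is not resolved. The paper's own device is different and sidesteps this entirely: it changes \emph{one row at a time}. If $p$ and $\hat p$ differ only in the off-diagonal entries of row $x$, then $\p^z(\tau_y<\tau^+_x)=\hat\p^z(\tau_y<\tau^+_x)$ \emph{exactly} for all $z\neq x$ (such trajectories never step out of $x$), and the one-step decomposition $\p^x(\tau^+_y<\tau^+_x)=\sum_{z\neq x}p(x,z)\p^z(\tau_y<\tau^+_x)$ involves only off-diagonal entries of row $x$, giving $\cong$ directly; an induction over the rows then handles the irreducible case, and a second induction over the supports of the transition maps handles reducibility.

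Two further remarks. First, your parenthetical appeal to the Markov chain tree theorem in Step 2 is actually the seed of a \emph{correct alternative} proof: within an irreducible closed class the stationary weights are ratios of sums of spanning-tree products, every factor of which is off-diagonal and hence $\cong$-preserved, so $\mu_C\cong\tilde\mu_C$ follows from Lemma~\ref{lem:cong}.\ref{lem:cong2}--\ref{lem:cong3} with no reference to hitting probabilities at all --- had you made this the main argument, Step 1 would be unnecessary. Second, your construction of $\tilde\mu$ tacitly assumes the closed classes of $p_\e$ and $\tilde p_\e$ coincide and are independent of $\e$; the first point does hold for small $\e$ (since $\cong$ forces the zero patterns to agree near $0$), but when some transition map is non-zero yet vanishes on part of $I$ the class structure varies with $\e$, and one must partition $I$ by support (as in the paper's inductive case, using Lemma~\ref{lem:cong}.\ref{lem:cong7}--\ref{lem:cong8}) before the convex-combination argument applies.
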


\noindent \textit{E.g.}, both coefficients in Figure~\ref{fig:stable2} (\ref{fig:out-scale2}) can safely be replaced with $\e$ ($1$), and Figure~\ref{fig:ess-coll2} can be replaced with Figure~\ref{fig:ess-coll3}. Lemma~\ref{lem:congp-congmu} will dramatically simplify the computation of the stable states.

\subsection{Essential graph}\label{sect:eg}

The \emph{essential graph} of a perturbation captures the non-infinitesimal flow between different states at the normal time scale. It is a very coarse description of the perturbation.

\begin{definition}[Essential graph]\label{defn:essential-class}
Given a perturbation with state space $S$, the essential graph is a binary relation over $S$ and possesses the arc $(x,y)$ if $x \neq y$ and $p(z,t) \precsim p(x,y)$ for all $z,t\in S$. The essential classes are the sink (aka bottom) strongly connected components of the graph. The other SCCs are the transient classes. A state in an essential class is essential, the others are transient.
\end{definition}

\noindent The essential classes will be named $E_1,\dots, E_k$. Observation~\ref{obs:inf-bound} below implies that the essential graph is made of the arcs $(x,y)$ such that $x \neq y$ and $p(x,y) \cong 1$, as expected.

\begin{observation}\label{obs:inf-bound}
Let $p$ be a perturbation. There exist positive $c$ and $\e_0$  such that for all $\e < \e_0$, for all simple paths $\gamma$ in the essential graph, $c < p_\e(\gamma)$.
\end{observation}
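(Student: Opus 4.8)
The plan is to reduce the statement to a finite, uniform lower bound on individual edge probabilities along the essential graph. By Definition~\ref{defn:essential-class}, every arc $(x,y)$ of the essential graph satisfies $p(z,t) \precsim p(x,y)$ for all $z,t \in S$; in particular, taking $(z,t)$ to be an arc of the essential graph as well, any two essential-graph arc-maps are $\cong$-comparable both ways, hence $\cong$-equivalent. Since the total probability leaving any state is $1$, for each state $x$ with an outgoing essential-graph arc there is some arc-map $p(x,y_x)$ with $p_\e(x,y_x) \geq 1/|S|$ for all $\e$; this single map is then $\cong 1$. By transitivity of $\cong$ applied to the (finitely many) arc-maps of the essential graph, \emph{every} essential-graph arc-map $p(x,y)$ satisfies $p(x,y) \cong 1$, which is the ``as expected'' remark the paper makes right before the Observation. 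Concretely, there are positive constants $a$ and $\e_1$ with $p_\e(x,y) \geq a$ for every essential-graph arc $(x,y)$ and every $\e < \e_1$ — here I use that there are only finitely many arcs, so the finitely many witnessing pairs $(b,\e)$ from Definition~\ref{def:cong} can be combined by taking the smallest.

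Having secured the per-edge bound, I would finish by bounding the probability of a whole simple path. A simple path $\gamma$ in the essential graph over $|S| = n$ states has length $|\gamma| \leq n$ (at most $n-1$ arcs), and by the inductive definition $p_\e(\gamma)$ is the product of the arc-probabilities $p_\e(\gamma_i,\gamma_{i+1})$, each of which is an essential-graph arc and hence at least $a$ for $\e < \e_1$. Therefore $p_\e(\gamma) \geq a^{n-1}$ for all such $\gamma$ and all $\e < \e_1$. Setting $c := a^{n-1}$ (or any strictly smaller positive number, to get the strict inequality $c < p_\e(\gamma)$ demanded in the statement — e.g. $c := a^n$, noting $a \leq 1$) and $\e_0 := \e_1$ gives the claim.

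The only subtlety — and the step I'd be most careful about — is the derivation that \emph{all} essential-graph arc-maps are $\cong 1$, rather than merely mutually $\cong$-comparable. The defining condition $p(z,t) \precsim p(x,y)$ gives an upper comparison against $p(x,y)$ for all pairs, but one must exhibit at least one essential-graph arc out of each essential-graph vertex whose probability does not vanish; this is where the stochasticity constraint $\sum_t p_\e(x,t) = 1$ is essential, since it forces some outgoing transition to carry mass $\geq 1/n$, and the definition of the essential graph guarantees that this heavy transition is itself an essential-graph arc (its map dominates all others, so in particular it is not dominated strictly by the arc realizing the essential graph out of $x$). Everything else is a routine finite minimisation over the constant pairs supplied by $\precsim$ and $\cong$.
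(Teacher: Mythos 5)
Your overall plan --- a uniform lower bound on each essential-graph arc, then a product over the at most $|S|$ arcs of a simple path and a finite minimisation over the witnessing constants --- is exactly the paper's. The gap is in the step you yourself flag as delicate: the existence, for each state $x$ with an outgoing essential-graph arc, of a \emph{fixed} essential-graph arc $(x,y_x)$ with $p_\e(x,y_x)\ge 1/|S|$ for \emph{all} $\e$. Pigeonhole on the stochastic row only gives, for each $\e$ separately, some $z_\e$ with $p_\e(x,z_\e)\ge 1/|S|$; this $z_\e$ may change with $\e$, and worse, it may be $x$ itself (the self-loop, which is never an arc of the essential graph) or a non-essential arc. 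Concretely, take $S=\{x,y\}$ with $p_\e(x,y)=p_\e(y,x)=1/100$ and self-loops $99/100$: both $(x,y)$ and $(y,x)$ are essential arcs (since $99/100\precsim 1/100$), yet no essential-graph arc ever carries mass $1/|S|=1/2$; the heavy transitions are the self-loops. Your parenthetical justification (``its map dominates all others, so it is not dominated strictly by the arc realizing the essential graph'') conflates being pointwise largest at a particular $\e$ with $\precsim$-domination, and in any case Definition~\ref{defn:essential-class} is a global condition on all pairs $(z,t)$, not a per-row one.

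The repair is short and is what the paper does: for an essential arc $(x,y)$, the definition gives $p(x,z)\precsim p(x,y)$ for \emph{every} $z\in S$, self-loop included; by finiteness of $S$ choose $b_{xy},\e_{xy}>0$ with $p_\e(x,z)\le b_{xy}\,p_\e(x,y)$ for all $z$ and all $\e<\e_{xy}$, and sum over $z$ to get $1=\sum_{z\in S}p_\e(x,z)\le |S|\,b_{xy}\,p_\e(x,y)$, i.e.\ $p_\e(x,y)\ge(|S|\,b_{xy})^{-1}$. This bounds every essential arc directly, so the detour through one heavy arc and transitivity of $\cong$ is unnecessary (your opening remark that all essential arc-maps are mutually $\cong$ is correct but not needed). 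With this substitution, your finite minimisation over the finitely many arcs and the final bound $c:=a^{|S|}$ for simple paths go through as written.
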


\noindent For example, the perturbations (with $I = ]0,1]$) that are described in Figures~\ref{fig:no-stable2}, \ref{fig:no-stable3}, \ref{fig:stable1}, and \ref{fig:stable3} all have Figure~\ref{fig:ess-graph1} as essential graph, and $\{x\}$ and $\{y\}$ as essential class. Figure~\ref{fig:ess-graph2} (\ref{fig:ess-graph3}) is the essential graph of Figure~\ref{fig:ess-coll1} (\ref{fig:trans-del1}), and $\{x,y\}$ and $\{t\}$ are its essential classes. Note that the essential states of a perturbation and the essential states of a Markov chain are two distinct (yet related) concepts: \textit{e.g.}, all states from Figure~\ref{fig:ess-coll1} are essential for the Markov chain for all $\e \in ]0,1]$.

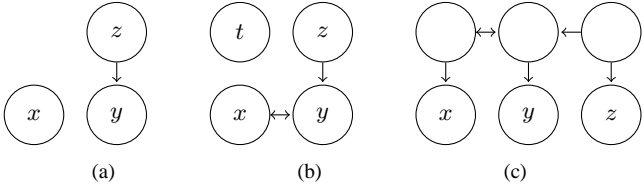
\begin{figure}
\centering
 \begin{subfigure}[b]{0.15\textwidth}
\begin{tikzpicture}[shorten >=1pt,node distance=1.1cm, auto]
  \node[state] (x) {$x$};
  \node[state] (y) [right of = x] {$y$};
    \node[state] (z) [above of = y] {$z$};
 \path[->]   (z) edge node {} (y);
 \end{tikzpicture}
 \caption{}
 \label{fig:ess-graph1}
\end{subfigure}
\begin{subfigure}[b]{0.15\textwidth}
\begin{tikzpicture}[shorten >=1pt,node distance=1.1cm, auto]
  \node[state] (x) {$x$};
  \node[state] (y) [right of = x] {$y$};
  \node[state] (t) [above of = x] {$t$};
    \node[state] (z) [above of = y] {$z$};

\path[<->] (x) edge node {} (y);
\path[->]	(z) edge node {} (y);
 \end{tikzpicture}
 \caption{}
 \label{fig:ess-graph2}
\end{subfigure}
\begin{subfigure}[b]{0.15\textwidth}
\begin{tikzpicture}[shorten >=1pt,node distance=1.1cm, auto]
  \node[state] (x) {$x$};
  \node[state] (y) [right of = x] {$y$};
  \node[state] (z) [right of = y] {$z$};
  \node[state] (x') [above of = x] {};
  \node[state] (y') [right of = x'] {};
  \node[state] (z') [right of = y'] {};
  
\path[->] (x') edge node {} (x)
	 (y') edge node {} (y)
	 (z') edge node {} (z)
	 (z') edge node {} (y');
	
\path[<->]  (x') edge node {} (y');
 \end{tikzpicture}
 \caption{}
 \label{fig:ess-graph3}
\end{subfigure}
 \caption{Essential graphs}
 \label{fig:ess-graph}
 \end{figure}

The essential graph alone cannot tell which states are stable: \textit{e.g.}, swapping $\e$ and $\e^2$ in Figure~\ref{fig:stable1} yields the same essential graph but Lemma~\ref{lem:hsr} shows that the only stable state is then $x$ instead of $y$. The graph allows us to make the following case disjunction nonetheless, along which we will either say that all states are stable, or perform one of the transformations from the next subsections.

\begin{enumerate}
\item Either the graph is empty (\textit{i.e.} totally disconnected) and the perturbation is zero, or
\item it is empty and the perturbation is non-zero, or
\item it is non-empty and has a non-singleton essential class, or
\item it is non-empty and has only singleton essential classes.
\end{enumerate}

\noindent Observation~\ref{obs:inf-bound} motivates the following convenient assumption.

\begin{assumption}\label{Assum2}
There exists $c > 0$ such that $p(\gamma) > c$ for every simple path $\gamma$ in the essential graph.
\end{assumption}

\noindent The two assumptions above do not have the same status: Assumption~\ref{Assum1} is a key condition that will appear explicitly in our final result, whereas Assumption~\ref{Assum2} is just made wlog, \textit{i.e.}, up to focusing on a smaller neighborhood of $0$ inside $I$.

Lemma~\ref{lem:ess-weight} shows the usefulness of Assumption~\ref{Assum2}. It is proved by Lemma~\ref{lem:hsr}, and is used later to strengthen Lemma~\ref{lem:state-del}.~\ref{lem:state-del2} into $\mu \cong \tilde{\mu}$.

\begin{lemma}\label{lem:ess-weight}
Let a perturbation $p$ with state space $S$ and transient states $T$ satisfy Assumption~\ref{Assum2}. Then $\frac{c}{c + |S|} \leq \sum_{x\in S\setminus T}\mu(x)$.
\end{lemma}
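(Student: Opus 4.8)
The plan is to bound the total weight of transient states from above, using Lemma~\ref{lem:hsr} together with Assumption~\ref{Assum2}. Let $\mu$ be any stationary distribution of the perturbation $p$ (at a fixed but arbitrary $\e < \e_0$, where $\e_0$ is small enough that Assumption~\ref{Assum2} applies). By Lemma~\ref{lem:hsr} the support of $\mu$ consists only of essential states of the Markov chain, but more importantly Lemma~\ref{lem:hsr} gives the detailed-balance-like identity $\mu(x)\p^x(\tau^+_y < \tau^+_x) = \mu(y)\p^y(\tau^+_x < \tau^+_y)$ for all $x, y$. The idea is to pick, for a transient state $x \in T$, a target essential state $y$ that is reachable from $x$ along a simple path in the essential graph, and to compare $\mu(x)$ against $\mu(y)$.

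First I would establish the key inequality: for each transient state $x$ there is an essential state $y = y(x)$ with $\mu(x) \leq \frac{|S|}{c}\,\mu(y)$. To see this, fix $x \in T$. Since every SCC of the essential graph that is not a sink can reach a sink SCC, there is a simple path $\gamma$ in the essential graph from $x$ to some essential state $y$. By Assumption~\ref{Assum2}, $\p^x(\tau_y < \tau^+_x) \geq p_\e(\gamma) > c$ — indeed $\gamma$ is one particular way (among possibly others) for the chain to reach $y$ before returning to $x$, and $\gamma$ visits $x$ only as its first vertex since it is simple. On the other hand $\p^y(\tau^+_x < \tau^+_y) \leq 1$, and in fact we can do slightly better, but $\leq 1$ suffices. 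Now Lemma~\ref{lem:hsr} with the pair $(x,y)$ gives
\[
\mu(x)\,\p^x(\tau^+_y < \tau^+_x) = \mu(y)\,\p^y(\tau^+_x < \tau^+_y) \leq \mu(y),
\]
hence $\mu(x) \leq \mu(y)/\p^x(\tau^+_y<\tau^+_x) < \mu(y)/c$. (One subtlety: $\tau^+_y$ versus $\tau_y$; since $x \neq y$ these coincide when the chain starts at $x$, so $\p^x(\tau_y<\tau^+_x)=\p^x(\tau^+_y<\tau^+_x)>c$.)

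Then I would sum over transient states. Each transient $x$ is mapped to some essential $y(x)$, and several transient states may map to the same $y$; there are at most $|T| \leq |S|$ transient states, so
\[
\sum_{x\in T}\mu(x) \;<\; \sum_{x\in T}\frac{\mu(y(x))}{c} \;\leq\; \frac{|T|}{c}\sum_{y\in S\setminus T}\mu(y) \;\leq\; \frac{|S|}{c}\sum_{y\in S\setminus T}\mu(y).
\]
Writing $m := \sum_{y\in S\setminus T}\mu(y)$, we have $1 = \sum_{x\in S}\mu(x) = m + \sum_{x\in T}\mu(x) < m + \frac{|S|}{c}\,m = m\bigl(1 + \frac{|S|}{c}\bigr) = m\,\frac{c+|S|}{c}$, so $m > \frac{c}{c+|S|}$, which is the claimed bound (the non-strict inequality of the statement is implied a fortiori). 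Since $\mu$ was arbitrary, the bound holds for every stationary distribution.

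The main obstacle I anticipate is the careful handling of the stopping-time events: one must make sure the simple path $\gamma$ in the essential graph really witnesses the event $\{\tau_y < \tau^+_x\}$ under $\p^x$, which requires that $\gamma$ — being repetition-free — does not revisit $x$ and does not pass through $y$ before its last step, so that following $\gamma$ genuinely reaches $y$ strictly before returning to $x$. This is exactly why simplicity of $\gamma$ is needed and why Assumption~\ref{Assum2} is phrased in terms of simple paths. A secondary point is confirming that a transient state of the perturbation's essential graph can always reach an essential (sink-SCC) state via such a path — this is a standard fact about finite directed graphs, since from any vertex one can keep moving to a successor SCC until a sink SCC is reached, and within the final SCC one reaches the designated vertex, the concatenation being shortened to a simple path if necessary.
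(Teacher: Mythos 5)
Your proof is correct and follows essentially the same route as the paper's: for each transient state pick an essential state reachable by a simple path in the essential graph, use Lemma~\ref{lem:hsr} together with Assumption~\ref{Assum2} to get $c\,\mu(x)\leq\mu(y(x))$, and sum over $T$ to bound $\sum_{x\in S\setminus T}\mu(x)$ from below. The paper's proof is just a terser version of the same argument (it leaves the final summation as ``further approximation''), so there is nothing to add.
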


\subsection{Essential collapse}\label{sect:ec}


\noindent The essential collapse, defined below, amounts to merging one essential class of a perturbation into one meta-state and letting this state represent faithfully the whole class in terms of dynamics between the whole class and each of the outside states.

\begin{definition}[Essential collapse of a perturbation]\label{defn:ec}
Let $p$ be a perturbation on state space $S$. Let $x$ be a state in an essential class $E$, and let $\tilde{S} := (S\setminus E) \cup \{\cup E\}$. The essential collapse $\kappa(p,x): I\times\tilde{S}\times\tilde{S}\to[0,1]$ of $p$ around $x$ is defined below. 
\begin{align*}
\kappa(p,x)(\cup E,\cup E) & := \p^x(X_{\tau^+_{S\setminus E \cup \{x\}}} = x)\\
\kappa(p,x)(\cup E,y) & := \p^x(X_{\tau^+_{S\setminus E \cup \{x\}}} = y) & \mbox{\quad for all } y\in S\setminus E\\
\kappa(p,x)(y,\cup E) & := \sum_{z\in E}p(y,z) & \mbox{\quad for all } y\in S\setminus E\\
\kappa(p,x)(y,z) & := p(y,z) & \mbox{\quad for all } y,z\in S\setminus E
\end{align*}
\end{definition}

\begin{observation}\label{obs:ess-coll}
$\kappa(p,x)$ is again a perturbation, $\kappa$ preserves irreducibility, and if $\{x\}$ is an essential class, $\kappa(p,x) = p$.
\end{observation}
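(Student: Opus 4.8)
The plan is to verify the three assertions one at a time, always working under Assumption~\ref{Assum2}; recall that it provides a constant $c>0$ with $p_\e(\gamma)>c$ for every simple path $\gamma$ of the essential graph and every $\e\in I$, so that, since $E$ is strongly connected in the essential graph, from any $z\in E$ there is a simple essential path inside $E$ to $x$ of $p_\e$-probability at least $c$ (cf.\ Observation~\ref{obs:inf-bound}). The easy parts come first: the state space $\tilde S$ of $\kappa(p,x)$ is independent of $\e$, every quantity in Definition~\ref{defn:ec} is a probability and hence lies in $[0,1]$, and if $\{x\}$ is already an essential class then $E=\{x\}$, so $\cup E=x$, $\tilde S=S$, $S\setminus E\cup\{x\}=S$ and $\tau^+_{S\setminus E\cup\{x\}}=1$ surely, whereupon the four clauses of Definition~\ref{defn:ec} collapse to $\p^x(X_1=\cdot)=p(x,\cdot)$, $\sum_{z\in\{x\}}p(y,z)=p(y,x)$ and $p(y,z)=p(y,z)$, i.e.\ $\kappa(p,x)=p$. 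For a row indexed by $y\in S\setminus E$ one has $\kappa(p,x)_\e(y,\cup E)+\sum_{z\in S\setminus E}\kappa(p,x)_\e(y,z)=\sum_{z\in E}p_\e(y,z)+\sum_{z\in S\setminus E}p_\e(y,z)=1$.

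The first genuine point is that the row indexed by $\cup E$ also sums to one. Adding the first two clauses of Definition~\ref{defn:ec} gives $\p^x(\tau^+_{S\setminus E\cup\{x\}}<\infty)$, so it suffices to show that the $p_\e$-chain started at $x$ leaves the finite set $E\setminus\{x\}$ almost surely. This follows from the remark recalled above: every state of $E\setminus\{x\}$ reaches $S\setminus E\cup\{x\}$ within a bounded number of steps with $p_\e$-probability at least $c$, so $\tau^+_{S\setminus E\cup\{x\}}<\infty$ holds $\p^x$-almost surely. Hence $\kappa(p,x)$ is a perturbation.

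Next, assume every $p_\e$ is irreducible; we show the digraph of $\kappa(p,x)_\e$ is strongly connected (if $E=S$ then $\tilde S$ is a singleton, so assume $E\subsetneq S$). First, from any $y\in S\setminus E$ one reaches $\cup E$: truncate a $p_\e$-path from $y$ to $x$ at its first visit to $E$; up to there it lies in $S\setminus E$, where $\kappa(p,x)_\e$ coincides with $p_\e$, and its last step carries positive $\kappa(p,x)_\e$-weight into $\cup E$ since $\kappa(p,x)_\e(\cdot,\cup E)=\sum_{z\in E}p_\e(\cdot,z)$ dominates the matching $p_\e$-transition. Second, from $\cup E$ one reaches every $b\in\tilde S$: the case $b=\cup E$ follows from the first point after one step out of $\cup E$, so let $b\in S\setminus E$. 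Irreducibility gives a boundary arc $p_\e(z,y_0)>0$ with $z\in E$ and $y_0\in S\setminus E$; prepending a simple essential path from $x$ to $z$ inside $E$ shows $\kappa(p,x)_\e(\cup E,y_0)=\p^x(X_{\tau^+_{S\setminus E\cup\{x\}}}=y_0)>0$. Finally rewrite a $p_\e$-path from $y_0$ to $b$ into a $\kappa(p,x)_\e$-path: stretches inside $S\setminus E$ are copied verbatim, while each maximal excursion into $E$, entered from $u\in S\setminus E$ and exited at $v\in S\setminus E$, is replaced by $u\to\cup E\to v$, where $\kappa(p,x)_\e(u,\cup E)>0$ as before and $\kappa(p,x)_\e(\cup E,v)=\p^x(X_{\tau^+_{S\setminus E\cup\{x\}}}=v)>0$ because there is a positive-probability $p_\e$-path from $x$ to $v$ whose interior stays in $E\setminus\{x\}$: namely the tail of the excursion after its last visit to $x$, or, should the excursion avoid $x$, a simple essential path from $x$ to the first excursion vertex spliced onto the excursion. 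Induction on the length of the path being rewritten completes this step, and the two points together give strong connectivity.

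The only steps that go beyond bookkeeping are the almost-sure finiteness of $\tau^+_{S\setminus E\cup\{x\}}$ and the path-rewriting in the irreducibility argument; both consist in converting the combinatorial fact that $E$ is a strongly connected component of the essential graph into the probabilistic statement that the $p_\e$-chain crosses $E$ with positive probability, which is exactly what Assumption~\ref{Assum2} (through Observation~\ref{obs:inf-bound}) delivers. I therefore expect the path-rewriting, with its bookkeeping of excursions into $E$ and of repeated visits to $x$ within them, to be the fiddliest part, though still routine.
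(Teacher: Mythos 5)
Your argument is correct, and in fact the paper offers no proof of Observation~\ref{obs:ess-coll} at all --- it is one of the statements the authors treat as evident --- so there is no official proof to compare against. Your write-up supplies exactly the two points that are not pure bookkeeping: (i) the row of $\cup E$ sums to one because $\tau^+_{(S\setminus E)\cup\{x\}}$ is $\p^x$-almost surely finite, which you correctly derive from Assumption~\ref{Assum2} via the geometric bound (this is essentially the paper's Lemma~\ref{lem:et}); and (ii) the path-rewriting that converts $p_\e$-paths into $\kappa(p,x)_\e$-paths for irreducibility, with the excursion-into-$E$ bookkeeping handled correctly in both the ``excursion meets $x$'' and ``excursion avoids $x$'' cases. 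Your decision to work under Assumption~\ref{Assum2} is not merely convenient but necessary: without restricting $I$, one can arrange $p_{\e_0}$ for an isolated $\e_0$ so that the chain started in $E$ gets trapped in $E\setminus\{x\}$ with positive probability, making the $\cup E$ row sum to less than one; since the paper declares Assumption~\ref{Assum2} to hold without loss of generality from Section~\ref{sect:eg} onward, invoking it here is the intended reading. The only cosmetic quibble is the sentence claiming $\p^z(\cdot)\ge c$ ``for every $\e\in I$'' --- that is indeed what Assumption~\ref{Assum2} asserts (as opposed to Observation~\ref{obs:inf-bound}, which only gives it near $0$), so the phrasing is fine, but it is worth being aware that the two statements differ in exactly this respect.
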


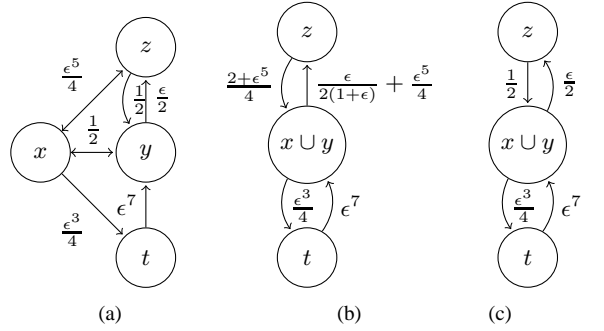
\begin{figure}
\centering
\begin{subfigure}[b]{0.15\textwidth}
\begin{tikzpicture}[shorten >=1pt,node distance=1.4cm, auto]
  \node[state] (x) {$x$};
  \node[state] (y) [right of = x] {$y$};
  \node[state] (t) [below of = y] {$t$};
    \node[state] (z) [above of = y] {$z$};

\path[<->] (x) edge node [above] {$\frac{1}{2}$} (y)		
		(x) edge node {$\frac{\e^5}{4}$} (z);
\path[->]	(z) edge [bend right] node [right] {$\frac{1}{2}$} (y)	
		(y) edge node [right] {$\frac{\e}{2}$} (z)	
		(x) edge node  [below left] {$\frac{\e^3}{4}$} (t)
		(t) edge node [left] {$\e^7$} (y);
 \end{tikzpicture}
 \caption{}
 \label{fig:ess-coll1}
\end{subfigure}
\begin{subfigure}[b]{0.2\textwidth}
\begin{tikzpicture}[shorten >=1pt,node distance=1.5cm, auto]
  \node[state] (xy) {$x \cup y$};
  \node[state] (t) [below of = xy] {$t$};
    \node[state] (z) [above of = xy] {$z$};

\path[->]	(z) edge [bend right] node [left] {$\frac{2+\e^5}{4}$} (xy)
		(xy) edge node [right] {$\frac{\e}{2(1+\e)} + \frac{\e^5}{4}$} (z)	
		(xy) edge [bend right] node  [right] {$\frac{\e^3}{4}$} (t)
		(t) edge [bend right] node [right] {$\e^7$} (xy);
 \end{tikzpicture}
 \caption{}
 \label{fig:ess-coll2}
 \end{subfigure}
\begin{subfigure}[b]{0\textwidth}
\begin{tikzpicture}[shorten >=1pt,node distance=1.5cm, auto]
  \node[state] (xy) {$x \cup y$};
  \node[state] (t) [below of = xy] {$t$};
    \node[state] (z) [above of = xy] {$z$};

\path[->]	(z) edge node [left] {$\frac{1}{2}$} (xy)
		(xy) edge [bend right] node [right] {$\frac{\e}{2}$} (z)	
		(xy) edge [bend right] node  [right] {$\frac{\e^3}{4}$} (t)
		(t) edge [bend right] node [right] {$\e^7$} (xy);
 \end{tikzpicture}
 \caption{}
 \label{fig:ess-coll3}
 \end{subfigure}
 \caption{Essential collapse}
 \label{fig:ess-coll}
 \end{figure}

\noindent For example, collapsing around $x$ or $y$ in Figure~\ref{fig:stable2} has no effect. The perturbation in Figure~\ref{fig:ess-coll1} has two essential classes, \textit{i.e.}, its essential graph has two sink SCCs, namely $\{x,y\}$ and $\{t\}$. Figure~\ref{fig:ess-coll2} displays its essential collapse around $x$.  It was calculated by noticing that $\p^x(X_{\tau^+_{\{x,z,t\}}} = t) = \frac{\e^3}{4}$, and $\p^x(X_{\tau^+_{\{x,z,t\}}} = x) = \frac{1}{2} - \frac{\e^3}{4} - \frac{\e^5}{4}+ \frac{1}{2} \cdot \p^y(X_{\tau^+_{\{x,z,t\}}} = x)$, and $\p^y(X_{\tau^+_{\{x,z,t\}}} = x) = \frac{1}{2} + \frac{1-\e}{2} \cdot \p^y(X_{\tau^+_{\{x,z,t\}}} = x)$.

Proposition~\ref{prop:essential-trans} will show that it suffices to compute the stable states of Figure~\ref{fig:ess-coll2} to compute those of Figure~\ref{fig:ess-coll1}, and by Lemma~\ref{lem:congp-congmu} it suffices to compute those of the simpler Figure~\ref{fig:ess-coll3}. However, computing the exact values $\p^x(X_{\tau^+_{S\setminus E \cup \{x\}}} = y)$ can be difficult even on simple examples like above. Fortunately, Lemma~\ref{lem:p-cong-max} shows that they are $\cong$-equivalent to maxima that are easy to compute. \textit{E.g.}, using Lemma~\ref{lem:p-cong-max} to approximate the essential collapse of Figure~\ref{fig:ess-coll1} around $x$ yields Figure~\ref{fig:ess-coll3}, but without having to compute the intermediate Figure~\ref{fig:ess-coll2}.

\begin{lemma}\label{lem:p-cong-max}
Let $p$ be a perturbation with state space $S$ satisfy Assumption~\ref{Assum1}, and let $\tilde{p}$ be the essential collapse $\kappa(p,x)$ of $p$ around $x$ in some essential class $E$. For all $y\in S\setminus E$, we have $\tilde{p}(\cup E,y) \cong \max_{z\in E} p(z,y)$ and $\tilde{p} (y,\cup E) \cong \max_{z\in E} p(y,z)$.
\end{lemma}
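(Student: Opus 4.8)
The plan is to treat the two $\cong$-statements separately: the one about $\tilde p(y,\cup E)$ is immediate, and the one about $\tilde p(\cup E,y)$ carries the content. By Definition~\ref{defn:ec} we have $\tilde p(y,\cup E)=\sum_{z\in E}p(y,z)$, and since any finite family of nonnegative functions $h_1,\dots,h_m\colon I\to[0,\infty)$ satisfies $\max_i h_i\precsim\sum_i h_i\precsim m\max_i h_i$, taking $m=|E|$ yields $\tilde p(y,\cup E)\cong\max_{z\in E}p(y,z)$ at once, with no assumption needed.

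For $\tilde p(\cup E,y)$, set $A:=(S\setminus E)\cup\{x\}$, so that $\tilde p(\cup E,y)=\p^x(X_{\tau^+_A}=y)$. I would first record the occupation-time identity $\p^x(X_{\tau^+_A}=y)=\sum_{z\in E}g(z)\,p(z,y)$, where $g(z):=\mathbb{E}^x[\#\{n\ge 0:\, n<\tau^+_A,\ X_n=z\}]$, obtained by conditioning on the state visited just before absorption in $A$; the sum ranges over $E$ because the chain starts at $x\in E$ and $A\cap E=\{x\}$, so it stays in $E$ up to time $\tau^+_A$. Clearly $g(x)=1$. The crux is to bound $g(z)$, for $z\in E\setminus\{x\}$, between two positive constants independent of $\e$, $z$, and $y$. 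Since $E$ is a sink SCC of the essential graph containing $x$ and $z$, there is a simple path from $x$ to $z$ lying inside $E$ in the essential graph; forcing the chain along it keeps it in $E$ and away from $x$ until it reaches $z$, so that event is contained in $\{\tau_z<\tau^+_A\}$ and, by Observation~\ref{obs:inf-bound}, has probability $\ge c$ for all small $\e$, whence $g(z)\ge\p^x(\tau_z<\tau^+_A)\ge c$. For the upper bound, the strong Markov property at $\tau_z$ gives $g(z)=\p^x(\tau_z<\tau^+_A)\cdot\mathbb{E}^z[\#\{n\ge0:\, n<\tau^+_A,\ X_n=z\}]\le 1/\p^z(\tau_A<\tau^+_z)$ (a geometric count of returns to $z$ before absorption), and a simple essential path from $z$ back to $x$ inside $E$ gives $\p^z(\tau_A<\tau^+_z)\ge\p^z(\tau_x<\tau^+_z)\ge c$ since $x\in A$; hence $g(z)\le 1/c$. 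Shrinking $c$ so that $c<1$, we obtain $c\le g(z)\le 1/c$ for every $z\in E$.

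Plugging this into the identity, $c\sum_{z\in E}p_\e(z,y)\le\p^x(X_{\tau^+_A}=y)\le\frac1c\sum_{z\in E}p_\e(z,y)$ for all small $\e$, and combining with $\max_{z\in E}p_\e(z,y)\le\sum_{z\in E}p_\e(z,y)\le|E|\max_{z\in E}p_\e(z,y)$ yields $\tilde p(\cup E,y)\cong\max_{z\in E}p(z,y)$. The one delicate point is the two-sided control of the Green's function $g(z)$: the right viewpoint is that "essential class $=$ sink SCC of the essential graph" is exactly what supplies, for each $z\in E$, a simple essential path from $x$ into $z$ (giving the lower bound) and a simple essential path from $z$ back to the absorbing state $x\in A$ (giving, via the geometric return argument, the upper bound), each of $\e$-uniformly positive probability by Observation~\ref{obs:inf-bound}. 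Everything else is the elementary fact that a finite sum of nonnegative functions is $\cong$ its pointwise maximum; the argument appeals only to Observation~\ref{obs:inf-bound} and the finiteness of $E$.
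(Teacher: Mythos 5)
Your proof is correct. The treatment of $\tilde p(y,\cup E)$ (finite sum $\cong$ pointwise maximum) and the lower bound on $\tilde p(\cup E,y)$ match the paper's, which also obtains $c\cdot p(z,y)\le\tilde p(\cup E,y)$ by prefixing a simple essential path $\gamma_z$ from $x$ to $z$ inside $E$, after restricting $I$ so that Observation~\ref{obs:inf-bound} becomes Assumption~\ref{Assum2}. For the upper bound you take a genuinely different route: the paper decomposes by the exit time, bounding $\p^x(X_{\tau^*}=y,\tau^*=N)\le\max_{z\in E}p(z,y)\cdot\p^x(\tau^*\ge N-1)$ and summing the resulting geometric series via the tail estimate $\p^x(\tau^*>n)\le(1-c)^{\lfloor n/|S|\rfloor}$ of Lemma~\ref{lem:et}, whereas you use the exact occupation-time (last-exit) identity $\p^x(X_{\tau^+_A}=y)=\sum_{z\in E}g(z)\,p(z,y)$ and control the Green's function $g$ by a geometric count of returns to $z$ before absorption at $x\in A$. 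Both arguments rest on the same input --- a simple essential path of $\e$-uniformly positive probability in each direction between $x$ and any $z\in E$, supplied by $E$ being a sink SCC of the essential graph --- but your identity is exact and delivers the two bounds simultaneously: $c\le g\le 1/c$ yields $\tilde p(\cup E,y)\cong\sum_{z\in E}p(z,y)\cong\max_{z\in E}p(z,y)$ in one stroke, at the cost of the strong-Markov bookkeeping for $g$; the paper's version avoids introducing the Green's function but needs the separate tail lemma and a one-sided chain of inequalities for each direction. Your implicit uses of $\tau^+_A<\infty$ almost surely (so that the geometric return count sums to $1/\p^z(\tau^+_A<\tau^+_z)$) are justified by the same constant $c$, so there is no gap.
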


\noindent Note that by Lemma~\ref{lem:p-cong-max}, only the essential class is relevant during the essential collapse up to $\cong$, the exact state is irrelevant. Lemma~\ref{lem:p-cong-max} is also a tool that is used to prove, \textit{e.g.}, Proposition~\ref{prop:essential-graph} below which shows that the essential graph may contain useful information about the stable states. 

\begin{proposition}\label{prop:essential-graph}
Let a perturbation $p$ with state space $S$ satisfy Assumption~\ref{Assum1}, let $\mu$ be a corresponding stationary distribution map.
\begin{enumerate}
\item\label{prop:essential-graph1} If $y$ is a transient state, $\liminf_{\e\to 0} \mu_{\e}(y) = 0$.
\item\label{prop:essential-graph2} If two states $x$ and $y$ belong to the same essential or transient class, $\mu(x) \cong \mu(y)$.
\end{enumerate}
\end{proposition}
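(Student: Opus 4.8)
The plan is to establish part~(2) first and then use it for part~(1).

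For part~(2), fix distinct states $x,y$ lying in a common SCC $C$ of the essential graph and apply the balance identity of Lemma~\ref{lem:hsr}: $\mu_\e(x)\,\p^x(\tau^+_y<\tau^+_x)=\mu_\e(y)\,\p^y(\tau^+_x<\tau^+_y)$. Since $C$ is strongly connected in the essential graph, there is a simple path $\gamma$ in the essential graph from $x$ to $y$; following $\gamma$ from $x$ reaches $y$ before any return to $x$ (the path is repetition-free and its only occurrence of $x$ is the first letter), hence $\p^x(\tau^+_y<\tau^+_x)\ge p_\e(\gamma)$. By Observation~\ref{obs:inf-bound} there are $c,\e_0>0$ with $p_\e(\gamma')>c$ for every simple essential-graph path $\gamma'$ and every $\e<\e_0$. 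Therefore $c\,\mu_\e(x)\le\mu_\e(x)\,\p^x(\tau^+_y<\tau^+_x)=\mu_\e(y)\,\p^y(\tau^+_x<\tau^+_y)\le\mu_\e(y)$ for all $\e<\e_0$, i.e.\ $\mu(x)\precsim\mu(y)$; the symmetric argument with a simple essential-graph path from $y$ to $x$ gives $\mu(y)\precsim\mu(x)$, so $\mu(x)\cong\mu(y)$. (One must write $\le$ rather than $<$ here, to stay correct when $\mu_\e(x)=0$.)

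For part~(1), let $y$ be transient and $C$ its SCC, which is a non-sink SCC of the essential graph. Let $A$ consist of $C$ together with every state admitting a directed path to $C$ in the essential graph. Two combinatorial observations: since $C$ is not a sink SCC there is an essential arc $(u_0,v_0)$ with $u_0\in C\subseteq A$ and $v_0\notin A$ --- indeed $v_0\notin C$, and were $v_0$ able to reach $C$ then, using the arc $u_0\to v_0$, it would lie in the same SCC as $C$, which is absurd; and no essential arc enters $A$ from outside, since $A$ is by construction closed under taking essential-graph predecessors. Conservation of flow for the stationary $\mu_\e$ across the cut $A$ then gives $\sum_{v\in A,\,w\notin A}\mu_\e(v)p_\e(v,w)=\sum_{v\notin A,\,w\in A}\mu_\e(v)p_\e(v,w)$.

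The left-hand (outgoing) side is at least $\mu_\e(u_0)\,p_\e(u_0,v_0)>c\,\mu_\e(u_0)$ for $\e<\e_0$, by Observation~\ref{obs:inf-bound}. On the right-hand (incoming) side, every pair $(v,w)$ with $v\notin A$ and $w\in A$ is not an essential arc, hence $p(v,w)\not\cong1$; by Assumption~\ref{Assum1} the finite family $\{p(v,w): v\notin A,\ w\in A\}$ has a $\precsim$-largest element $q$, which is itself one of them, so $q\not\cong1$, equivalently $\liminf_{\e\to0}q(\e)=0$. Since each $p_\e(v,w)$ is at most a constant multiple of $q(\e)$ near $0$ and $\sum_{v\notin A}\mu_\e(v)\le1$, the incoming side is at most $C'q(\e)$ for a constant $C'$. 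Combining the two estimates, $c\,\mu_\e(u_0)\le C'q(\e)$ near $0$; and by part~(2), $\mu_\e(y)\le b\,\mu_\e(u_0)$ near $0$ for some constant $b$, so $\mu_\e(y)\le(bC'/c)\,q(\e)$ near $0$ and $\liminf_{\e\to0}\mu_\e(y)=0$. The main obstacle is pinning down the set $A$: it must leak an essential arc (forcing the outgoing flow to dominate $\mu_\e(u_0)$) while receiving no essential arc (so the incoming flow vanishes along a sequence $\e\to0$); with $A$ in hand, the remaining steps --- flow conservation, Observation~\ref{obs:inf-bound}, and the total-preorder Assumption~\ref{Assum1} --- are routine.
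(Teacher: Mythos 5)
Your proof is correct. For part~(2) you do exactly what the paper does: Lemma~\ref{lem:hsr} plus the uniform lower bound $c$ on simple essential-graph paths (Observation~\ref{obs:inf-bound}, i.e.\ Assumption~\ref{Assum2} up to focusing), applied once in each direction inside the SCC; your remark about writing $\le$ rather than $<$ to cover $\mu_\e(x)=0$ is a correct point of care.

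For part~(1) you take a genuinely different route. The paper anchors at an essential state $x$ reachable from the transient $y$ and estimates the ratio $\mu(y)/\mu(x)=\p^x(\tau^+_y<\tau^+_x)/\p^y(\tau^+_x<\tau^+_y)$ from Lemma~\ref{lem:hsr}: the denominator is bounded below by $c$ via a simple essential-graph path, while the numerator is bounded above by the escape probability from the essential class $E$ of $x$, which Lemma~\ref{lem:p-cong-max} identifies (up to $\cong$) with $\sum_{z\notin E}\max_{t\in E}p(t,z)$, whose $\liminf$ is $0$. You instead run a cut argument around the predecessor-closed set $A$ generated by the transient class $C$ of $y$: flow conservation for the stationary $\mu_\e$ across $\partial A$, a leaking essential arc $(u_0,v_0)$ guaranteed because $C$ is not a sink SCC (and your verification that $v_0\notin A$ and that no essential arc enters $A$ is sound), Observation~\ref{obs:inf-bound} to lower-bound the outflow by $c\,\mu_\e(u_0)$, and Assumption~\ref{Assum1} to dominate the inflow by a single non-essential transition probability $q$ with $\liminf_{\e\to 0}q(\e)=0$; part~(2) then transports the bound from $u_0$ to $y$. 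The trade-off: your argument bypasses Lemma~\ref{lem:p-cong-max} entirely (and hence the geometric hitting-time estimate of Lemma~\ref{lem:et} behind it), using only the stationarity equation $\mu p=\mu$ and the total preorder, at the cost of the combinatorial bookkeeping around $A$; the paper's version reuses machinery it needs anyway for the essential collapse. Both are valid proofs of the statement as given.
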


\noindent Proposition~\ref{prop:essential-graph}.\ref{prop:essential-graph1} says that the transient states are vanishing, \textit{e.g.} the nameless states in Figure~\ref{fig:ess-graph3}. Proposition~\ref{prop:essential-graph}.\ref{prop:essential-graph2} says that two states in the same class are either both stable or both vanishing, \textit{e.g.} $\{x\}$ and $\{y\}$ in Figure~\ref{fig:ess-graph2}.

The usefulness of the essential collapse comes from its preserving and reflecting stability, as stated in Proposition~\ref{prop:essential-trans}. Its proof invokes Lemma~\ref{lem:preserve-stable} below, which shows that the essential collapse preserves the dynamics up to $\cong$, and Lemma~\ref{lem:hsr}, which relates the dynamics and the stationary distributions.

\begin{lemma}\label{lem:preserve-stable}
Given a perturbation $p$ with state space $S$, let $\tilde{p}$ be the essential collapse of $p$ around $x$ in some essential class $E$, and let $\tilde{x} := \cup E$. The following holds for all $y\in S\setminus E$.
\[\p^y(\tau_x < \tau_y) \cong \tilde{\p}^y(\tau_{\tilde{x}} < \tau_y) \quad \wedge \quad \p^x(\tau_y < \tau_x) \cong \tilde{\p}^{\tilde{x}}(\tau_y < \tau_{\tilde{x}})\]
\end{lemma}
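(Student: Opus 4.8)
The plan is to handle the two $\cong$-statements separately, reading $\tau_x,\tau_y,\tau_{\tilde x}$ throughout as the corresponding first-\emph{return} times $\tau^+$ (the only sensible reading when the start state equals the subscript). By Observation~\ref{obs:inf-bound} I may fix, once and for all and without affecting any $\cong$-relation, constants $c,\e_0\in(0,1)$ with $p_\e(\gamma)>c$ for all $\e<\e_0$ and all simple paths $\gamma$ of the essential graph; write $\hat S:=(S\setminus E)\cup\{x\}$. Since $E$ is a sink SCC of the essential graph, from every $w\in E$ there is a simple essential path inside $E$ to $x$; prepending such a path to an excursion yields the two estimates I will use repeatedly:
\[
\p^w(X_{\tau^+_{\hat S}}=x)>c,\qquad \p^w(X_{\tau^+_{\hat S}}=u)\le c^{-1}\,\p^x(X_{\tau^+_{\hat S}}=u)\quad(w\in E,\ u\in S\setminus E).
\]
If $E=\{x\}$ then $\tilde p=p$ (Observation~\ref{obs:ess-coll}) and there is nothing to prove, so assume $|E|\ge2$.

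For the first statement, before hitting $\tilde x$ the collapsed chain moves inside $S\setminus E$ exactly as $p$ does, and $\tilde p(z,\cup E)=\sum_{w\in E}p(z,w)$ is precisely the $p$-probability of stepping from $z$ into $E$; a path-by-path matching of the runs that stay in $S\setminus E$ until they return to $y$ or leave it then gives $\tilde\p^y(\tau_{\tilde x}<\tau^+_y)=\p^y(\tau_E<\tau^+_y)$. So it suffices to show $\p^y(\tau_x<\tau^+_y)\cong\p^y(\tau_E<\tau^+_y)$. One inequality is immediate since $x\in E$ forces $\tau_x\ge\tau_E$. For the other, condition on the state $w=X_{\tau_E}\in E$ at which the chain first enters $E$ on $\{\tau_E<\tau^+_y\}$: the simple essential path from $w$ to $x$ stays in $E$, hence avoids $y$, so the strong Markov property gives $\p^y(\tau_x<\tau^+_y)\ge c\,\p^y(\tau_E<\tau^+_y)$.

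For the second statement, I decompose on the first excursion of $p$ from $x$ back to $\hat S$: this excursion stays in $E$, so it never meets $y$, and it either returns to $x$ — excluded on $\{\tau_y<\tau^+_x\}$ — or lands at some $u\in S\setminus E$. Writing $a_u:=\p^x(X_{\tau^+_{\hat S}}=u)=\tilde p(\cup E,u)$, the strong Markov property yields $\p^x(\tau_y<\tau^+_x)=\sum_u a_u\,\p^u(\tau_y<\tau_x)$ and, since from a state of $S\setminus E$ the collapsed chain runs like $p$ until it tries to step into $E$ (i.e.\ returns to $\tilde x$), $\tilde\p^{\tilde x}(\tau_y<\tau^+_{\tilde x})=\sum_u a_u\,\p^u(\tau_y<\tau_E)$. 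As $\{\tau_y<\tau_E\}\subseteq\{\tau_y<\tau_x\}$, this already gives $\tilde\p^{\tilde x}(\tau_y<\tau^+_{\tilde x})\precsim\p^x(\tau_y<\tau^+_x)$.

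The reverse inequality is the crux, and I expect it to be the main obstacle. The termwise comparison above is hopeless (a state $u$ that reaches $y$ only through $E$ has $\p^u(\tau_y<\tau_E)=0<\p^u(\tau_y<\tau_x)$), so one must argue at the level of the sums. The idea is a renewal/excursion count for the original chain run from $x$: between the moment it lands in $S\setminus E$ and the moment it hits $y$ it performs some number $N$ of excursions into $E\setminus\{x\}$; by the first estimate above each excursion returns to $x$ — killing $\{\tau_y<\tau^+_x\}$ — with probability at least $c$, independently of its entry point, so runs contributing to $\{\tau_y<\tau^+_x\}$ survive all their excursions with geometrically small probability, while the second estimate controls, uniformly in the entry point, how the re-emergence distribution of an excursion compares to the weights $(a_u)$ that already appear in $\tilde\p^{\tilde x}(\tau_y<\tau^+_{\tilde x})$. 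Organising this as a substochastic renewal identity on $S\setminus E$ with row sums $\le1-c$ and weighting the iterates by $(a_u)$ should produce a bound of the form $\p^x(\tau_y<\tau^+_x)\le c^{-1}\,\tilde\p^{\tilde x}(\tau_y<\tau^+_{\tilde x})$, completing the proof. Making the geometric bookkeeping honest — in particular obtaining a constant that does not degrade as $\e\to0$, since the naive "sup" bounds it produces are too weak once one reweights by $(a_u)$ — is the delicate part; everything else is the bookkeeping of strong-Markov decompositions already rehearsed for the first statement.
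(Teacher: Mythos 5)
Your plan follows essentially the same route as the paper's proof: the first conjunct via $\p^y(\tau_x<\tau_y)\le\p^y(\tau_E<\tau_y)=\tilde{\p}^y(\tau_{\tilde x}<\tau_y)$ together with a factor $c$ gained from a simple essential path inside $E$ to $x$, and the second conjunct via an excursion decomposition in which each re-entry into $E$ hits $x$ (killing the event $\{\tau_y<\tau^+_x\}$) with probability at least $c$, while the re-emergence law from any $w\in E$ is dominated by $c^{-1}$ times the one from $x$. The only step you leave open is the reverse inequality of the second conjunct, which you flag as possibly delicate; it is not, and the two estimates you already isolated are exactly what is needed. Concretely, with your $\hat S=(S\setminus E)\cup\{x\}$, set $K(w,u):=\p^w(X_{\tau_{\hat S}}=u)$ for $w\in E\setminus\{x\}$ and $u\in S\setminus E$, and $Q(u,w):=\p^u(X_{\tau_E}=w,\ \tau_E<\tau_y)$ restricted to $w\neq x$; your renewal identity becomes $\p^x(\tau_y<\tau^+_x)=\sum_{n\ge 0}\sum_u a_u\,[(QK)^n h_0](u)$ with $h_0(u)=\p^u(\tau_y<\tau_E)$. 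A single application of your estimate $K(w,u)\le c^{-1}a_u$ bounds $[Kh_0](w)$ by $c^{-1}\sum_u a_u h_0(u)$ \emph{uniformly} in $w$ -- this is precisely what defuses your worry about reweighting by $(a_u)$ -- and the remaining operators are then controlled by the crude row-sum bound $\sum_u K(w,u)=\p^w(\tau_{S\setminus E}<\tau_x)\le 1-c$, so the series is geometric and yields $\p^x(\tau_y<\tau^+_x)\le(1+c^{-2})\,\tilde{\p}^{\tilde x}(\tau_y<\tau^+_{\tilde x})$ with a constant depending only on the $c$ of Assumption~\ref{Assum2}, hence not degrading as $\e\to 0$. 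The paper carries out the same bookkeeping with the stopping times $\tau^*_{E,n}$ (the $n$-th re-entry into $E$ after an exit), proving $\p^x(\tau^*_{E,n}<\tau_x)\le(1-c)^n$ and $\max_{z\in E}\p^z(\tau_y<\tau_x,\tau_y<\tau^*_{E,1})\le c^{-1}\p^x(\tau_y<\tau_x,\tau_y<\tau^*_{E,1})$, and summing the geometric series to obtain the bound $c^{-2}\,\tilde{\p}^{\cup E}(\tau_y<\tau_{\cup E})$; so once you write out the sum, your argument and theirs coincide.
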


\begin{proposition}\label{prop:essential-trans}
Let a perturbation $p$ with state space $S$ satisfy Assumption~\ref{Assum1}, and let $x$ be in an essential class $E$.
\begin{enumerate}
\item\label{prop:essential-trans3} Let $\tilde{p}$ be the chain after the essential collapse of $p$ around $x$. Let $\mu$ ($\tilde{\mu}$) be a stationary distribution map of $p$ ($\tilde{p}$). There exists a stationary distribution map $\tilde{\mu}$ for $\tilde{p}$ ($\mu$ for $p$) such that $\tilde{\mu}(\cup E) \cong \mu(x)$ and $\tilde{\mu}(y) \cong \mu(y)$ for all $y\in S\setminus E$.
\item\label{prop:essential-trans4} A state $y\in S$ is stable for $p$ iff either $y\in E$ and $\cup E$ is stable for $\kappa(p,x)$, or $y\notin E$ and $y$ is stable for $\kappa(p,x)$.
\end{enumerate}
\end{proposition}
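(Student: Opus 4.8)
The plan is to prove part~\ref{prop:essential-trans3} first and then read off part~\ref{prop:essential-trans4} from it together with Proposition~\ref{prop:essential-graph}.\ref{prop:essential-graph2}. For that last deduction the only extra fact needed is that $\cong$ preserves the property ``$\liminf_{\e\to 0}(\cdot) > 0$'': if $f\cong g$ and $\liminf g = 0$, then $f\le b\cdot g$ near $0$ forces $f\to 0$ along a sequence realising $\liminf g = 0$, so $\liminf f = 0$. Hence, if $y$ is stable for $p$ via a stationary-distribution map $\mu$: when $y\notin E$, part~\ref{prop:essential-trans3} produces $\tilde\mu$ for $\kappa(p,x)$ with $\tilde\mu(y)\cong\mu(y)$, so $\tilde\mu$ witnesses stability of $y$; when $y\in E$, Proposition~\ref{prop:essential-graph}.\ref{prop:essential-graph2} gives $\mu(x)\cong\mu(y)$, and part~\ref{prop:essential-trans3} gives $\tilde\mu(\cup E)\cong\mu(x)\cong\mu(y)$, so $\cup E$ is stable for $\kappa(p,x)$. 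The two converse implications are symmetric, using the other half of part~\ref{prop:essential-trans3}.

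The substance is part~\ref{prop:essential-trans3}. I would translate stationarity into dynamics via Lemma~\ref{lem:hsr}: a distribution $\nu$ is stationary for a finite chain iff its support consists of essential states and $\nu(a)\p^a(\tau^+_b<\tau^+_a) = \nu(b)\p^b(\tau^+_a<\tau^+_b)$ for all $a,b$. Writing $\tilde p := \kappa(p,x)$ and $\tilde x := \cup E$, the guiding idea is that the collapse changes the ``edge weights'' $\p^a(\tau^+_b<\tau^+_a)$ only up to $\cong$. For a pair consisting of $\tilde x$ and an outside state, this is exactly the comparison supplied by Lemma~\ref{lem:preserve-stable}. For a pair $y,z$ of outside states, the collapse merely re-routes the excursions of the walk through $E$; by Observation~\ref{obs:inf-bound} / Assumption~\ref{Assum2}, once such an excursion enters $E$ it reaches any prescribed state of $E$ before leaving $E$ with probability bounded away from $0$ (and bounded by $1$ trivially), which yields $\p^y(\tau_z<\tau^+_y)\cong\tilde\p^y(\tau_z<\tau^+_y)$ — the same estimate that underlies Lemma~\ref{lem:preserve-stable}, with the exit of $E$ replaced by the outside pair; alternatively one first passes to the state-deleted chain of Lemma~\ref{lem:state-del} on $(S\setminus E)\cup\{x\}$ and then compares it with $\tilde p$ edgewise up to $\cong$.

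Given a stationary map $\mu$ for $p$, I would take the candidate $\tilde\mu_0$ with $\tilde\mu_0(\tilde x) := \sum_{z\in E}\mu(z)$ and $\tilde\mu_0(y) := \mu(y)$ for $y\in S\setminus E$. It is a genuine distribution, one checks its support consists of essential states of $\tilde p$, and feeding the weight comparisons above into the balance equations of Lemma~\ref{lem:hsr} (for $p$) shows that $\tilde\mu_0$ satisfies those equations for $\tilde p$ up to $\cong$. By Proposition~\ref{prop:essential-graph}.\ref{prop:essential-graph2}, $\sum_{z\in E}\mu(z)\cong\mu(x)$, so $\tilde\mu_0(\tilde x)\cong\mu(x)$ and $\tilde\mu_0(y)\cong\mu(y)$, i.e.\ the asymptotics are already correct; it remains to replace $\tilde\mu_0$ by a genuine stationary map $\tilde\mu$ for $\tilde p$ that is still $\cong$-close to it. For this I would argue that on each essential class of $\tilde p$ the balance system is irreducible and hence pins down the stationary weights up to a single common scalar, a freedom absorbed by $\cong$; so the genuine stationary map obtained by solving, class by class, is $\cong$-equal to $\tilde\mu_0$. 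The converse direction (from $\tilde\mu$ for $\tilde p$ to $\mu$ for $p$) is dual: one unfolds $\tilde x$ back into $E$, distributing the mass $\tilde\mu(\tilde x)$ over $E$ according to the chain's behaviour inside $E$ — non-degenerate by Assumption~\ref{Assum2} — and re-verifies the balance equations with the same $\cong$-estimates.

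The step I expect to be the main obstacle is precisely the passage from ``stationary up to $\cong$'' to a genuine stationary map with $\cong$-correct weights: the essential collapse is engineered only to preserve the dynamics up to $\cong$ (Lemma~\ref{lem:preserve-stable}), whereas part~\ref{prop:essential-trans3} asserts honest stationary distributions matching on the nose up to $\cong$. Bridging this — and, alongside it, the bookkeeping needed so that the support hypothesis of Lemma~\ref{lem:hsr} and (in the route through) the essentiality hypothesis of Lemma~\ref{lem:state-del} are met, given that states may be transient for the perturbation yet recurrent for the individual chains $p^{(\e)}$ — is where the care goes; the rest is manipulation with the algebra of $\cong$ recalled in the appendix.
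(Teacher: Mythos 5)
Your overall architecture matches the paper's: part~\ref{prop:essential-trans4} is deduced from part~\ref{prop:essential-trans3} together with Proposition~\ref{prop:essential-graph}.\ref{prop:essential-graph2} exactly as you describe, and part~\ref{prop:essential-trans3} is proved by translating stationarity into dynamics via Lemma~\ref{lem:hsr} and feeding in Lemma~\ref{lem:preserve-stable}. But your execution of part~\ref{prop:essential-trans3} takes on work the paper avoids, and the two places where you take it on are precisely where your sketch is incomplete. First, you need the balance ``weights'' to be preserved up to $\cong$ for \emph{pairs of outside states} $y,z$, which Lemma~\ref{lem:preserve-stable} does not give; your proposed shortcut --- pass to the state-deleted chain on $(S\setminus E)\cup\{x\}$ and compare it with $\tilde p$ \emph{edgewise} --- fails: the deleted chain has $\hat p(y,z)\succsim p(y,w)\cdot\p^w(X_{\tau}=z)$ for $w\in E\setminus\{x\}$, which can be nonzero (indeed $\not\precsim p(y,z)$) even when $\tilde p(y,z)=p(y,z)=0$; the two chains only agree after composing two steps through $\cup E$, so you would have to redo the excursion-decomposition of Lemma~\ref{lem:preserve-stable} for this configuration. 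The paper sidesteps the outside-pair comparison entirely: in the irreducible case $\mu$ and $\tilde\mu$ are the \emph{unique} stationary distributions, so it suffices to control the ratios $\mu(y)/\mu(x)\cong\tilde\mu(y)/\tilde\mu(\cup E)$, all of which pass through $x$ and are covered by Lemma~\ref{lem:preserve-stable} as stated; summing these ratios over $y\in S\setminus E$ and using Proposition~\ref{prop:essential-graph}.\ref{prop:essential-graph2} to absorb $\sum_{z\in E\setminus\{x\}}\mu(z)$ then yields $\tilde\mu(\cup E)\cong\mu(x)$ and hence all the claimed equivalences, with no ``repair to genuine stationarity'' needed.

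Second, the step you yourself flag as the main obstacle --- upgrading a distribution that satisfies the balance equations up to $\cong$ into a genuine stationary map, class by class --- is fine for a fixed communication structure, but the essential classes of the chains $p^{(\e)}$ can vary with $\e$ when some transition maps vanish on part of $I$. The paper resolves this by a two-level induction: an outer induction on the number of non-zero transition maps having zeros, splitting $I$ along the support of such a map and invoking Lemma~\ref{lem:cong}.\ref{lem:cong8}-type recombination, and a base case that decomposes along the sink SCCs of the positivity graph and applies the irreducible case to each, recombining convex combinations. You acknowledge this bookkeeping but do not supply it; as written, your per-class scalar argument presupposes an $\e$-independent class structure. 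So: same key lemmas and same endgame, but the irreducible core and the reducible reduction both need to be reorganized along the paper's lines (ratios through $x$ only; support-splitting induction) before the argument closes.
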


\noindent By definition, collapsing an essential class preserves the structure of the perturbation outside of the class, so Proposition~\ref{prop:essential-trans} implies that the essential collapse commutes up to $\cong$. Especially, the order in which the essential collapses are performed is irrelevant as far as the stable states are concerned. 

\subsection{Transient deletion}\label{sect:td}

\noindent If all the essential classes of a perturbation are singletons, Observation~\ref{obs:ess-coll} says that the essential collapse is useless. If in addition the essential graph has arcs, there are transient states, and Definition~\ref{defn:td} below deletes them to shrink the perturbation further. 


\begin{definition}[Transient deletion]\label{defn:td}
Let a perturbation $p$ with state space $S$, transient states $T$, and singleton essential classes, satisfy Assumption~\ref{Assum1}. The function $\delta(p)$ over $S\setminus T$ is derived from $p$ by transient deletion: for all distinct $x,y\in S\setminus T$ let \[\delta(p)(x,y) := \p^x(X_{\tau^+_{S\setminus T}} = y)\]
\end{definition}

\begin{observation}\label{obs:trans-del}
$\delta(p)$ is again a perturbation, $\delta$ preserves irreducibility, and if all states are essential, $\delta(p) = p$.
\end{observation}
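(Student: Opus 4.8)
The plan is to verify the three assertions in turn, paralleling the proof of Observation~\ref{obs:ess-coll} for the essential collapse. Throughout write $R:=S\setminus T$ for the set of essential states. First I would note that $R$ is non-empty, since the essential classes are precisely the sinks of the condensation of the essential graph and a non-empty finite digraph always has such a sink; so $\delta(p)$ is a family, indexed by the same $I$, of matrices over the fixed finite state space $R$, each entry $\delta(p)(x,y)=\p^x(X_{\tau^+_R}=y)$ is a probability and hence lies in $[0,1]$, and, reading the diagonal as $\delta(p)(x,x):=1-\sum_{y\ne x}\delta(p)(x,y)$, the only substantive point is that the rows of each $\delta(p_\e)$ sum to $1$, i.e.\ $\p^x(\tau^+_R<\infty)=1$ for all $x\in R$ and $\e\in I$ (and then $\delta(p)(x,x)=\p^x(X_{\tau^+_R}=x)$, as for the essential collapse).

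For this almost-sure finiteness --- the one point that is not pure bookkeeping, and thus the main obstacle --- I would show the chain cannot be trapped in $T$. From any $t\in T$ the condensation of the essential graph contains a path down to a sink, hence there is a simple essential-graph path of length at most $|S|-1$ from $t$ into $R$; by Assumption~\ref{Assum2} (in force for all $\e\in I$ after the WLOG shrinking of $I$, cf.\ Observation~\ref{obs:inf-bound}) such a path has $p_\e$-probability exceeding $c$, so $\p^t(\tau_R\le|S|-1)>c$. Prepending the first step out of a state $x\in R$ upgrades this to $\p^x(\exists\,1\le n\le|S|:X_n\in R)\ge c$ for every $x\in S$ (taking $c\le1$ WLOG), and the Markov property then gives $\p^x(\tau^+_R>k|S|)\le(1-c)^k$ for all $k$, so $\tau^+_R<\infty$ almost surely (indeed with finite expectation). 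Hence every row of $\delta(p_\e)$ sums to $1$ and $\delta(p)$ is a perturbation.

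To see that $\delta$ preserves irreducibility, I would fix $\e$ with $p_\e$ irreducible and take distinct $x,y\in R$, pick a positive-probability $p_\e$-path $x=s_0,\dots,s_m=y$, and let $0=i_0<i_1<\dots<i_\ell=m$ enumerate the indices with $s_{i_j}\in R$ (so $\ell\ge1$ since $x\ne y$). For each $j$ every state strictly between $s_{i_j}$ and $s_{i_{j+1}}$ lies in $T$, so the chain from $s_{i_j}$ follows that segment without meeting $R$ before time $i_{j+1}-i_j$, whence $\delta(p_\e)(s_{i_j},s_{i_{j+1}})\ge\prod_{k=i_j}^{i_{j+1}-1}p_\e(s_k,s_{k+1})>0$; deleting consecutive repetitions from $s_{i_0},\dots,s_{i_\ell}$ then gives a positive-probability $\delta(p_\e)$-path from $x$ to $y$. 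Finally, if all states are essential then $T=\emptyset$, so $R=S$ and $\tau^+_R=1$ from every starting state, whence $\delta(p)(x,y)=\p^x(X_1=y)=p(x,y)$ for all $x,y$, i.e.\ $\delta(p)=p$.
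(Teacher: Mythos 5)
The paper states Observation~\ref{obs:trans-del} without proof, so there is no official argument to compare against; your write-up is a correct filling-in of the omitted verification and matches the spirit of the surrounding text. You correctly isolate the one non-routine point, namely that $\p^x_\e(\tau^+_{S\setminus T}<\infty)=1$ is needed for the rows of $\delta(p)_\e$ to sum to $1$, and that this genuinely uses the WLOG restriction of $I$ under which Assumption~\ref{Assum2} holds (for an $\e$ where an essential-graph arc had probability zero, the chain could in principle be trapped in $T$); your geometric-tail bound is exactly the argument of the paper's Lemma~\ref{lem:et}, transposed from an essential class to the complement of $T$, and it relies on the correct structural fact that every transient class has an essential-graph path into some sink SCC. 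The irreducibility and $T=\emptyset$ parts are routine and handled correctly.
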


\noindent For example, in Figure~\ref{fig:stable1} the essential classes are $\{x\}$ and $\{y\}$, $z$ is transient, and the transient deletion yields Figure~\ref{fig:trans-del4}. Also, in Figure~\ref{fig:trans-del1}, the essential classes are $\{x\}$, $\{y\}$, and $\{z\}$, the transient states are nameless, and the transient deletion yields Figure~\ref{fig:trans-del2}.

\begin{figure}
\centering
\begin{subfigure}[b]{0.25\textwidth}
\begin{tikzpicture}[shorten >=1pt,node distance=1.4cm, auto]
  \node[state] (x) {$x$};
  \node[state] (y) [right of = x] {$y$};
  \node[state] (z) [right of = y] {$z$};
  \node[state] (x') [above of = x] {};
  \node[state] (y') [right of = x'] {};
  \node[state] (z') [right of = y'] {};
  
\path[->] (x') edge node [left] {$\frac{1}{2}$} (x)
	 (y') edge node {$\frac{1}{2}$} (y)
	 (z') edge [bend left] node {$\frac{1}{2}$} (z)
	 (z') edge node [above] {$\frac{1}{2}$} (y')
	  (x) edge node [above] {$\e^2$} (y)
	  (z) edge [bend left] node [below] {$\e$} (x)	  
	  (x) edge[bend right] node [right] {$\e$} (x')
	  (z) edge [bend left] node [right] {$\e^4$} (z')
  	  (y) edge node [above] {$\e^2$} (z');

\path[<->]  (x') edge node {$\frac{1}{2}$} (y');
 \end{tikzpicture}
 \caption{}
 \label{fig:trans-del1}
\end{subfigure}
\begin{subfigure}[b]{0.2\textwidth}
\begin{tikzpicture}[shorten >=1pt,node distance=1.7cm, auto]
  \node[state] (x) {$x$};
  \node[state] (y) [right of = x] {$y$};
  \node[state] (z) [above of = y] {$z$};
  
\path[->]  (x) edge [bend left] node [below] {$\e^2 + \frac{\e}{3}$} (y)
 	  (z) edge [bend right] node [above left] {$\e + \frac{\e^4}{6}$} (x)	  
	  (y) edge [bend left] node [below] {$\frac{\e^2}{6}$} (x)
	  (z) edge node [left] {$\frac{\e^4}{3}$} (y)
  	  (y) edge [bend right] node [right] {$\frac{\e^2}{2}$} (z);
 \end{tikzpicture}
 \caption{}
 \label{fig:trans-del2}
\end{subfigure}
\begin{subfigure}[b]{0.2\textwidth}
\begin{tikzpicture}[shorten >=1pt,node distance=1.7cm, auto]
  \node[state] (x) {$x$};
  \node[state] (y) [right of = x] {$y$};
  \node[state] (z) [above of = y] {$z$};
  
\path[->]  (x) edge [bend right] node [below] {$\max(\e^2,\frac{\e}{4})$} (y)
 	  (z) edge [bend right] node [above left] {$\e$} (x)	  
	  (y) edge [bend right] node [below] {$\frac{\e^2}{8}$} (x)
	  (z) edge node [left] {$\frac{\e^4}{4}$} (y)
  	  (y) edge [bend right] node [right] {$\frac{\e^2}{2}$} (z);
 \end{tikzpicture}
 \caption{}
 \label{fig:trans-del3}
\end{subfigure}
\begin{subfigure}[b]{0.1\textwidth}
\begin{tikzpicture}[shorten >=1pt,node distance=1.5cm, auto]
  \node[state] (x) {$x$};
  \node[state] (y) [below of = x] {$y$};

\path[->] (x) edge [bend left] node {$\e$} (y)		
		(y) edge  [bend left] node  {$\e^2$} (x);
 \end{tikzpicture}
  \caption{}
  \label{fig:trans-del4}
 \end{subfigure}
\begin{subfigure}[b]{0.15\textwidth}
\begin{tikzpicture}[shorten >=1pt,node distance=2cm, auto]
  \node[state] (x) {$x$};
  \node[state] (y) [right of = x] {$y$};

\path[->] (x) edge [bend left] node {$(2^\e-1)\frac{1+\cos(\e^{-1})}{2}$} (y)		
		(y) edge  [bend left] node  {$2^\e-1$} (x);
 \end{tikzpicture}
  \caption{}
  \label{fig:trans-del5}
 \end{subfigure}

 \caption{Transient deletion (mainly)}
 \label{fig:transdel}
 \end{figure}
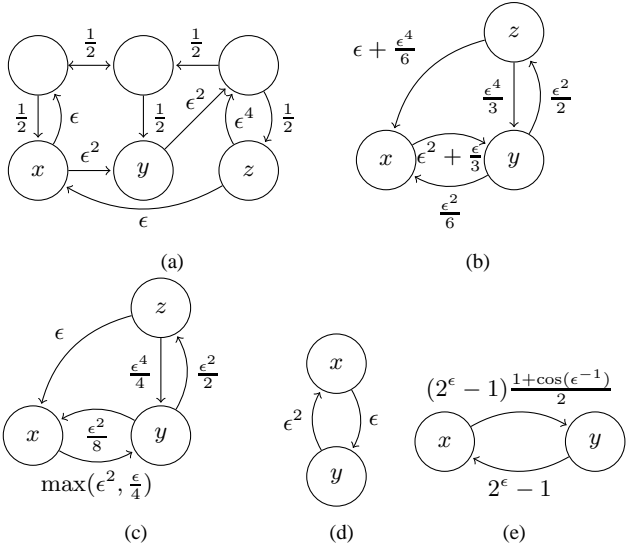

The transient deletion is useful thanks to Proposition~\ref{prop:trans-del} below, whose proof relies on Lemmas~\ref{lem:state-del}.\ref{lem:state-del2} and \ref{lem:ess-weight}.

\begin{proposition}\label{prop:trans-del}
If a perturbation $p$ satisfy Assumption~\ref {Assum1} and has singleton essential classes, $p$ and $\delta(p)$ have the same stable states.
\end{proposition}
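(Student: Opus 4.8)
The plan is to compare, for each small $\e$, the stationary distributions of $p_\e$ with those of $\delta(p)_\e$, and then let $\e\to 0$. First I would dispose of the transient states: by Proposition~\ref{prop:essential-graph}.\ref{prop:essential-graph1} every transient state $y$ satisfies $\liminf_{\e\to 0}\mu_\e(y)=0$ for every stationary distribution map $\mu$, so no transient state is stable for $p$; since $\delta(p)$ is obtained precisely by discarding $T$, it remains to prove that every $x\in S\setminus T$ is stable for $p$ iff it is stable for $\delta(p)$. I would also assume Assumption~\ref{Assum2} throughout, which is harmless here because it only constrains a neighbourhood of $0$ in $I$ whereas stability and $\delta(p)$ depend only on the germ of the data at $0$; this makes Lemma~\ref{lem:ess-weight} available with a constant $c>0$.

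The heart of the argument is the following correspondence, for all small $\e$: every stationary distribution $\mu_\e$ of $p_\e$ admits a stationary distribution $\tilde\mu_\e$ of $\delta(p)_\e$ with $\mu_\e(x)=\tilde\mu_\e(x)\cdot\sum_{y\in S\setminus T}\mu_\e(y)$ for all $x\in S\setminus T$, and symmetrically. This is exactly what Lemma~\ref{lem:state-del}.\ref{lem:state-del2} yields for the chain $p_\e$ with $\tilde S:=S\setminus T$, since $\delta(p)_\e(x,y)=\p^x(X_{\tau^+_{S\setminus T}}=y)$ is the chain $\tilde p$ of that lemma (and $\tau^+_{S\setminus T}<\infty$ almost surely because $\delta(p)$ is a perturbation, Observation~\ref{obs:trans-del}). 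Next, Lemma~\ref{lem:ess-weight} bounds the renormalisation factor: $\tfrac{c}{c+|S|}\le\sum_{y\in S\setminus T}\mu_\e(y)\le 1$, uniformly in $\e$, so for a matched pair $(\mu_\e,\tilde\mu_\e)$ and $x\in S\setminus T$ we have $\tfrac{c}{c+|S|}\,\tilde\mu_\e(x)\le\mu_\e(x)\le\tilde\mu_\e(x)$. Consequently, if $\mu$ witnesses the stability of $x$ for $p$ then the pointwise-derived family $(\tilde\mu_\e)_\e$ witnesses it for $\delta(p)$ because $\tilde\mu_\e(x)\ge\mu_\e(x)$; and if $\tilde\mu$ witnesses the stability of $x$ for $\delta(p)$ then the derived $(\mu_\e)_\e$ witnesses it for $p$ because $\mu_\e(x)\ge\tfrac{c}{c+|S|}\tilde\mu_\e(x)$ — this is the direction that genuinely uses Lemma~\ref{lem:ess-weight}. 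Together with the first paragraph, this gives that $p$ and $\delta(p)$ have the same stable states.

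The main obstacle is legitimising the appeal to Lemma~\ref{lem:state-del}.\ref{lem:state-del2}, which requires $S\setminus T$ to consist of \emph{Markov-chain-essential} states of $p_\e$: a state can be essential in the essential graph yet transient in $p_\e$ (it suffices that it emits only vanishing-probability arcs, all leaking towards another essential class). I would route around this by splitting $p_\e$ into its Markov-chain-essential classes $C$; each is closed and irreducible, so $p_\e$ restricted to $C$ is irreducible and Lemma~\ref{lem:state-del}.\ref{lem:state-del2} does apply to it with $\tilde S:=C\cap(S\setminus T)$, which is nonempty since every transient state of the essential graph reaches a bottom strongly connected component along essential-graph arcs of probability bounded below (Observation~\ref{obs:inf-bound}) and a closed $C$ must therefore meet $S\setminus T$. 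One checks that $\delta(p)_\e$ restricted to $C\cap(S\setminus T)$ is the transient deletion of $p_\e|_C$, that the Markov-chain-essential classes of $p_\e$ and of $\delta(p)_\e$ are in bijection via intersection with $S\setminus T$, and that a state of $S\setminus T$ that is $p_\e$-transient is also $\delta(p)_\e$-transient and hence of zero weight in every stationary distribution on both sides; assembling the per-class identities (choosing the mixing weights of $\tilde\mu_\e$ over essential classes to match those of $\mu_\e$) then delivers the global relation. Carrying out this bijection of essential classes and the identification of the two transient deletions on each class is the step demanding the most care.
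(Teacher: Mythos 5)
Your argument is correct and rests on the same two pillars as the paper's proof: Lemma~\ref{lem:state-del}.\ref{lem:state-del2} applied to each irreducible closed class, and Lemma~\ref{lem:ess-weight} to bound the renormalising mass $\sum_{y\in S\setminus T}\mu_\e(y)$ below by $\tfrac{c}{c+|S|}$ uniformly in $\e$ — which is indeed the only direction that needs that lemma. You also correctly identify the one real obstacle (states of $S\setminus T$ need not be Markov-chain-essential for $p_\e$), and your per-class repair is sound. The genuine divergence is in how the reducibility of $p_\e$ is organised: the paper first performs an induction on the number of non-zero transition maps that have zeros, splitting $I$ along supports via Lemma~\ref{lem:cong}.\ref{lem:cong8}, so as to reach a base case where the positive-support graph — and hence the decomposition into sink SCCs — is the same for every $\e$; you instead work pointwise in $\e$ with the closed classes of $p_\e$, which may vary with $\e$. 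That is legitimate here, because a witness family of stationary distributions may be chosen independently for each $\e$ and your two-sided bounds $\tfrac{c}{c+|S|}\,\tilde\mu_\e(x)\le\mu_\e(x)\le\tilde\mu_\e(x)$ are uniform in $\e$; it buys a shorter proof with no induction on supports, at the price of having to verify by hand the bijection between closed classes of $p_\e$ and of $\delta(p)_\e$, which the paper's base case gets for free from the irreducibility of each $p\mid_{E'_i\times E'_i}$. One small imprecision: if you literally reuse the mixing weights of $\mu_\e$ for $\tilde\mu_\e$, the exact identity $\mu_\e(x)=\tilde\mu_\e(x)\sum_{y\in S\setminus T}\mu_\e(y)$ fails when the classes carry different masses $s_C=\sum_{y\in C\cap(S\setminus T)}\mu_{C,\e}(y)$ (you need weights proportional to $\alpha_C s_C$ for that); but since Lemma~\ref{lem:ess-weight} gives $s_C\ge\tfrac{c}{c+|S|}$ for every class, the two-sided bounds — which are all you use — survive either choice.
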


Like the essential collapse, the transient deletion is defined \textit{via} the dynamics and is hard to compute. Like Lemma~\ref{lem:p-cong-max} did for the essential collapse, Lemma~\ref{lem:singleton-essential} approximates the transient deletion by an expression that is easy to compute.

\begin{lemma}\label{lem:singleton-essential}
If a perturbation $p$ with state space $S$ and transient states $T$ satisfies Assumption~\ref {Assum1} and has singleton essential classes,
\[\p^x(X_{\tau^+_{S\setminus T}} = y) \cong \max\{p(\gamma) : \gamma\in \Gamma_T(x,y)\} \mbox{\quad for all } x,y\in S\setminus T.\]
\end{lemma}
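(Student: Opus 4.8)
The plan is to rewrite $\p^x(X_{\tau^+_{S\setminus T}}=y)$ as a sum over paths and squeeze it between $\max\{p_\e(\gamma):\gamma\in\Gamma_T(x,y)\}$ and a fixed multiple thereof. Because $x\in S\setminus T$ and the chain re-enters $S\setminus T$ for the first time (after time $0$) at $y$, having visited only states of $T$ in between, we have $\p^x(X_{\tau^+_{S\setminus T}}=y)=\sum_\gamma p_\e(\gamma)$, the sum ranging over all finite paths $\gamma=(x,t_1,\dots,t_\ell,y)$ with $\ell\ge0$ and $t_1,\dots,t_\ell\in T$. Every $\gamma_0\in\Gamma_T(x,y)$ occurs among these $\gamma$, and distinct simple paths prescribe disjoint trajectories up to time $\tau^+_{S\setminus T}$; hence $\p^x(X_{\tau^+_{S\setminus T}}=y)\ge\sum_{\gamma_0\in\Gamma_T(x,y)}p_\e(\gamma_0)\ge\max\{p_\e(\gamma):\gamma\in\Gamma_T(x,y)\}$, which already gives $\max\{p(\gamma):\gamma\in\Gamma_T(x,y)\}\precsim\p^x(X_{\tau^+_{S\setminus T}}=y)$ with no constraint on $\e$.

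For the reverse inequality I would group the paths $\gamma$ by their chronological loop erasure. Each $\gamma$ decomposes uniquely as a simple path $\gamma_0\in\Gamma_T(x,y)$ carrying, attached at each transient vertex $v$ of $\gamma_0$, a closed path $L_v$ that lies inside $T$ (and avoids the earlier transient vertices of $\gamma_0$), with $p_\e(\gamma)=p_\e(\gamma_0)\prod_v p_\e(L_v)$; the endpoints $x,y\in S\setminus T$ carry no loop, since the excursion does not return to $S\setminus T$ before $y$ (when $x=y$ one first peels off the initial and final edge and loop-erases the middle part, which lies in $T$). Enlarging each loop family to all closed $T$-paths based at $v$ gives
\[
\p^x(X_{\tau^+_{S\setminus T}}=y)\ \le\ \sum_{\gamma_0\in\Gamma_T(x,y)}p_\e(\gamma_0)\prod_{v}G_v,\qquad G_v:=\textstyle\sum_{L\text{ closed at }v\text{ in }T}p_\e(L),
\]
the product over the transient vertices $v$ of $\gamma_0$. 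A closed $T$-path at $v$ is a concatenation of $v$-to-$v$ excursions inside $T$, so $G_v=\sum_{n\ge0}q_v^{\,n}$ with $q_v=\p^v(\tau^+_v<\tau_{S\setminus T})$.

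The point now is to bound $G_v$ uniformly, and this is where Assumption~\ref{Assum2} and the singleton hypothesis enter. As $v$ is transient, its component in the essential graph is not a sink, so there is a repetition-free essential-graph path from $v$ to an essential class; this path reaches $S\setminus T$ without revisiting $v$, and by Assumption~\ref{Assum2} its probability exceeds $c$ for all small $\e$, which forces $q_v\le1-c$ and $G_v\le c^{-1}$. Since a simple path has at most $|T|$ transient vertices and $|\Gamma_T(x,y)|$ is a finite constant, $\p^x(X_{\tau^+_{S\setminus T}}=y)\le c^{-|T|}|\Gamma_T(x,y)|\max\{p_\e(\gamma):\gamma\in\Gamma_T(x,y)\}$ for all small $\e$, which together with the first inequality yields $\cong$.

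The crux is the reverse inequality: the sum defining $\p^x(X_{\tau^+_{S\setminus T}}=y)$ has infinitely many terms, and a single loop inside $T$ may have probability $\cong1$ (a transient component can consist of essential-graph arcs), so loop contributions do not decay term by term. What saves the argument is that from every transient state the chain escapes to $S\setminus T$ before looping back with probability bounded below uniformly in $\e$, so the geometric series $G_v$ is uniformly bounded. The remaining routine but slightly fussy point is justifying the chronological-loop-erasure decomposition as a bijection onto (simple $T$-path, compatible tuple of $T$-loops) data, including the degenerate case $x=y$.
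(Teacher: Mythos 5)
Your proof is correct and follows essentially the same route as the paper's: the lower bound by summing the disjoint events indexed by simple paths, and the upper bound by decomposing each trajectory into its loop erasure in $\Gamma_T(x,y)$ times loop factors, each bounded by $c^{-1}$ because every transient state escapes to $S\setminus T$ along a simple essential-graph path of probability at least $c$ (Observation~\ref{obs:inf-bound}/Assumption~\ref{Assum2}, valid up to focusing). The only cosmetic difference is that the paper packages the path decomposition as the exact identity of Lemma~\ref{lem:escape-decomp} (with denominators $1-\p^{\gamma_i}(\text{return})$), whereas you re-derive it via chronological loop erasure and then over-count the loops, which is harmless for the upper bound.
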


\noindent \textit{E.g.}, Figure~\ref{fig:trans-del1} yields Figure~\ref{fig:trans-del3} without computing Figure~\ref{fig:trans-del2}. Note that $\max(\e^2,\frac{\e}{4})$ in Figure~\ref{fig:trans-del3} may be simplified into $\e$ by Lemma~\ref{lem:congp-congmu}.

\subsection{Outgoing scaling and existence of stable states}\label{sect:os}

\noindent If the essential graph has no arc, the essential collapse and the transient deletion are useless to compute the stable states. This section says how to transform a non-zero perturbation with empty (\textit{i.e.} totally disconnected) essential graph into a perturbation with the same stable states but a non-empty essential graph, so that collapse or deletion may be applied. Roughly speaking, it is done by speeding up time until a first non-infinitesimal flow is observable between different states, \textit{i.e.} until the new essential graph has arcs.

Towards it, the \emph{ordered division} is defined in Definition~\ref{defn:div-fct}. It allows us to divide a function by a function with zeros by returning a default value in the zero case. It is named ordered because we will "divide" $f$ by $g$ only if $f \precsim g$, so that only $0$ may be "divided" by $0$. Then Observation~\ref{obs:prec-div} further justifies the terminology.

\begin{definition}[Ordered division]\label{defn:div-fct}
For $f,g: I \to [0,1]$ and $n > 1$ let us define $(f \div_n g) : I \to [0,1]$ by $(f\div_n g)(x) := \frac{f(x)}{g(x)}$ if $0 < g(x)$ and otherwise  $(f\div_n g)(x) := \frac{1}{n}$.
\end{definition}

\begin{observation}\label{obs:prec-div}
$(f \div_n g)\cdot g = f$ for all $n$ and $f,g: I \to [0,1]$ such that $f \precsim g$.
\end{observation}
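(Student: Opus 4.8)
The plan is to prove the identity pointwise: fixing $\e'\in I$, I would split into the two cases that already appear in Definition~\ref{defn:div-fct}, according to whether $g(\e')$ is positive or zero.

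If $g(\e')>0$, then by definition $(f\div_n g)(\e')=f(\e')/g(\e')$, hence $(f\div_n g)(\e')\cdot g(\e')=f(\e')$; note that this case uses neither $n>1$ nor $f\precsim g$.

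If $g(\e')=0$, then $(f\div_n g)(\e')=1/n$, so $(f\div_n g)(\e')\cdot g(\e')=(1/n)\cdot 0=0$, and it remains to check that $f(\e')=0$. Here is the only place where $f\precsim g$ enters: by Definition~\ref{def:cong} there are positive $b$ and $\e$ with $f(\e'')\le b\cdot g(\e'')$ for all $\e''<\e$, so for such an $\e''$ with $g(\e'')=0$ we get $f(\e'')\le 0$, whence $f(\e'')=0$ because $f$ is $[0,1]$-valued. Thus the equality holds for all $\e'<\e$, i.e. on a neighbourhood of $0$ in $I$, which is what I expect the observation should be read to assert, consistently with the running convention that $I$ may be shrunk towards $0$ (as already done for Assumption~\ref{Assum2}); a literal identity on all of $I$ can fail, since $\precsim$ constrains only the behaviour near $0$ and $g$ might vanish at a point where $f$ does not.

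I do not anticipate any genuine obstacle: the argument is the above two-case unfolding, and its only substantive content is the elementary implication ``$g(\e')=0 \Rightarrow f(\e')=0$ for $\e'$ small'', which is immediate from $f\precsim g$ together with $f\ge 0$. The single point requiring care is the scope of ``$=$'': treating it as an identity valid for all sufficiently small $\e'$ (equivalently, up to restricting $I$) keeps the statement true and is all that the later uses of $\div_n$ need.
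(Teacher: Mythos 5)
Your proof is correct and follows essentially the same two-case pointwise argument as the paper's own one-line proof (split on whether $g(\e')$ vanishes). You are in fact more careful than the paper about the one subtle point, namely that $f\precsim g$ only forces $f(\e')=0$ where $g(\e')=0$ for $\e'$ near $0$, so the identity should be read up to restricting $I$ to a neighbourhood of $0$, as you note.
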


\begin{definition}[Outgoing scaling]\label{defn:os}
Let a perturbation $p$ with state space $S$ satisfy Assumption~\ref{Assum1}, let $m := |S|\cdot \max\{p(z,t)\,\mid\, z,t\in S \wedge z \neq t\}$, and let us define the following.

\begin{itemize}
\item $\sigma(p)(x,y) := p(x,y) \div_{|S|} m$ for all $x \neq y$
\item $\sigma(p)(x,x) := (p(x,x) + m - 1)\div_{|S|} m$.
\end{itemize}
\end{definition}

\noindent For example, Figure~\ref{fig:stable2} satisfies Assumption~\ref{Assum1} and its essential graph is empty, \textit{i.e.} totally disconnected. Applying outgoing scaling to it yields Figure~\ref{fig:out-scale2}, which satisfies Assumption~\ref{Assum1} and whose essential graph has two arcs. Note that collapsing around $x$ or $y$ in Figure~\ref{fig:stable2} has no effect, but in Figure~\ref{fig:out-scale2} it yields a one-state perturbation. Also, Figure~\ref{fig:out-scale1} does not satisfy Assumption~\ref{Assum1} and its essential graph is empty. Applying outgoing scaling to it yields Figure~\ref{fig:out-scale3},  which does not satisfy Assumption~\ref{Assum1} and whose essential graph has one arc. Applying it again to Figure~\ref{fig:out-scale3} would only divide the non-self-loop coefficients by $3$. More generally, Proposition~\ref{prop:ogs} below states how well the outgoing scaling behaves.

\begin{figure}
\centering
\begin{subfigure}[b]{0.45\textwidth}
\begin{tikzpicture}[shorten >=1pt,node distance=2.7cm, auto]
  \node[state] (x) {$x$};
  \node[state] (y) [right of = x] {$y$};
  \node[state] (z) [right of = y] {$z$};

\path[->] (x) edge [bend left] node [above] {$\e^2 \cdot \frac{1+\cos (\e^{-1})}{2}$} (y)		
		(y) edge [bend left] node [below] {$\e^3 \cdot \frac{1+\cos (\e^{-1})}{4}$} (x)
		edge [bend left] node [above] {$\e^4 \cdot \frac{(1+\cos (\e^{-1}))^2}{4}$} (z)
		(z) edge [bend left]node  [below] {$\e^4 \cdot \frac{1+\cos (\e^{-1})}{2}$} (y);
 \end{tikzpicture}
 \caption{}
 \label{fig:out-scale1}
\end{subfigure}
\begin{subfigure}[b]{0.2\textwidth}
\begin{tikzpicture}[shorten >=1pt,node distance=1.8cm, auto]
  \node[state] (x) {$x$};
  \node[state] (y) [right of = x] {$y$};

\path[->] (x) edge [bend right] node [below] {$\frac{2-\cos(\e^{-1})}{4 + 2|\cos(\e^{-1}|}$} (y)		
		(y) edge [bend right] node [above] {$\frac{2+\cos(\e^{-1})}{4 + 2|\cos(\e^{-1}|}$} (x) ;
 \end{tikzpicture}
 \caption{}
 \label{fig:out-scale2}
\end{subfigure}
\begin{subfigure}[b]{0.2\textwidth}
\begin{tikzpicture}[shorten >=1pt,node distance=1.8cm, auto]
  \node[state] (x) {$x$};
  \node[state] (y) [right of = x] {$y$};
    \node[state] (z) [right of = y] {$z$};

\path[->] (x) edge [bend left] node [above] {$\frac{1}{3}$} (y)		
		(y) edge [bend left] node [below] {$\frac{\e}{6}$} (x)
		edge [bend left] node [above] {$\e^2 \cdot \frac{1+\cos (\e^{-1})}{6}$} (z)
		(z) edge[bend left] node  [below] {$\frac{\e^2}{3}$} (y);
 \end{tikzpicture}
 \caption{}
 \label{fig:out-scale3}
 \end{subfigure}
 \caption{Outgoing scaling}
 \label{fig:outscale}
 \end{figure}
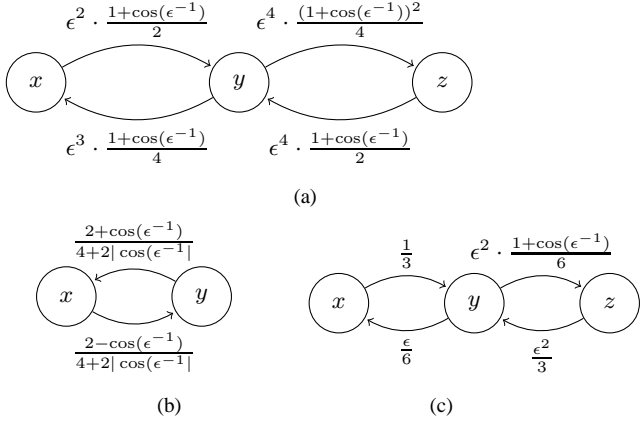

\begin{proposition}\label{prop:ogs}
\begin{enumerate}
\item\label{prop:ogs1} If a perturbation $p$ satisfies Assumption~\ref{Assum1}, so does $\sigma(p)$, and the essential graph of $\sigma(p)$ is non-empty .
\item\label{prop:ogs2} A state is stable for $p$ iff it is stable for $\sigma(p)$.
\end{enumerate}
\end{proposition}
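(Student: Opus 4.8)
The plan is to extract the algebraic skeleton of $\sigma(p)$ first and then read both parts off it. Write $m$ for the map $\epsilon\mapsto|S|\cdot\max\{p_\epsilon(z,t)\mid z\neq t\}$ (we may assume $|S|\geq 2$, the one-state case being trivial). Directly from Definition~\ref{defn:os} --- or via Observation~\ref{obs:prec-div}, using that $m(\epsilon)=0$ forces $p_\epsilon=I$ --- one gets: where $m(\epsilon)>0$, $\sigma(p)_\epsilon=I+\tfrac1{m(\epsilon)}(p_\epsilon-I)$, and where $m(\epsilon)=0$, $\sigma(p)_\epsilon$ is the uniform matrix $U$ (every entry $1/|S|$); equivalently $m\cdot(\sigma(p)-I)=p-I$ on all of $I$. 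The pointwise bounds $0\leq p_\epsilon(x,y)\leq m(\epsilon)$ for $x\neq y$ and $0\leq p_\epsilon(x,x)+m(\epsilon)-1\leq m(\epsilon)$ --- the left part of the second from $\sum_{y\neq x}p_\epsilon(x,y)\leq(|S|-1)\max\{p_\epsilon(z,t)\mid z\neq t\}\leq m(\epsilon)$ --- show every entry of $\sigma(p)_\epsilon$ lies in $[0,1]$, and summing over the second coordinate shows the rows sum to $1$; hence $\sigma(p)$ is again a perturbation.

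For part~\ref{prop:ogs1}, by Assumption~\ref{Assum1} the finitely many off-diagonal maps $p(z,t)$ have a $\precsim$-greatest element $p(z_0,t_0)$, so $m\cong\max\{p(z,t)\mid z\neq t\}\cong p(z_0,t_0)$. To see $\sigma(p)$ again satisfies Assumption~\ref{Assum1}, take two products $P=\prod_{i\leq a}\sigma(p)(x_i,y_i)$ and $Q=\prod_{j\leq b}\sigma(p)(x'_j,y'_j)$ of off-diagonal entries: on $\{m>0\}$, $P\cdot m^a=\prod_i p(x_i,y_i)$ and $Q\cdot m^b=\prod_j p(x'_j,y'_j)$, so comparing $P$ with $Q$ (equivalently $P\cdot m^{a+b}$ with $Q\cdot m^{a+b}$) amounts near $0$ to comparing $\bigl(\prod_i p(x_i,y_i)\bigr)m^b$ with $\bigl(\prod_j p(x'_j,y'_j)\bigr)m^a$, and replacing the powers of $m$ by $\cong$-equivalent powers of $p(z_0,t_0)$ lands both inside the multiplicative closure of $\{p(x,y)\mid x\neq y\}$, where Assumption~\ref{Assum1} delivers $\precsim$-comparability; on $\{m=0\}$, $P$ and $Q$ are positive constants, hence comparable. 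For non-emptiness of the essential graph, $\sigma(p)(z_0,t_0)\cong 1$: it never exceeds $1/|S|$, and near $0$ it is bounded below by $1/(|S|b)$, where $b$ is a constant with $p_\epsilon(z,t)\leq b\,p_\epsilon(z_0,t_0)$ for every pair --- on $\{m>0\}$ because there $p_\epsilon(z_0,t_0)>0$ (else every $p_\epsilon(z,t)=0$, i.e.\ $m(\epsilon)=0$) and $\sigma(p)_\epsilon(z_0,t_0)=p_\epsilon(z_0,t_0)/m(\epsilon)\geq1/(|S|b)$, on $\{m=0\}$ because there $\sigma(p)_\epsilon(z_0,t_0)=1/|S|$. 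Since every entry of $\sigma(p)$ is $\precsim 1\cong\sigma(p)(z_0,t_0)$, the arc $(z_0,t_0)$ belongs to the essential graph of $\sigma(p)$.

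For part~\ref{prop:ogs2}, the point is that $\sigma$ preserves stationary distributions essentially verbatim, so no $\cong$-reasoning is needed. On $\{m>0\}$, $\sigma(p)_\epsilon=I+\tfrac1{m(\epsilon)}(p_\epsilon-I)$ gives $\nu\sigma(p)_\epsilon=\nu+\tfrac1{m(\epsilon)}(\nu p_\epsilon-\nu)$ for any distribution $\nu$, so $\nu\sigma(p)_\epsilon=\nu\iff\nu p_\epsilon=\nu$: the chains $p_\epsilon$ and $\sigma(p)_\epsilon$ have exactly the same stationary distributions. On $\{m=0\}$, $p_\epsilon=I$ (every distribution is $p_\epsilon$-stationary) whereas $\sigma(p)_\epsilon=U$ has the uniform distribution as its unique stationary one. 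Consequently, (i) any stationary-distribution map $\tilde\mu$ for $\sigma(p)$ is also one for $p$ (pointwise in $\epsilon$), so a state stable for $\sigma(p)$ is stable for $p$; and (ii) given a stationary-distribution map $\mu$ for $p$, putting $\tilde\mu_\epsilon:=\mu_\epsilon$ where $m(\epsilon)>0$ and $\tilde\mu_\epsilon:=$ the uniform distribution where $m(\epsilon)=0$ gives a stationary-distribution map for $\sigma(p)$ with $\liminf_{\epsilon\to0}\tilde\mu_\epsilon(x)\geq\min\{\liminf_{\epsilon\to0}\mu_\epsilon(x),\,1/|S|\}$, so a state stable for $p$ is stable for $\sigma(p)$.

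The main obstacle I anticipate is the bookkeeping around $m$: pinning down the identity $m\cdot(\sigma(p)-I)=p-I$ uniformly over $I$ (especially where $m$ vanishes and the ordered division falls back to its default value $1/|S|$), and then discharging Assumption~\ref{Assum1} for $\sigma(p)$ by reducing arbitrary products of its off-diagonal entries to products of $p$-entries rescaled by powers of $m\cong p(z_0,t_0)$. Once that identity is in hand, part~\ref{prop:ogs2} is comparatively routine, precisely because stationary distributions transfer literally rather than merely up to $\cong$.
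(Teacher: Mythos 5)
Your proof is correct and follows essentially the same route as the paper's: the identity $\sigma(p)_\e = I + \tfrac{1}{m(\e)}(p_\e - I)$ on the support $J$ of $m$ (so stationary distributions transfer literally there), the uniform matrix off $J$ handled separately and recombined via Lemma~\ref{lem:cong}.\ref{lem:cong8}, and comparability of $\sigma(p)$-entries reduced to the multiplicative closure of $p$ using $m \cong p(z_0,t_0)$. If anything, you are more explicit than the paper on why arbitrary \emph{products} of off-diagonal entries of $\sigma(p)$ remain $\precsim$-comparable, which the paper only sketches.
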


\noindent The outgoing scaling divides the weights of the proper arcs by $m$, as if time were sped up by $m^{-1}$. The self-loops thus lose their meaning, but Proposition~\ref{prop:ogs} proves it harmless. Note that the self-loops are also ignored in Assumption~\ref{Assum1},  Lemma~\ref{lem:congp-congmu}, and Definition~\ref{defn:essential-class}.


Let us now describe a recursive procedure computing the stable states: if the perturbation is zero, all its states are stable; else, if the essential graph is empty, apply the outgoing scaling; else, apply one essential collapse or the transient deletion. This procedure is correct by Propositions~\ref{prop:ogs}.\ref{prop:ogs2}, \ref{prop:essential-trans}.\ref{prop:essential-trans4}, and \ref{prop:trans-del}, hence Theorem~\ref{thm:stable-states} below, which is the existential part of Theorem~\ref{thm:teaser}.

\begin{theorem}\label{thm:stable-states}
Let $p$ be a perturbation such that $f \precsim g$ or $g \precsim f$ for all $f$ and $g$ in the multiplicative closure of the $p(x,y)$ with $x \neq y$. Then $p$ has stable states.
\end{theorem}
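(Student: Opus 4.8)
The plan is to argue by strong induction on the number of states $n=|S|$, following exactly the recursive procedure sketched just before the statement, while making explicit the two things that sketch leaves implicit: that each reduction step preserves Assumption~\ref{Assum1} (so that the induction hypothesis applies to the smaller perturbation) and that the procedure terminates.

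First I would dispose of the degenerate/base case. If $p$ is the zero perturbation, i.e. $p(x,y)$ is the constant $0$ for every $x\neq y$, then each $p_\e$ is the identity matrix, every distribution is stationary, and for any fixed state $x$ the constant family defined by $\mu_\e(x):=1$ witnesses $\liminf_{\e\to0}\mu_\e(x)=1>0$; so every state is stable, and in particular this settles $n=1$ (where $p$ is vacuously zero). Hence assume from now on that $n\geq2$, that $p$ is non-zero, and that the theorem holds for every perturbation satisfying the hypothesis on fewer than $n$ states.

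Next I would split on the essential graph $G$ of $p$. If $G$ is empty, apply outgoing scaling: by Proposition~\ref{prop:ogs}.\ref{prop:ogs1}, $\sigma(p)$ again satisfies Assumption~\ref{Assum1} and has a non-empty essential graph, and by Proposition~\ref{prop:ogs}.\ref{prop:ogs2} it has exactly the same stable states as $p$; so it is enough to exhibit a stable state of $\sigma(p)$, which has the same $n$ states but now falls under the non-empty case. So assume $G$ is non-empty. If $G$ has a non-singleton essential class $E$, fix $x\in E$ and pass to $\tilde p:=\kappa(p,x)$, which has $n-|E|+1<n$ states and, by Proposition~\ref{prop:essential-trans}.\ref{prop:essential-trans4}, has a stable state iff $p$ does. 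Otherwise all essential classes of $G$ are singletons; since $G$ is non-empty it contains an arc $(x,y)$ with $x\neq y$, and then either $x$ and $y$ lie in a common strongly connected component, which is non-singleton and hence (essential classes here being singletons) a transient class, or they lie in different components, in which case $x$'s component has an outgoing arc and so is not a sink. Either way $G$ has a transient state, so $T\neq\emptyset$, and I pass to $\delta(p)$, which has $n-|T|<n$ states and, by Proposition~\ref{prop:trans-del}, has the same stable states as $p$. In both subcases the induction hypothesis finishes the proof, once the smaller perturbation is known to satisfy Assumption~\ref{Assum1}.

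That remaining point is the only real work, and I expect it to be the main (if modest) obstacle. The key elementary fact is that a pointwise maximum of finitely many maps that are pairwise $\precsim$-comparable is $\cong$-equivalent to a $\precsim$-largest one among them (take the largest of the constants witnessing the individual bounds). Together with Lemma~\ref{lem:p-cong-max} in the collapse case, and with Lemma~\ref{lem:singleton-essential} plus the observation that $p(\gamma)$ for a simple path $\gamma$ is a product of off-diagonal entries of $p$ in the deletion case, this shows that every off-diagonal entry of $\kappa(p,x)$, respectively $\delta(p)$, is $\cong$-equivalent to an element of the multiplicative closure $M$ of the off-diagonal entries of $p$. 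Since $\cong$ is compatible with multiplication and $\precsim$ respects $\cong$ (the algebraic properties recalled in the appendix), the multiplicative closure of the off-diagonal entries of the reduced perturbation is, entrywise, $\cong$-contained in $M$ and therefore inherits from $M$ the total preorder by $\precsim$; that is, Assumption~\ref{Assum1} is preserved. Applying the induction hypothesis to $\kappa(p,x)$ or $\delta(p)$ then produces a stable state there, and the cited propositions transport it back to a stable state of $p$, completing the induction.
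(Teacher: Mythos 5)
Your proof is correct and follows essentially the same route as the paper: the paper's own argument is exactly the recursive procedure it describes just before the theorem statement (zero perturbation as base case, otherwise outgoing scaling followed by a state-decreasing collapse or deletion, justified by Propositions~\ref{prop:ogs}, \ref{prop:essential-trans}, and \ref{prop:trans-del}), formalized via Lemma~\ref{lem:3t} which merely bundles the scaling, all collapses, and one deletion into a single state-reducing step. Your explicit verification that Assumption~\ref{Assum1} is preserved by $\kappa$ and $\delta$ (via Lemmas~\ref{lem:p-cong-max}, \ref{lem:singleton-essential}, and the compatibility of $\precsim$ with $\cong$) is a detail the paper leaves implicit, and your argument for it is sound.
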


\section{Abstract and quick algorithm}\label{sect:aqa}

\noindent The procedure described before Theorem~\ref{thm:stable-states} computes the stable states, but a very rough analysis of its algorithmic complexity shows that it runs in $O(n^7)$, where $n$ is the number of states. (A better analysis might find $O(n^5)$.) This bad complexity comes from the difficulty to analyze the procedure precisely and from some redundant operations done by the transformations, especially the successive essential collapses. Instead we will perform the successive collapses followed by one transient deletion as a single transformation. Applying alternately the outgoing scaling and the new transformation, both up to $\cong$, is the base of our algorithm. 

Section~\ref{sect:att} abstracts the relevant notions up to $\cong$ and gives useful algebraic properties that they satisfy. Based on these abstractions, Section~\ref{sect:algo} presents the algorithm (computing the stable states and more), its correctness, and its complexity in $O(n^3)$.

\subsection{Abstractions}\label{sect:att}

Ensuring that the essential collapse and the transient deletion can be safely performed up to $\cong$ is a straightforward sanity check, by Lemma~\ref{lem:congp-congmu}. However, the proof for the outgoing scaling involves a new algebraic structure to accommodate the ordered division, and handling the combination of the successive collapses and one deletion requires particular attention. It would have been cumbersome to define this combination directly \textit{via} the dynamics in Section~\ref{sect:pp-ess}, and more difficult to prove its correctness \textit{via} probilistic techniques, hence the usefulness of the rather atomic collapse and deletion.

Our firsts step below is to consider functions up to $\cong$.

\begin{definition}[Equivalence classes and quotient set]\label{defn:cq}
For $f: I \to [0,1]$ let $\eqclass{f}$ be its $\cong$ equivalence class; for a matrix $A = (a_{ij})_{1\leq,i,j\leq n}$ with elements in $ I \to [0,1]$, let $\eqclass{A}$ be the matrix where $\eqclass{A}_{ij} :=\eqclass{a_{ij}}$ for all $0\leq i,j\leq n$. For a set $F$ of functions from $I$ to $[0,1]$, let $\eqclass{F}$ be the quotient set $F/\cong$. Finally, it is possible to lift over $\eqclass{F}$ both $\cdot$ to $\eqclass{\cdot}$ and $\precsim$ to $\eqclass{\precsim}$.
\end{definition}

\begin{observation}\label{obs:ccc-loc}
For $(G,\cdot)$ a semigroup totally preordered by $\precsim$,
\begin{enumerate}
\item\label{obs:ccc-loc1} $\eqclass{\precsim}$ orders $\eqclass{G}$ linearly, so $\max_{\eqclass{\precsim}}$ is well-defined.
\item\label{obs:ccc-loc2} $(\eqclass{G}\cup\{\eqclass{\e\mapsto 0}\},\eqclass{\e\mapsto 0},\eqclass{\e\mapsto 1},\max_{\eqclass{\precsim}},\eqclass{\cdot})$ is a commutative semiring. (See, \textit{e.g.},~\cite{GKMS15} for the related definitions.)
\end{enumerate}
\end{observation}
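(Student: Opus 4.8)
The plan is to read off both claims from Definition~\ref{def:cong} together with the algebraic properties of $\precsim$ recalled in the appendix; the guiding remark is that, by construction, $\cong$ is exactly the equivalence relation canonically attached to the total preorder $\precsim$, so that quotienting turns $\precsim$ into a linear order and makes the big-$O$-style operations well behaved.

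For part~\ref{obs:ccc-loc1} I would first note that $\eqclass{\precsim}$ is well defined on $\eqclass{G}$: if $f\cong f'$ and $g\cong g'$ then $f'\precsim f\precsim g\precsim g'$ by transitivity, so whether $\eqclass{f}\eqclass{\precsim}\eqclass{g}$ holds does not depend on the chosen representatives. Reflexivity and transitivity of $\eqclass{\precsim}$ are inherited from $\precsim$; antisymmetry is immediate, since $\eqclass{f}\eqclass{\precsim}\eqclass{g}$ and $\eqclass{g}\eqclass{\precsim}\eqclass{f}$ jointly say $f\cong g$, i.e.\ $\eqclass{f}=\eqclass{g}$; and totality is the standing hypothesis. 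Hence $\eqclass{\precsim}$ is a linear order on $\eqclass{G}$, and adjoining $\eqclass{\e\mapsto 0}$ as a least element (already present if $\e\mapsto 0\in G$; legitimate because $0\le b\cdot f(\e')$ for $\e'$ near $0$ and any $b>0$, so $(\e\mapsto 0)\precsim f$ for every $f:I\to[0,1]$) keeps it linear. In a linear order every finite nonempty subset has a unique greatest element, so $\max_{\eqclass{\precsim}}$ is well defined on finite nonempty subsets, which is the only way it is ever used.

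For part~\ref{obs:ccc-loc2} I would run through the commutative-semiring axioms for $S:=\eqclass{G}\cup\{\bar 0\}$ with $\bar 0:=\eqclass{\e\mapsto 0}$ and $\bar 1:=\eqclass{\e\mapsto 1}$ (which lies in $\eqclass{G}$, $\e\mapsto 1$ being the pointwise-multiplication identity of $G$). The additive monoid $(S,\max_{\eqclass{\precsim}},\bar 0)$ is commutative because $\max$ in a linear order is associative and commutative, with $\bar 0$ neutral since it is the least element. For the multiplicative monoid $(S,\eqclass{\cdot},\bar 1)$: $\eqclass{\cdot}$ is well defined because $\cdot$ respects $\cong$ (the standard fact $f\cong f'\wedge g\cong g'\Rightarrow fg\cong f'g'$); it is internal because $G$ is closed under $\cdot$ and $\bar 0$ absorbs, $(\e\mapsto 0)\cdot g=\e\mapsto 0$; and associativity, commutativity and neutrality of $\bar 1$ are inherited from pointwise multiplication of $[0,1]$-valued functions, which is itself commutative. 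Absorption $\bar 0\,\eqclass{\cdot}\,\eqclass{f}=\bar 0$ was just observed.

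The only axiom with real content is distributivity, $\eqclass{f}\,\eqclass{\cdot}\,\max_{\eqclass{\precsim}}(\eqclass{g},\eqclass{h})=\max_{\eqclass{\precsim}}(\eqclass{f}\,\eqclass{\cdot}\,\eqclass{g},\ \eqclass{f}\,\eqclass{\cdot}\,\eqclass{h})$ (left-distributivity suffices by commutativity), and the one auxiliary point to isolate is that $\precsim$ is monotone under multiplication by a nonnegative function: $g\precsim h$ implies $fg\precsim fh$, since $g(\e')\le b\,h(\e')$ near $0$ gives $f(\e')g(\e')\le b\,f(\e')h(\e')$ near $0$ because $f\ge 0$. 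Granting this, take w.l.o.g.\ $g\precsim h$, so $\max_{\eqclass{\precsim}}(\eqclass{g},\eqclass{h})=\eqclass{h}$ and the left side is $\eqclass{fh}$; on the right, $fg\precsim fh$ yields $\max_{\eqclass{\precsim}}(\eqclass{fg},\eqclass{fh})=\eqclass{fh}$, so the two sides agree — the subcases where one of $f,g,h$ is $\cong 0$ being covered by absorption and by $\bar 0$ being least. This exhausts the axioms, so $S$ is a commutative semiring. I do not anticipate a genuine obstacle here; the only steps requiring care are the multiplicative monotonicity of $\precsim$ feeding distributivity and the minor bookkeeping around the adjoined bottom element $\bar 0$.
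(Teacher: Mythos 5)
Your verification is correct and is exactly the routine check the paper leaves implicit: the paper gives no proof of this observation, relying on the properties of $\precsim$ and $\cong$ collected in Lemma~\ref{lem:cong} (well-definedness of the quotient order and of $\eqclass{\cdot}$, $\eqclass{\e\mapsto 0}$ being least and absorbing, and multiplicative monotonicity giving distributivity over $\max$), all of which you reprove or cite correctly. The only blemish is your justification that $\eqclass{\e\mapsto 1}$ lies in $\eqclass{G}$ --- a semigroup need not contain an identity, so strictly one must either assume $G$ contains some $f\cong 1$ or adjoin $\eqclass{\e\mapsto 1}$ to the carrier, but this imprecision is already present in the statement itself and is harmless in the paper's applications.
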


\noindent The good behavior of $\cdot$ and $\precsim$ up to $\cong$ is expressed above within an existing algebraic framework, but for $\div_n$ we introduce a new algebraic structure below.

\begin{definition}[Ordered-division semiring]\label{defn:div-semiring}
An ordered-division semiring is a tuple $(F,0,1,\cdot, \leq, \div)$ such that $(F,\leq)$ is a linear order with maximum $1$, and $(F,0,1,\max_{\leq},\cdot)$ is a commutative semiring, and for all $f \leq g$ we have $f \div g$ is in $F$ and $(f \div g) \cdot g = f$.
\end{definition}

\begin{observation}\label{obs:div-semiring}
Let $(F,0,1,\cdot,\div, \leq)$ be  an ordered-division semiring. Then $0 = \min_{\leq} F$ and $f \div 1 = f$ for all $f$.
\end{observation}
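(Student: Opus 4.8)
The plan is to read off both claims directly from the semiring axioms, using only that $\max_{\leq}$ is the additive operation with identity $0$, that $1$ is simultaneously the multiplicative identity and the $\leq$-maximum, and the defining property that $(f\div g)\cdot g = f$ whenever $f\leq g$.

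\textbf{First claim, $0 = \min_{\leq}F$.} Fix $f\in F$. Since $0$ is the additive identity of the commutative semiring $(F,0,1,\max_{\leq},\cdot)$, we have $\max_{\leq}(0,f)=f$. But in the linear order $(F,\leq)$ the binary operation $\max_{\leq}$ returns whichever of its two arguments is $\leq$-larger, so $\max_{\leq}(0,f)=f$ holds exactly when $0\leq f$. As $f$ was arbitrary, $0$ is the least element of $(F,\leq)$.

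\textbf{Second claim, $f\div 1 = f$.} Since $1=\max_{\leq}F$, we have $f\leq 1$ for every $f\in F$, hence $f\div 1$ is defined and lies in $F$, and the defining identity of the ordered division gives $(f\div 1)\cdot 1 = f$. Because $1$ is the multiplicative identity, the left-hand side equals $f\div 1$, so $f\div 1 = f$.

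\textbf{Main obstacle.} There is essentially no obstacle: the statement is an immediate unfolding of the definitions. The only point requiring a word of care is the remark that in a linear order the binary maximum selects one of its two arguments, which is what converts the additive-identity axiom $\max_{\leq}(0,f)=f$ into the order statement $0\leq f$; everything else is a one-line substitution using the multiplicative unit.
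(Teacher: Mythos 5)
Your proof is correct. The second claim is argued exactly as in the paper: $f\leq 1$ because $1=\max_{\leq}F$, so $(f\div 1)\cdot 1=f$, and the multiplicative unit law collapses the left side to $f\div 1$. For the first claim you take a slightly different route: you read $0=\min_{\leq}F$ off the \emph{additive} identity axiom ($\max_{\leq}(0,f)=f$, which in a linear order forces $0\leq f$), whereas the paper instead shows that nothing lies strictly below $0$ by using the ordered division together with the annihilation law: if $f\leq 0$ then $f=(f\div 0)\cdot 0=0$. Your version is marginally more economical, since it uses only the semiring structure and never touches $\div$ or the fact that $0$ annihilates products; the paper's version is the mirror argument via multiplication. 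Both are one-liners and both are sound.
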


\noindent Lemma~\ref{lem:odsm-f} below shows that the functions $I\to [0,1]$ up to $\cong$ form an ordered-division semiring.

\begin{lemma}\label{lem:odsm-f}
\begin{enumerate}
\item Let $n > 1$ and $f,f',g,g': I \to [0,1]$ be such that $f \cong f' \precsim g \cong g'$. Then $\eqclass{f \div_1 g} = \eqclass{f' \div_{n} g'}$, which we then write $\eqclass{f} \eqclass{\div} \eqclass{g}$.

\item For all sets $G$ of functions from $I$ to $[0,1]$ closed under multiplication, the tuple $(\eqclass{G\cup\{\e\mapsto 0\}}, \eqclass{\e\mapsto 0},\eqclass{\e\mapsto 1},\eqclass{\cdot}, \eqclass{\div}, \eqclass{\precsim})$ is an ordered-division semiring.
\end{enumerate}
\end{lemma}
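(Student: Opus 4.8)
The plan is to verify each axiom of Definition~\ref{defn:div-semiring} in turn, reusing the structure already supplied by Observation~\ref{obs:ccc-loc}, so that the only genuinely new content is the behavior of $\div$.

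For part~1, the plan is to show that the $\cong$-class of $f \div_n g$ does not depend on $n$ or on the representatives $f, f'$ and $g, g'$, provided $f \cong f' \precsim g \cong g'$. First I would treat the case where $g$ (equivalently $g'$) is $\cong$-equivalent to $\e \mapsto 0$: then $f \precsim g$ forces $f \cong \e \mapsto 0$ as well, and I must check that $f \div_n g$ is $\cong$-equivalent to $\e \mapsto 1$ for every $n > 1$. Here the point is that whenever $g(\e) = 0$ the ordered division returns the constant $\tfrac1n$, and whenever $g(\e) > 0$ one has $f(\e) = 0$ too (by $f \precsim g$ and $f \cong \e\mapsto 0$, after shrinking $I$ so the bounds hold, actually one needs: on a neighborhood of $0$, $f \precsim g$ gives $f \le b g$, and $f \cong 0$... wait, $f \cong 0$ literally means $f$ is eventually $0$; combined with $g > 0$ on those points the quotient is $0$, not $1$). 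I would instead observe directly that $f \div_n g$ takes only the two values $\tfrac1n$ and $0$ eventually, hence is $\cong$-equivalent to $\e\mapsto 1$ iff it is eventually bounded away from $0$ — which fails unless $g$ is eventually $0$. So the honest statement is: if $g \cong \e\mapsto 0$ then $f \div_n g \cong \e \mapsto 1$ because $g$ eventually vanishes and the division returns $\tfrac1n$ everywhere on a neighborhood; I would make this precise using Definition~\ref{def:cong}. In the main case $g \not\cong \e\mapsto 0$, I would use that $f \precsim g$ and $g \not\cong 0$ to find, via Definition~\ref{def:cong} and the appendix's algebraic properties of $\precsim$, constants so that on a punctured neighborhood of $0$ the set $\{\e : g(\e) = 0\}$ is empty — indeed $g \not\precsim 0$ means $g$ is not eventually dominated by $0$, but that does not preclude isolated zeros; so I would argue more carefully that on the zero-set of $g$ the value of $f \div_n g$ is $\tfrac1n \cong 1$ while on the complement it is $f/g$, and then show $f/g$ restricted there together with the constant $\tfrac1n$ on the zero-set is $\cong$-equivalent to $f' / g'$ assembled the same way; the key computation is $(f/g)/(f'/g') = (f/f')(g'/g)$, and $f \cong f'$, $g \cong g'$ bound each factor above and below by positive constants near $0$. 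Changing $n$ to another $n'$ only changes the default value on the zero-set from $\tfrac1n$ to $\tfrac1{n'}$, and both are positive constants, hence $\cong$-equivalent. This establishes that $\eqclass{\div}$ is well-defined on equivalence classes.

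For part~2, I would invoke Observation~\ref{obs:ccc-loc}.\ref{obs:ccc-loc2} with $G$ in place of the semigroup there: it already gives that $(\eqclass{G \cup \{\e\mapsto 0\}}, \eqclass{\e\mapsto 0}, \eqclass{\e\mapsto 1}, \max_{\eqclass{\precsim}}, \eqclass{\cdot})$ is a commutative semiring, that $\eqclass{\precsim}$ is a linear order, and (adding the observation that $g \precsim \e\mapsto 1$ for every $g$ with values in $[0,1]$, since $g(\e) \le 1 = 1\cdot 1$) that $\eqclass{\e\mapsto 1}$ is the maximum. Then the only remaining axiom of Definition~\ref{defn:div-semiring} is: for $\eqclass{f} \eqclass{\precsim} \eqclass{g}$, the class $\eqclass{f}\eqclass{\div}\eqclass{g}$ lies in the semiring and $(\eqclass{f}\eqclass{\div}\eqclass{g}) \eqclass{\cdot} \eqclass{g} = \eqclass{f}$. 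Membership holds because $G$ (hence $\eqclass{G\cup\{\e\mapsto 0\}}$) is closed under the operations — more precisely, one checks $f\div_n g$ is again a function $I\to[0,1]$ (it is, since when $g(\e) > 0$ and $f \precsim g$... no — pointwise $f(\e)/g(\e)$ need not be $\le 1$!). So here I would use part~1 to replace $f\div_1 g$ by any convenient representative; the cleanest is to note $\eqclass{f}\eqclass{\div}\eqclass{g}$ is represented by $f \div_1 g$, which lands in $[0,1]$ precisely when $f \le g$ pointwise, which we do not have — but $f \precsim g$ means $f \le b g$ eventually, so $\tfrac1{b}(f\div_1 g)$ is eventually in $[0,1]$ and $\cong$-equivalent, and truncating to $[0,1]$ on the bad initial segment changes nothing up to $\cong$; I would fold this bookkeeping into the proof of part~1 by defining the representative carefully from the start. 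Granting that, the identity $(f \div_1 g)\cdot g = f$ is exactly Observation~\ref{obs:prec-div} on the set where $g > 0$, and on $\{g = 0\}$ both sides vanish (left side: $\tfrac11 \cdot 0 = 0$; $f \precsim g$ and... again $f$ need not vanish there — but $f \precsim g$ does not give pointwise $f \le g$). The correct move: Observation~\ref{obs:prec-div} already asserts $(f\div_n g)\cdot g = f$ as an identity of functions whenever $f \precsim g$, for all $n$; so I simply cite it, and then pass to $\cong$-classes, getting $(\eqclass{f}\eqclass{\div}\eqclass{g})\eqclass{\cdot}\eqclass{g} = \eqclass{(f\div_n g)\cdot g} = \eqclass{f}$, using part~1 for well-definedness and Observation~\ref{obs:ccc-loc} for compatibility of $\eqclass{\cdot}$ with $\cdot$.

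The main obstacle is part~1: handling the default value $\tfrac1n$ cleanly, and in particular verifying that isolated (or even accumulating) zeros of $g$ in any neighborhood of $0$ do not break well-definedness. I expect the argument to hinge on the fact that $\cong$ only sees behavior up to bounded multiplicative factors on a neighborhood of $0$, so the constant $\tfrac1n$ contributed on $\{g = 0\}$ is harmless as long as it is positive, and the quotient $f/g$ on $\{g > 0\}$ is pinned down up to bounded factors by $f \cong f'$ and $g \cong g'$ — the computation $(f/g)(g'/f') = (f/f')(g'/g)$ with all four ratios eventually in a fixed compact subinterval of $(0,\infty)$ being the crux. Everything else is routine transport of the semiring axioms of Observation~\ref{obs:ccc-loc} along the quotient.
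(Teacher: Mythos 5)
Your proof is correct and takes essentially the same route as the paper's: for part~1 the paper simply cites Lemma~\ref{lem:cong}.\ref{lem:cong7} with $J$ the support of $g$, then Lemmas~\ref{lem:cong}.\ref{lem:cong2} and \ref{lem:cong}.\ref{lem:cong3} --- which is exactly your decomposition into the zero-set of $g$ (where the default value is a positive constant, hence $\cong \e\mapsto 1$) and its complement (where your ratio computation $(f/g)/(f'/g')=(f/f')(g'/g)$ is the combination of inversion and multiplicativity of $\precsim$) --- and for part~2 it makes the same appeal to Observations~\ref{obs:ccc-loc}.\ref{obs:ccc-loc2} and \ref{obs:prec-div} together with the $\eqclass{\precsim}$-maximality of $\eqclass{\e\mapsto 1}$. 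The two loose ends you worry about --- that $f\div_n g$ need not be $[0,1]$-valued pointwise, and that membership of $\eqclass{f}\eqclass{\div}\eqclass{g}$ in $\eqclass{G\cup\{\e\mapsto 0\}}$ is never actually verified --- are left equally unaddressed by the paper's own two-line proof, so they are not defects of your argument relative to it, though you are right to flag the first one.
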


\noindent For example, the set containing $\eqclass{\e\mapsto 0}$ and all the $\eqclass{\e\mapsto \e^{\alpha}}$ for non-positive $\alpha$ is an ordered-division semiring, where $\eqclass{\e\mapsto \e^{\alpha}}\eqclass{\cdot}\eqclass{\e\mapsto \e^{\beta}} = \eqclass{\e\mapsto \e^{\alpha+\beta}}$ and $\eqclass{\e\mapsto \e^{\alpha}} \eqclass{\precsim}\eqclass{\e\mapsto \e^{\beta}}$ iff $\beta \leq \alpha$, and $\eqclass{\e\mapsto 0} \eqclass{\precsim}\eqclass{\e\mapsto \e^{\alpha}}$, and $\eqclass{\e\mapsto \e^{\alpha}}\eqclass{\div}\eqclass{\e\mapsto \e^{\beta}} = \eqclass{\e\mapsto \e^{\alpha-\beta}}$ for $\beta \leq \alpha$. 

To handle $\sigma$, $\kappa$, and $\delta$ up to $\cong$ we define below transformations of weighted graphs with weights in an ordered-division semiring.

\begin{definition}[Abstract transformations]\label{defn:ao}
Let $P: S \times S \to F$, where $(F,0,1,\cdot, \leq, \div)$ is an ordered-division semiring. 
\begin{enumerate}
\item\label{defn:ao1} Let $\{(z,t)\in S^2\,|\, P(z,t) = 1 \wedge z \neq t\}$ be the essential graph of $P$, and let the sink SCCs $E_1,\dots,E_k$ be its essential classes.

\item Outgoing scaling: for $x \neq y$ let $\eqclass{\sigma}(P)(x,y) := P(x,y) \div M$, where $M := \max_{\leq}\{P(z,t) : (z,t)\in S\times S \wedge z\neq t\}$, and $\eqclass{\sigma}(P)(x,x) := 1$.

\item Essential collapse: let $\eqclass{\kappa}(P,E_i)$ be the matrix with states $\{\cup E_i\}\cup S\setminus E_i$ such that for all $x,y\in S\setminus E_i$ we set $\eqclass{\kappa}(P, E_i)(x,y) := P(x,y)$ and $\eqclass{\kappa}(P, E_i)(\cup E_i,y) := \max_{\leq}\{P(x_i,y) : x_i\in E_i\}$ and $\eqclass{\kappa}(P, E_i)(x,\cup E_i) := \max_{\leq}\{P(x,x_i) : x_i\in E_i\}$ and $\eqclass{\kappa}(P, E_i)(\cup E_i,\cup E_i) := 1$.

\item Shrinking: let $\eqclass{\chi}(P)$ be the matrix with state space $\{\cup E_1,\dots,\cup E_k\}$ such that  for all $i,j$ $\eqclass{\chi}(P)(\cup E_i,\cup E_j) := \max_{\leq}\{P(\gamma) : \gamma\in \Gamma_T(E_i,E_j)\}$.
\end{enumerate}
\end{definition}

\noindent In Definition~\ref{defn:ao}, the weights $P(x,x)$ occur only in the definitions of the self-loops of the transformed graphs, whence Observation~\ref{obs:ao-refl-irrel} below.

\begin{observation}\label{obs:ao-refl-irrel}
Let $(F,0,1,\cdot, \leq, \div)$ be an ordered-division semiring, let $P,P': S \times S \to F$ be such that $P(x,y) = P'(x,y)$ for all $x \neq y$. Then $P$ and $P'$ have the same essential graph and classes $E_1,\dots,E_k$; $\eqclass{\sigma}(P)(x,y) = \eqclass{\sigma}(P')(x,y)$ for all $x \neq y$; and for all $l$ and $i \neq j$ we have $\eqclass{\chi}(P)(\cup E_i,\cup E_j) = \eqclass{\chi}(P')(\cup E_i,\cup E_j)$ and $\eqclass{\kappa}(P)(P,E_l)(\cup E_i,\cup E_j) = \eqclass{\kappa}(P')(P',E_l)(\cup E_i,\cup E_j)$.
\end{observation}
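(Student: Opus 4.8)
The plan is to treat Observation~\ref{obs:ao-refl-irrel} as a purely \emph{syntactic} fact: by inspection of Definition~\ref{defn:ao}, every quantity occurring in the conclusion is assembled only from off-diagonal entries $P(z,t)$ with $z\neq t$ of the input matrix, whereas the diagonal entries $P(x,x)$ are used nowhere except in the self-loops of the output --- and in $\eqclass{\sigma}$ and $\eqclass{\kappa}$ not even there, since those self-loops are set to $1$ by fiat. Accordingly I would just walk through the four items of Definition~\ref{defn:ao} in turn and verify this.

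First, for the essential graph: Definition~\ref{defn:ao}.\ref{defn:ao1} defines it as $\{(z,t)\in S^2 : P(z,t)=1\wedge z\neq t\}$, which only reads off-diagonal entries; since $P$ and $P'$ agree on those, the two essential graphs coincide, hence so do their SCCs, their sink SCCs $E_1,\dots,E_k$, and the transient set $T$. Next, for outgoing scaling, the normaliser $M=\max_{\leq}\{P(z,t):(z,t)\in S\times S\wedge z\neq t\}$ is a maximum over off-diagonal pairs only, so it is the same for $P$ and $P'$; then for $x\neq y$ one has $\eqclass{\sigma}(P)(x,y)=P(x,y)\div M=P'(x,y)\div M=\eqclass{\sigma}(P')(x,y)$. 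For the essential collapse $\eqclass{\kappa}(P,E_l)$ I would split the off-diagonal entries of the output into the three cases of Definition~\ref{defn:ao}: a pair inside $S\setminus E_l$ (value $P(u,v)$ with $u\neq v$), a pair $(\cup E_l,y)$ with $y\notin E_l$ (value $\max_{\leq}\{P(x_l,y):x_l\in E_l\}$, and $x_l\neq y$ because $y\notin E_l$), and symmetrically $(y,\cup E_l)$; each is a $\max_{\leq}$ of off-diagonal entries of $P$, hence unchanged on passing to $P'$. Finally, for shrinking, I would note that for $i\neq j$ every $\gamma\in\Gamma_T(E_i,E_j)$ runs from a state of $E_i$ to a state of $E_j$ with intermediate vertices in $T$; since $E_i,E_j$ are distinct classes and $T$ is disjoint from them, $\gamma$ is repetition-free and any two consecutive vertices of $\gamma$ are distinct, so $P(\gamma)$ is a product of off-diagonal entries of $P$; taking $\max_{\leq}$ over the set $\Gamma_T(E_i,E_j)$ (which itself depends only on the essential graph) gives $\eqclass{\chi}(P)(\cup E_i,\cup E_j)=\eqclass{\chi}(P')(\cup E_i,\cup E_j)$.

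I do not expect a real obstacle: the whole argument is a routine traversal of Definition~\ref{defn:ao}. The only spot deserving a sentence of care is the shrinking clause, where one must check that the path-weight $P(\gamma)$ never picks up a diagonal factor $P(u,u)$; this is exactly where the hypothesis $i\neq j$ (together with the disjointness of $T$ from the essential classes) is needed, and it is also why the conclusion only claims equality of $\eqclass{\chi}(P)(\cup E_i,\cup E_j)$ for $i\neq j$ rather than for all $i,j$ (a self-loop value $\eqclass{\chi}(P)(\cup E_i,\cup E_i)$ could legitimately depend on a diagonal entry of $P$).
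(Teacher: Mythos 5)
Your proposal is correct and is exactly the paper's argument: the paper justifies this observation with the single remark that in Definition~\ref{defn:ao} the diagonal weights $P(x,x)$ occur only in the self-loops of the transformed graphs, and your case-by-case traversal of the four items (including the point that for $i\neq j$ a path in $\Gamma_T(E_i,E_j)$ never uses a diagonal entry) is just that remark written out in full.
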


\noindent Lemma~\ref{lem:pop} below shows that the transformations from Definition~\ref{defn:ao} are faithful abstractions of $\sigma$, $\kappa$, and $\delta$. Some proofs come with examples, which also highlight the benefits of abstraction.

\begin{lemma}[Abstract and concrete transformations]\label{lem:pop}
Let a perturbation $p$ with state space $S$ satisfy Assumption~\ref{Assum1}, let $E_1,\dots, E_k$ be its essential classes, and for all $i$ let $x_i \in E_i$.
\begin{enumerate}
\item\label{lem:pop0} $p$ and $\eqclass{p}$ have the same essential graph.
\item\label{lem:pop1}  $\eqclass{\sigma}(\eqclass{p})(x,y) = \eqclass{\sigma(p)}(x,y)$ for all $x \neq y$.
\item\label{lem:pop2} $\eqclass{\chi}(\eqclass{p})(\{x\},\{y\}) = \eqclass{\delta(p)}(x,y)$ whenever $\delta(p)$ is well-defined.
\item\label{lem:pop3} $\eqclass{\kappa}(\eqclass{p},E_i) = \eqclass{\kappa(p,x_i)}$.
\item\label{lem:pop4} $\eqclass{\chi}(\eqclass{p}) = \eqclass{\chi}\circ\eqclass{\kappa}(\eqclass{p},E_1)$.
\item\label{lem:pop5} $\eqclass{\delta\circ\kappa(\dots\kappa(\kappa(\sigma(p),x_1),x_2)\dots,x_{k})}(\cup E_i,\cup E_j) = \eqclass{\chi}\circ\eqclass{\sigma}(\eqclass{p})(\cup E_i,\cup E_j)$ for all $i \neq j$.
\end{enumerate}
\end{lemma}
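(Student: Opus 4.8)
The plan is to prove the six assertions essentially in the stated order, since each is a short bookkeeping argument once one has the right dictionary between the concrete transformations of Section~\ref{sect:pp-ess} and the abstract ones of Definition~\ref{defn:ao}, together with the key approximation lemmas (Lemmas~\ref{lem:p-cong-max} and~\ref{lem:singleton-essential}). The only genuinely substantive item is~\ref{lem:pop5}; the rest are either definitional unwindings or follow by combining the earlier lemmas.

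For~\ref{lem:pop0}: both the essential graph of the perturbation $p$ (Definition~\ref{defn:essential-class}) and the essential graph of the abstract matrix $\eqclass{p}$ (Definition~\ref{defn:ao}.\ref{defn:ao1}) are determined by which off-diagonal entries are $\cong$-maximal; by Observation~\ref{obs:inf-bound} this is exactly the set of arcs $(x,y)$ with $p(x,y)\cong 1$, which is the same as the set with $\eqclass{p(x,y)} = 1 = \eqclass{\e\mapsto 1}$ in the quotient. Hence the essential graphs, and therefore the essential classes, coincide. For~\ref{lem:pop1}: unwind both definitions of outgoing scaling. On the concrete side $\sigma(p)(x,y) = p(x,y)\div_{|S|} m$ with $m = |S|\cdot\max\{p(z,t):z\neq t\}$; on the abstract side $\eqclass{\sigma}(\eqclass{p})(x,y) = \eqclass{p}(x,y)\eqclass{\div} M$ with $M = \max_{\eqclass{\precsim}}\{\eqclass{p}(z,t):z\neq t\}$. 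Since $\eqclass{m} = M$ (the constant $|S|$ disappears up to $\cong$) and $p(x,y)\precsim m$ (so the default branch of $\div_{|S|}$ is only ever used to divide $0$ by $0$), Lemma~\ref{lem:odsm-f}.1 gives $\eqclass{p(x,y)\div_{|S|} m} = \eqclass{p}(x,y)\eqclass{\div} M$, which is the claim. For~\ref{lem:pop2}: when $\delta(p)$ is defined, all essential classes are singletons, so $T = S\setminus(\text{essential states})$ and Lemma~\ref{lem:singleton-essential} gives $\delta(p)(x,y) = \p^x(X_{\tau^+_{S\setminus T}} = y)\cong \max\{p(\gamma):\gamma\in\Gamma_T(x,y)\}$; the right side is exactly $\eqclass{\chi}(\eqclass{p})(\{x\},\{y\})$ by Definition~\ref{defn:ao}(4) once we note $E_i = \{x\}$, $E_j = \{y\}$ and that $P(\gamma) = \eqclass{p}(\gamma)$ coincides with $\eqclass{p(\gamma)}$ because $\eqclass{\cdot}$ is the lift of $\cdot$. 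For~\ref{lem:pop3}: the two descriptions of the collapsed matrix agree on off-diagonal entries by Lemma~\ref{lem:p-cong-max} ($\tilde p(\cup E_i,y)\cong\max_{z\in E_i}p(z,y)$ and symmetrically), and on the new self-loop $\eqclass{\kappa}$ puts $1$ while $\kappa(p,x_i)(\cup E_i,\cup E_i)$ is $\cong 1$ because $\cup E_i$ is an essential state of a sink SCC (Observation~\ref{obs:ess-coll} plus Observation~\ref{obs:inf-bound}); the remaining entries $p(y,z)$ with $y,z\notin E_i$ are copied verbatim on both sides. By Observation~\ref{obs:ao-refl-irrel} the choice of representative $x_i$ is irrelevant, consistent with the remark after Lemma~\ref{lem:p-cong-max}.

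Item~\ref{lem:pop4} is the assertion that, for the abstract matrices, shrinking in one step equals collapsing one essential class and then shrinking what remains. After $\eqclass{\kappa}(\eqclass{p},E_1)$ the essential classes are $\cup E_1$ (now a singleton essential class, since it is a sink SCC that got contracted to a point) together with $E_2,\dots,E_k$ unchanged; so $\eqclass{\chi}\circ\eqclass{\kappa}(\eqclass{p},E_1)(\cup E_i,\cup E_j)$ is the $\leq$-maximum of $P'(\gamma)$ over simple $T'$-paths $\gamma$ from $\cup E_i$ to $\cup E_j$ in the collapsed graph, where $P' = \eqclass{\kappa}(\eqclass{p},E_1)$ and $T'$ its transient states. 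The claim is that this equals $\max\{\eqclass{p}(\gamma):\gamma\in\Gamma_T(E_i,E_j)\}$. This is the one point that needs a real (if standard) argument: a simple path in the collapsed graph either avoids $\cup E_1$, in which case it lifts to a simple $T$-path avoiding $E_1$ with the same weight, or it passes through $\cup E_1$, in which case it corresponds to a concatenation of an incoming arc $\max_{z\in E_1}p(\cdot,z)$ and an outgoing arc $\max_{z\in E_1}p(z,\cdot)$, which is exactly what a $T$-path entering $E_1$, traversing it, and leaving contributes to the max (the internal traversal of $E_1$ costs weight $1$ up to $\cong$ by Observation~\ref{obs:inf-bound}, so it is absorbed by the semiring identity). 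Monotonicity of $\eqclass{\cdot}$ with respect to $\leq$ (Lemma~\ref{lem:odsm-f}.2 / Observation~\ref{obs:ccc-loc}) is used to see that the maximum over lifted paths equals the maximum over collapsed paths. This is the expected main obstacle: making the "max over simple paths commutes with contracting a sink SCC" bookkeeping fully rigorous, paying attention to the simple-path constraint ($\Gamma_T$) under contraction. Finally~\ref{lem:pop5} follows by chaining: iterating~\ref{lem:pop3} gives $\eqclass{\kappa}(\dots\eqclass{\kappa}(\eqclass{\sigma}(\eqclass{p}),E_1)\dots,E_k)$ as the abstract counterpart of the concrete iterated collapse $\kappa(\dots\kappa(\kappa(\sigma(p),x_1),x_2)\dots,x_k)$ applied to $\sigma(p)$ — here one uses~\ref{lem:pop1} for the first step and then re-applies~\ref{lem:pop3} to the successively collapsed perturbations, each of which still satisfies Assumption~\ref{Assum1} and now has one more singleton essential class — and after all $k$ collapses every essential class is a singleton, so~\ref{lem:pop2} identifies the subsequent $\delta$ with $\eqclass{\chi}$ applied to the fully collapsed abstract matrix. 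Then~\ref{lem:pop4}, applied iteratively ($k$ times), rewrites $\eqclass{\chi}$ of the fully collapsed matrix as $\eqclass{\chi}\circ\eqclass{\sigma}(\eqclass{p})$, giving the stated equality on the entries $(\cup E_i,\cup E_j)$ with $i\neq j$; the diagonal is excluded precisely because, as Observation~\ref{obs:ao-refl-irrel} records, the transformations only feed off-diagonal entries into off-diagonal entries, so no claim about self-loops is needed.
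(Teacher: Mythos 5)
Your proposal follows essentially the same route as the paper: items \ref{lem:pop0}--\ref{lem:pop3} by unwinding the definitions through Lemma~\ref{lem:odsm-f}, Lemma~\ref{lem:cong}.\ref{lem:cong6}, Lemma~\ref{lem:p-cong-max} and Lemma~\ref{lem:singleton-essential}; item \ref{lem:pop4} by matching maxima over paths before and after contracting $E_1$; and item \ref{lem:pop5} by chaining \ref{lem:pop1}, iterated \ref{lem:pop3}, \ref{lem:pop2} and iterated \ref{lem:pop4} (the paper packages the iteration as an induction on the number of non-singleton essential classes, using $\kappa(\kappa(p,x_1),\cup E_1)=\kappa(p,x_1)$, but the content is identical), with Observation~\ref{obs:ao-refl-irrel} disposing of the diagonal.

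One local correction in your treatment of item \ref{lem:pop4}: the case split ``the path avoids $\cup E_1$ or passes through $\cup E_1$'' is not the right one. After the collapse, $\cup E_1$ is still an \emph{essential} state (a singleton sink SCC), so it does not belong to the transient set $T'=T$, and a path in $\Gamma_{T'}(\cup E_i,\cup E_j)$ with $i,j\neq 1$ can never visit $\cup E_1$ as an interior vertex; your ``enter $E_1$, traverse it at cost $\cong 1$, and leave'' scenario therefore never occurs, and the corresponding paths are likewise absent from $\Gamma_T(E_i,E_j)$ on the uncollapsed side. The case that actually needs an argument is $i=1$ or $j=1$, where $\cup E_1$ is an \emph{endpoint}: there one must check that $\max_{x\in E_1}\max\{P(x\gamma):x\gamma\in\Gamma_T(E_1,E_j)\}$ equals the maximum over collapsed paths whose first arc carries the weight $\max_{x\in E_1}P(x,y)$ (and symmetrically for incoming paths). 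The mechanism you invoke --- every lifted path is dominated by its collapsed image, and every collapsed path is realized by some choice of $x\in E_1$ --- is exactly what is needed there, so the fix is a relabelling of which case it applies to rather than a new idea; but as written the nontrivial case is the one you skip.
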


\noindent By the algorithm underlying Theorem~\ref{thm:stable-states} and Lemma~\ref{lem:pop}.\ref{lem:pop5} we are now able to state the following.

\begin{proposition}\label{prop:abs-algo}
Let a perturbation $p$ satisfy Assumption~\ref{Assum1}. There exists $n\in\mathbb{N}$ such that $(\eqclass{\chi}\circ\eqclass{\sigma})^n(\eqclass{p})(x,y) = 0$ for all $x \neq y$ in its state space. Furthermore, the states of $(\eqclass{\chi}\circ\eqclass{\sigma})^n(\eqclass{p})$ correspond to the stable states of $p$.
\end{proposition}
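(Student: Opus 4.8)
\textbf{Proof plan for Proposition~\ref{prop:abs-algo}.} The plan is to transfer the termination and correctness of the concrete recursive procedure (the one justifying Theorem~\ref{thm:stable-states}) to the abstract iteration $\eqclass{\chi}\circ\eqclass{\sigma}$ via Lemma~\ref{lem:pop}, proceeding by induction on the number of states. First I would recall the concrete procedure: on a perturbation $p$ satisfying Assumption~\ref{Assum1}, if $p$ is zero all states are stable and we stop; if its essential graph is empty but $p\neq 0$, apply $\sigma$ (which by Proposition~\ref{prop:ogs}.\ref{prop:ogs1} yields a non-empty essential graph while preserving Assumption~\ref{Assum1} and, by \ref{prop:ogs2}, the stable states); if the essential graph is non-empty, apply the essential collapses $\kappa(\cdot,x_1),\dots,\kappa(\cdot,x_k)$ one per essential class and then the transient deletion $\delta$. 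Each step strictly decreases the state count unless it is the identity, and one checks (using Observations~\ref{obs:ess-coll} and \ref{obs:trans-del}, plus the fact that after a $\sigma$ step the essential graph is non-empty, so a genuine collapse or deletion follows) that the procedure cannot loop forever; hence it halts after finitely many rounds, each round being "one $\sigma$ (possibly trivial) followed by all collapses and one deletion".

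Second, I would show that one round of the concrete procedure corresponds, up to $\cong$, to one application of $\eqclass{\chi}\circ\eqclass{\sigma}$. This is exactly Lemma~\ref{lem:pop}.\ref{lem:pop5}: the off-diagonal entries of $\delta\circ\kappa(\dots\kappa(\sigma(p),x_1)\dots,x_k)$ agree, after passing to $\cong$-classes, with $\eqclass{\chi}\circ\eqclass{\sigma}(\eqclass{p})$. The diagonal (self-loop) entries are irrelevant by Observation~\ref{obs:ao-refl-irrel} (and the fact that Assumption~\ref{Assum1}, $\sigma$, $\chi$, $\kappa$ all ignore self-loops), so $\eqclass{\chi}\circ\eqclass{\sigma}(\eqclass{p})$ is a faithful abstraction of the perturbation obtained after one round, with state space the set $\{\cup E_1,\dots,\cup E_k\}$ of essential classes of $\eqclass{\sigma}(\eqclass{p})$. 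By Lemma~\ref{lem:pop}.\ref{lem:pop0}, $\eqclass{\sigma}(\eqclass{p})$ has the same essential graph as $\sigma(p)$, so the abstract iteration tracks exactly the same class structure as the concrete one.

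Third, I would run the induction. The concrete procedure terminates after some finite number $n$ of rounds on a zero perturbation, i.e.\ one whose essential graph is empty and whose all off-diagonal entries are $\cong\e\mapsto 0$, equivalently $=0$ in the quotient. Feeding the correspondence of the previous paragraph into an induction on the number of states (the base case being a single state, or more conveniently a perturbation already reduced to a zero matrix), we get $(\eqclass{\chi}\circ\eqclass{\sigma})^n(\eqclass{p})(x,y)=0$ for all $x\neq y$ for that same $n$. For the "furthermore" part, each round of the concrete procedure preserves stable states: $\sigma$ by Proposition~\ref{prop:ogs}.\ref{prop:ogs2}, each essential collapse by Proposition~\ref{prop:essential-trans}.\ref{prop:essential-trans4} (noting that after a collapse the new meta-state $\cup E$ is stable iff the original states of $E$ were, which is consistent with Proposition~\ref{prop:essential-graph}.\ref{prop:essential-graph2}), and $\delta$ by Proposition~\ref{prop:trans-del} (which also absorbs transient states, whose vanishing is recorded by Proposition~\ref{prop:essential-graph}.\ref{prop:essential-graph1}). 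Hence the states surviving to the final zero perturbation are exactly the stable states of $p$, and since the abstract iteration has, at every stage, state space equal to the essential-class set of the concrete chain at that stage, the states of $(\eqclass{\chi}\circ\eqclass{\sigma})^n(\eqclass{p})$ correspond to the stable states of $p$.

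\textbf{Main obstacle.} The delicate point is not any single estimate but bookkeeping the identification of state spaces through a whole round: a concrete round renames the states of each essential class $E_i$ to a single meta-state $\cup E_i$, and one must check that this is precisely the state space produced by $\eqclass{\chi}\circ\eqclass{\sigma}$, and that Assumption~\ref{Assum1} is re-established after the round so that the induction hypothesis applies to the reduced perturbation. The first is handled by Lemma~\ref{lem:pop}.\ref{lem:pop5} together with Observation~\ref{obs:ao-refl-irrel}; the second requires observing that collapse, deletion, and outgoing scaling all keep the multiplicative closure of the off-diagonal maps totally preordered by $\precsim$ — for $\sigma$ this is Proposition~\ref{prop:ogs}.\ref{prop:ogs1}, and for $\kappa$ and $\delta$ it follows since by Lemmas~\ref{lem:p-cong-max} and \ref{lem:singleton-essential} their new entries are $\cong$-equivalent to finite $\max$'s and products of old entries, which live in the (totally preordered) multiplicative closure. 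Once these two points are in place the induction closes routinely.
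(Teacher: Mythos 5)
Your proposal is correct and follows essentially the same route as the paper, which justifies Proposition~\ref{prop:abs-algo} in one line by combining the round-by-round procedure underlying Theorem~\ref{thm:stable-states} (termination plus preservation of stable states via Propositions~\ref{prop:ogs}, \ref{prop:essential-trans}, and \ref{prop:trans-del}, packaged as Lemma~\ref{lem:3t}) with the abstraction correspondence of Lemma~\ref{lem:pop}.\ref{lem:pop5} and Observation~\ref{obs:ao-refl-irrel}. Your explicit remarks on re-establishing Assumption~\ref{Assum1} after each round and on the state-space bookkeeping are exactly the details the paper leaves implicit.
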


\subsection{The algorithm}\label{sect:algo}

\noindent Algorithm~\ref{algo:br} mainly consists in applying recursively the function $\eqclass{\chi}\circ\eqclass{\sigma}$ occurring in Proposition~\ref{prop:abs-algo} until an empty (\textit{i.e.} totally disconnected) graph is produced. It does not explicitly refer to perturbations since this notion was abstracted on purpose. Instead, the algorithm manipulates digraphs with arcs labeled in an ordered-division semiring, in which inequality, multiplication and ordered division are implicitly assumed to be computable.

\begin{algorithm}
\SetKwProg{Fn}{Function}{ is}{end}
\SetKwFunction{KwHub}{Hub}
\SetKwFunction{KwHubRec}{HubRec}
\SetKwFunction{KwTarjanSinkSCC}{TarjanSinkSCC}
\SetKwFunction{KwDijkstraSource}{Dijkstra}
\SetKwFunction{KwDijkstraSink}{DijkstraSink}

\Fn{\KwHub}{
\SetKwInOut{Input}{input}\SetKwInOut{Output}{output}
\Input{$(S,P)$, where $P: S\times S \to F$}
 \tcp*[h]{$(F,0,1,\cdot, \leq,\div)$ is an ordered-division semiring.}
 
\Output{a subset of $S$}

\BlankLine

$\hat{S} \leftarrow \{\{s\}|s\in S\}$\tcp*[r]{For bookkeeping.}\label{line:type-vertex}
\lFor(\tcp*[f]{For bookkeeping.}\label{line:type-edge}){$x,y\in S$}{$\hat{P}(\{x\},\{y\}) \leftarrow P (x,y)$}
\KwRet{\KwHubRec{$\hat{S}$,$\hat{P}$}}\;
}

\BlankLine

\Fn{\KwHubRec}{
\SetKwInOut{Input}{input}\SetKwInOut{Output}{output}
\Input{$(S,P)$, where $S$ is a set of sets and $P:S\times S\to F$}
\Output{a subset of $S$}

\BlankLine

$M \leftarrow \max\{P(x,y)\,|\,(x,y)\in S\times S\wedge x\neq y\}$\;\label{line:max-label}
\lIf(\tcp*[f]{Recursion base case}){$M = 0$}{\KwRet{$\cup S$}}\label{line:end-rec}
\lFor(\tcp*[f]{Outgoing scaling.}){$x,y\in S$ and $x \neq y$}{$P(x,y)\leftarrow P(x,y) \div M$}\label{line:norm-label}

\BlankLine

$A \leftarrow \{(x,y)\in S\times S\,|\,P(x,y) = 1\}$\tcp*[r]{$A$ is a digraph.}\label{line:max-label-edges}

$(E_1,\dots,E_k) \leftarrow$\KwTarjanSinkSCC{$S$,$A$}\tcp*[r]{Returns the sink SCCs of $A$.}\label{line:Tarjan-SSCC}

\BlankLine

\tcp*[h]{Maximal labels of direct arcs, below.}

\lFor(\label{line:partial-max}){$1\leq i,j \leq k$}{$P'(\cup E_i,\cup E_j) \leftarrow \max\{P(x,y)\,|\,(x,y)\in E_i\times E_j\}$}

\BlankLine

\tcp*[h]{Maximal labels of all relevant paths, in the remainder.}

$T \leftarrow S\setminus (E_1\cup\dots\cup E_k)$\;\label{line:transient}
$P_T \leftarrow P$\tcp*[r]{Initialisation.}\label{line:reduce-transient}

\lFor(\tcp*[f]{Drops arcs not starting in $T$.\label{line:remove}}){$(x,y)\in (S\setminus T)\times S$}{$P_T(x,y) \leftarrow 0$}
\lFor(\label{line:Dijkstra}){$y\in T$}{$P_T(y,\_)\leftarrow$\KwDijkstraSource{$S$,$P_T$,$y$, $\cdot$, $\max$}}
\tcp*[h]{$P_T(y,\_)$ is the "distance" function from $y\in T$, using $\cdot$ and $\max$.}

\For{$1\leq i,j \leq k$ and $i\neq j$ and $(x_i,x_j,y)\in E_i\times E_j\times T$}{$P'(\cup E_i,\cup E_j) \leftarrow \max(P'(\cup E_i,\cup E_j),P(x_i,y) \cdot P_T(y,x_j))$\;\label{line:global-max}}

\KwRet{\KwHubRec{$\{\cup E_1,\dots,\cup E_k\},P'$}}\label{line:return}
}
\caption{Hub}\label{algo:br}
\end{algorithm}

One call to the recursive function \KwHubRec corresponds to $\eqclass{\chi}\circ\eqclass{\sigma}$, \textit{i.e.} Lines~\ref{line:max-label} and \ref{line:norm-label} correspond to $\eqclass{\sigma}$, and Lines~\ref{line:max-label-edges} till \ref{line:global-max} correspond to $\eqclass{\chi}$. Before calling \KwHubRec Lines~\ref{line:type-vertex} and \ref{line:type-edge} produce an isomorphic copy of the input, which will be easier to handle when making unions and keeping track of the remaining vertices. Note that Line~\ref{line:norm-label} does not update the $P(x,x)$. It would be useless indeed, since in Definition~\ref{defn:ao} the self-loops of the original graph occur only in the definition of the self-loops of the transformed graphs, and since self-loops are irrelevant by Obsevation~\ref{obs:ao-refl-irrel}. Line~\ref{line:max-label-edges} computes the essential graph, up to self-loops, and Line~\ref{line:Tarjan-SSCC} computes the essential classes by a modified Tarjan's algorithm, as detailed in Algorithm~\ref{algo:MT}. The computation of $\eqclass{\chi}(P)(\cup E_i,\cup E_j) := \max_{\leq}\{P(\gamma) : \gamma\in \Gamma_T(E_i,E_j)\}$ is performed in two stages: the first stage at Line~\ref{line:partial-max} considers only paths of length one; the second stage at Line~\ref{line:global-max} considers only paths of length greater than one, and therefore having their second vertex in $T$. This case disjunction reduces the size of the graph on which the shortest path algorithm from Line~\ref{line:Dijkstra} is run, and thus reduces the complexity of \KwHub from $O(n^4)$ to $O(n^3)$, as will be detailed in Proposition~\ref{prop:complexity}. Note that the shortest path algorithm is called with laws $\cdot$ and $\max$ instead of $+$ and $\min$. Moreover, since our weights are at most $1$ we can use \cite{LGJLMPS57} or \cite{Dijkstra59} (which assume non-negative weights) to implement Line~\ref{line:Dijkstra}. 

Proposition~\ref{prop:complexity} below shows that our algorithm is fast.

\begin{proposition}\label{prop:complexity}
The algorithm \KwHub terminates within $O(n^3)$ computation steps, where $n$ is the number of vertices of the input graph.
\end{proposition}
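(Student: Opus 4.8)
The plan is to bound the cost of the recursive function \KwHubRec and then sum the costs over all recursive calls. First I would observe that in any single call on a graph with $n$ vertices, every line except the shortest-path loop (Line~\ref{line:Dijkstra}) and the double loop at Line~\ref{line:global-max} is at most $O(n^2)$: computing the maximum label, the outgoing scaling, extracting the digraph $A$, the partial maxima at Line~\ref{line:partial-max}, and the reinitialisations all iterate over pairs of vertices; and the modified Tarjan's algorithm (Algorithm~\ref{algo:MT}) runs in $O(n^2)$ since the digraph $A$ has at most $n^2$ arcs. So the only potentially expensive parts are the $|T|$ calls to \KwDijkstraSource and the triple loop at Line~\ref{line:global-max}.

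Next I would analyse those two parts. Each call to \KwDijkstraSource on a graph with $n$ vertices costs $O(n^2)$ (the classical dense-graph implementation of \cite{LGJLMPS57} or \cite{Dijkstra59}, which is valid here since all weights lie in $[0,1] \subseteq F$ and $\max/\cdot$ play the roles of $\min/+$), so running it once for each $y \in T$ costs $O(|T|\cdot n^2)$. The triple loop at Line~\ref{line:global-max} ranges over $(x_i,x_j,y)\in E_i\times E_j\times T$ with $i\neq j$; since the $E_i$ are disjoint, $\sum_{i}|E_i| \leq n$, so the number of iterations is bounded by $\bigl(\sum_i |E_i|\bigr)^2 \cdot |T| \leq n^2\,|T|$, each doing $O(1)$ semiring operations. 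Hence one call to \KwHubRec on an $n$-vertex graph, \emph{excluding} the final recursive call, costs $O(n^2\,|T| + n^2) = O(n^3)$, and also $O(n^2\,|T|)$ when $T\neq\emptyset$.

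The key structural fact that makes the whole recursion $O(n^3)$ rather than $O(n^4)$ is that \KwHubRec returns on the set $\{\cup E_1,\dots,\cup E_k\}$ of essential classes, so the number of vertices drops from $n$ to $k$, where $k \leq n - |T|$. I would let $n_0 > n_1 > \dots$ be the vertex counts of the successive recursive calls; then $|T|$ at the call of size $n_t$ is at least $n_t - n_{t+1}$, and in fact a call with $T = \emptyset$ at a non-base level is impossible unless the graph is already empty — when the essential graph is non-empty but has only singleton essential classes and no transient states, then $k = n_t$ and the recursion would not terminate, contradicting Proposition~\ref{prop:abs-algo}; so at every non-base call the vertex count strictly decreases, i.e.\ $n_{t+1} < n_t$. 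Summing the per-call bound $O(n_t^2\,(n_t - n_{t+1}) + n_t^2)$ and using $n_t \leq n$ together with $\sum_t (n_t - n_{t+1}) \leq n$ and the number of calls being at most $n$, I get a total of $O(n^2 \cdot \sum_t (n_t - n_{t+1})) + O(n \cdot n^2) = O(n^3)$. Finally, the wrapper \KwHub adds only the $O(n^2)$ bookkeeping of Lines~\ref{line:type-vertex} and \ref{line:type-edge}, which is absorbed.

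\textbf{Main obstacle.} The routine line-by-line accounting is mechanical; the one point requiring care is the amortised summation showing the recursion does not blow up — specifically, justifying that the per-call cost is genuinely $O(n_t^2 (n_t - n_{t+1}))$ (so that the $n_t - n_{t+1}$ factors telescope against $n$) rather than merely $O(n_t^3)$ per call with up to $n$ calls, which would only give $O(n^4)$. This hinges on the disjointness of the essential classes bounding the Line~\ref{line:global-max} loop by $n^2 |T|$ and on $|T| \geq n_t - n_{t+1}$, plus the observation that the handful of calls with $T = \emptyset$ are each cheap ($O(n_t^2)$, contributing $O(n \cdot n^2)$ in total) and reduce the size anyway by collapsing non-singleton essential classes.
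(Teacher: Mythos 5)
Your proof is correct and follows essentially the same route as the paper's: line-by-line $O(n^2)$ accounting per call, at most $n$ recursive calls, and amortizing the cost of Lines~\ref{line:Dijkstra} and \ref{line:global-max} over the transient sets, whose sizes sum to at most $n$ across all calls (the paper phrases this as $\sum_j |T_j|\leq n$ by disjointness rather than by telescoping $n_t-n_{t+1}$, but it is the same idea). Two small remarks: the inequality you need --- and the one that actually holds, since $n_{t+1}=k\leq\sum_i|E_i|=n_t-|T|$ --- is $|T|\leq n_t-n_{t+1}$, not ``at least''; and where you invoke Proposition~\ref{prop:abs-algo} to rule out a non-shrinking call, the paper argues directly that after the scaling some off-diagonal entry equals $1$, forcing either a transient vertex or a non-singleton essential class, an argument that also covers inputs over an arbitrary ordered-division semiring rather than only those arising from perturbations.
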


\noindent By Propositions~\ref{prop:abs-algo} and \ref{prop:complexity} we now state our main algorithmic result, which is the algorithmic part of Theorem~\ref{thm:teaser}.

\begin{theorem}\label{thm:vanish-time}
Let a perturbation $p$ satisfy Assumption~\ref {Assum1}. A state is stochastically stable iff it belongs to $\KwHub(S,\eqclass{p})$. Provided that inequality, multiplication, and ordered division between equivalence classes of perturbation maps can be computed in constant time, stability can be decided in $O(n^3)$, where $n$ is the number of states. 
\end{theorem}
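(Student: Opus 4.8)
The theorem is essentially a corollary assembling the two prior results, so the plan is to glue them together with care about the abstraction layer. First I would invoke Proposition~\ref{prop:abs-algo}: given a perturbation $p$ satisfying Assumption~\ref{Assum1}, there is some $n$ with $(\eqclass{\chi}\circ\eqclass{\sigma})^n(\eqclass{p})(x,y) = 0$ for all distinct $x,y$, and the remaining states of $(\eqclass{\chi}\circ\eqclass{\sigma})^n(\eqclass{p})$ are exactly the stable states of $p$. Then I would argue that \KwHub$(S,\eqclass{p})$ computes precisely this fixed point. The key observation here is the one made right after Algorithm~\ref{algo:br}: one call to \KwHubRec implements $\eqclass{\chi}\circ\eqclass{\sigma}$ on its input graph --- Lines~\ref{line:max-label} and \ref{line:norm-label} realize $\eqclass{\sigma}$ (the division by $M$ matching the ordered division by the maximal off-diagonal weight, with self-loops left untouched, which is harmless by Observation~\ref{obs:ao-refl-irrel}), Line~\ref{line:Tarjan-SSCC} extracts the essential classes $E_1,\dots,E_k$ exactly as in Definition~\ref{defn:ao}.\ref{defn:ao1}, and Lines~\ref{line:partial-max}--\ref{line:global-max} compute $\eqclass{\chi}(P)(\cup E_i,\cup E_j)=\max_{\leq}\{P(\gamma):\gamma\in\Gamma_T(E_i,E_j)\}$ by the length-one/length-$\geq 2$ case split, the latter handled by the shortest-path routine with laws $(\cdot,\max)$ along transient vertices $T$. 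Since \KwHubRec recurses until the base case $M=0$ (Line~\ref{line:end-rec}), which is exactly the condition ``all off-diagonal weights vanish,'' termination of the recursion is guaranteed by the finiteness of $n$ in Proposition~\ref{prop:abs-algo}, and the returned set $\cup S$ at that point is the set of stable states. One point deserving explicit attention is the bookkeeping in Lines~\ref{line:type-vertex}--\ref{line:type-edge}: the algorithm carries sets-of-states rather than states, so I would note that this is merely an isomorphic relabeling and that the recursive unions correctly track which original states survive, so that $\cup S$ at the base case is literally the stable-state subset of the original $S$.

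For the complexity claim, I would simply cite Proposition~\ref{prop:complexity}, which already states that \KwHub terminates within $O(n^3)$ steps where $n$ is the number of input vertices. The only gap is the hypothesis there that the semiring operations are computable in constant time; here I would point out that this is exactly the ``provided that'' clause in the statement of Theorem~\ref{thm:vanish-time} --- inequality $\eqclass{\precsim}$, multiplication $\eqclass{\cdot}$, and ordered division $\eqclass{\div}$ on equivalence classes of perturbation maps are assumed computable in constant time --- and that by Lemma~\ref{lem:odsm-f} these classes indeed form an ordered-division semiring, so Proposition~\ref{prop:complexity} applies verbatim to the input $(S,\eqclass{p})$.

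I expect the main obstacle to be not any single hard argument but rather the faithful matching between the concrete recursion ``one outgoing scaling followed by the combined collapse-then-delete'' analyzed in Section~\ref{sect:pp-ess} and the abstract iteration $(\eqclass{\chi}\circ\eqclass{\sigma})^n$ of Proposition~\ref{prop:abs-algo}, together with checking that \KwHubRec really computes $\eqclass{\chi}\circ\eqclass{\sigma}$ and not something subtly different. In particular I would be careful that Line~\ref{line:global-max} together with Line~\ref{line:partial-max} exhausts all simple $T$-paths $\Gamma_T(E_i,E_j)$ --- a path of length $\geq 2$ between two essential classes must have its second vertex in $T$ since $E_i$ is a sink SCC of the essential graph, so $P(x_i,y)\cdot P_T(y,x_j)$ with $y\in T$ ranges over exactly those --- and that Lemma~\ref{lem:pop}.\ref{lem:pop5} is what licenses replacing the sequence of concrete collapses and a deletion by the single abstract $\eqclass{\chi}\circ\eqclass{\sigma}$. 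Once that correspondence is pinned down, the rest is routine citation of Propositions~\ref{prop:abs-algo} and~\ref{prop:complexity}.
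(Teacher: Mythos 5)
Your proposal is correct and follows the paper's own route exactly: the paper derives Theorem~\ref{thm:vanish-time} directly from Proposition~\ref{prop:abs-algo} (correctness of iterating $\eqclass{\chi}\circ\eqclass{\sigma}$ until all off-diagonal weights vanish, with the surviving states being the stable ones) together with Proposition~\ref{prop:complexity} (the $O(n^3)$ bound under the constant-time-operations hypothesis), and the line-by-line matching of \KwHubRec to $\eqclass{\chi}\circ\eqclass{\sigma}$ that you spell out is precisely the correspondence the paper records in the paragraph following Algorithm~\ref{algo:br}. No gaps; your extra care about the bookkeeping of sets-of-states and the length-one versus length-$\geq 2$ path split only makes explicit what the paper leaves implicit.
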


\noindent One of the achievements of our algorithm is that it processes all weighted digraphs (\textit{i.e.} abstractions of perturbations) uniformly. Especially, neither irreducibility nor any kind of connectedness is required. For example in Figure~\ref{fig:disc-pert}, the four-state perturbation is the disjoint union of two smaller perturbations. As expected the stable states of the union are the union of the stable states, \textit{i.e.} $\{x,z\}$, but whereas the outgoing scaling applied to the bottom of Figure~\ref{fig:disc-pert2} (the perturbation restricted to $\{z,t\}$) would yield the bottom of Figure~\ref{fig:disc-pert5} directly by division by $\eqclass{\e^6}$, two rounds of ougtoing scaling lead to this stage when processing the four-state perturbations.

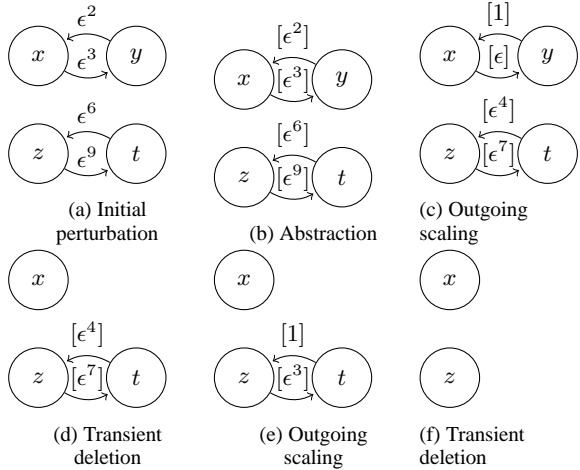
\begin{figure}
\centering
\begin{subfigure}[b]{0.15\textwidth}
\begin{tikzpicture}[shorten >=1pt,node distance=1.3cm, auto]
  \node[state] (x) {$x$};
  \node[state] (y) [right of = x] {$y$};
  \node[state] (z) [below of = x] {$z$};
  \node[state] (t) [right of = z] {$t$};
  
\path[->] (x) edge [bend right] node  {$\e^3$} (y)		
		(y) edge [bend right] node [above]{$\e^2$} (x)
		(z) edge [bend right] node  {$\e^9$} (t)
		(t) edge [bend right] node [above]{$\e^6$} (z);
 \end{tikzpicture}
 \caption{Initial\\perturbation}
 \label{fig:disc-pert1}
\end{subfigure}
\begin{subfigure}[b]{0.15\textwidth}
\begin{tikzpicture}[shorten >=1pt,node distance=1.3cm, auto]
  \node[state] (x) {$x$};
  \node[state] (y) [right of = x] {$y$};
  \node[state] (z) [below of = x] {$z$};
  \node[state] (t) [right of = z] {$t$};
  
\path[->] (x) edge [bend right] node  {$\eqclass{\e^3}$} (y)		
		(y) edge [bend right] node [above]{$\eqclass{\e^2}$} (x)
		(z) edge [bend right] node  {$\eqclass{\e^9}$} (t)
		(t) edge [bend right] node [above]{$\eqclass{\e^6}$} (z);
 \end{tikzpicture}
 \caption{Abstraction}
 \label{fig:disc-pert2}
\end{subfigure}
\begin{subfigure}[b]{0.1\textwidth}
\begin{tikzpicture}[shorten >=1pt,node distance=1.3cm, auto]
  \node[state] (x) {$x$};
  \node[state] (y) [right of = x] {$y$};
  \node[state] (z) [below of = x] {$z$};
  \node[state] (t) [right of = z] {$t$};
  
\path[->] (x) edge [bend right] node  {$\eqclass{\e}$} (y)		
		(y) edge [bend right] node [above]{$\eqclass{1}$} (x)
		(z) edge [bend right] node  {$\eqclass{\e^7}$} (t)
		(t) edge [bend right] node [above]{$\eqclass{\e^4}$} (z);
 \end{tikzpicture}
 \caption{Outgoing\\scaling}
 \label{fig:disc-pert3}
\end{subfigure}
\begin{subfigure}[b]{0.15\textwidth}
\begin{tikzpicture}[shorten >=1pt,node distance=1.3cm, auto]
  \node[state] (x) {$x$};
  \node[state] (z) [below of = x] {$z$};
  \node[state] (t) [right of = z] {$t$};
  
\path[->] 	(z) edge [bend right] node  {$\eqclass{\e^7}$} (t)
		(t) edge [bend right] node [above]{$\eqclass{\e^4}$} (z);
 \end{tikzpicture}
 \caption{Transient\\deletion}
 \label{fig:disc-pert4}
\end{subfigure}
\begin{subfigure}[b]{0.15\textwidth}
\begin{tikzpicture}[shorten >=1pt,node distance=1.3cm, auto]
  \node[state] (x) {$x$};
  \node[state] (z) [below of = x] {$z$};
  \node[state] (t) [right of = z] {$t$};
  
\path[->] 	(z) edge [bend right] node  {$\eqclass{\e^3}$} (t)
		(t) edge [bend right] node [above]{$\eqclass{1}$} (z);
 \end{tikzpicture}
 \caption{Outgoing\\scaling}
 \label{fig:disc-pert5}
\end{subfigure}
\begin{subfigure}[b]{0.1\textwidth}
\begin{tikzpicture}[shorten >=1pt,node distance=1.3cm, auto]
  \node[state] (x) {$x$};
  \node[state] (z) [below of = x] {$z$};

 \end{tikzpicture}
 \caption{Transient\\deletion}
 \label{fig:disc-pert6}
\end{subfigure}
 \caption{The algorithm run on a disconnected perturbation}
 \label{fig:disc-pert}
 \end{figure}

\section{Discussion}\label{sect:disc}

This section studies two special cases of our setting: first, how assumptions that are stronger than Assumption~\ref{Assum1} make not only some proofs easier but also one result stronger; second, how far Young's technique can be generalized. Then we notice that the termination of our algorithm defines an induction proof principle, which is used to show that the algorithm computes a well-known object when fed a strongly connected graph. Eventually, we discuss how to give the so-far-informal notion of time scale a formal flavor.

\subsection{Stronger assumption}

\noindent Let us consider Assumption~\ref{Assum3}, which is stronger version of Assumption~\ref{Assum1}. Assumption~\ref{Assum3} yields Proposition~\ref{prop:transient-vanish}, which is stronger version of Proposition~\ref{prop:essential-graph}.\ref{prop:essential-graph1}. (The proofs are similar but the new one is simpler.)

\begin{assumption}\label{Assum3}
If $x\neq y$ and $p(x,y)$ is non-zero, it is positive; and $f \cong g$ or $f\in o(g)$ or $g\in o(f)$ for all $f$ and $g$ in the multiplicative closure of the $\e\mapsto p_\e(x,y)$ with $x \neq y$.
\end{assumption}

\noindent 

\begin{proposition}\label{prop:transient-vanish}
Let a perturbation $p$ with state space $S$ satisfy Assumption~\ref{Assum3}, and let $\mu$ be a stationary distribution map for $p$. If $y$ is a transient state, $\lim_{\e\to 0} \mu_{\e}(y) = 0$.
\end{proposition}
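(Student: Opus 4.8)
I plan to fix a transient state $y$ and deduce $\mu_\e(y)\to 0$ from Lemma~\ref{lem:hsr} by proving two one-sided estimates, one of which is where Assumption~\ref{Assum3} does the real work. Since Assumption~\ref{Assum3} implies Assumption~\ref{Assum1} (each pair of maps in the multiplicative closure becomes $\precsim$-comparable), all earlier results apply, in particular Proposition~\ref{prop:essential-graph}, Lemma~\ref{lem:p-cong-max}, and Observation~\ref{obs:inf-bound}. If the essential graph has no arc there are no transient states and the statement is vacuous, so I may assume it has an arc $(a,b)$ with $a\neq b$ and $p(a,b)\cong 1$. The key preliminary observation is then: \emph{for every pair $x\neq w$ not joined by an essential arc, $p_\e(x,w)\to_{\e\to 0}0$.} Indeed, $p(x,w)$ and $p(a,b)$ both lie in the multiplicative closure, so Assumption~\ref{Assum3} forces $p(x,w)\cong p(a,b)$, or $p(x,w)\in o(p(a,b))$, or $p(a,b)\in o(p(x,w))$; the first is excluded because it would make $p(x,w)\cong 1$ an essential arc, and the third is excluded because $p_\e(a,b)$ is bounded below by a positive constant near $0$ (Observation~\ref{obs:inf-bound}) while $p_\e(x,w)\le 1$; hence $p(x,w)\in o(p(a,b))\subseteq o(1)$.

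For the lower bound, let $E$ be an essential class reachable from the transient class of $y$ in the essential graph — one exists since following the acyclic DAG of strongly connected components from $y$'s class must reach a sink SCC — and pick any $x\in E$. Concatenating a path from $y$ to $E$ with a path inside the SCC $E$ that reaches $x$, and then extracting a repetition-free subpath, yields a simple path $\gamma$ from $y$ to $x$ in the essential graph, of length at least $2$ because $y\neq x$. Forcing the $\e$-chain started at $y$ to follow $\gamma$ along its first $|\gamma|-1$ steps has probability $p_\e(\gamma)$, which exceeds a fixed $c>0$ for all small $\e$ by Observation~\ref{obs:inf-bound}; and on that event $\tau^+_x\le|\gamma|-1<\tau^+_y$ since $\gamma$ is repetition-free and $x\neq y$. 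Hence $\p^y(\tau^+_x<\tau^+_y)\ge c$ for all small $\e$.

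For the upper bound I claim $\p^x(\tau^+_y<\tau^+_x)\to 0$. Reaching $y\notin E$ from $x\in E$ requires first reaching $S\setminus E$, so $\p^x(\tau^+_y<\tau^+_x)\le\p^x(\tau_{S\setminus E}<\tau^+_x)$. Writing $\hat p:=\kappa(p,x)$ for the essential collapse around $x$ and $\hat x:=\cup E$, the definition of $\kappa$ gives $\p^x(\tau_{S\setminus E}<\tau^+_x)=\sum_{w\in S\setminus E}\hat p(\hat x,w)$, by splitting the event $\{X_{\tau^+_{S\setminus E\cup\{x\}}}\in S\setminus E\}$ according to the exit state (and this is well-defined by Observation~\ref{obs:ess-coll}). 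By Lemma~\ref{lem:p-cong-max}, $\hat p(\hat x,w)\cong\max_{z\in E}p(z,w)$; for $z\in E$ and $w\notin E$ the pair $(z,w)$ is not an essential arc (that would give the sink SCC $E$ an outgoing arc), so by the preliminary observation $p_\e(z,w)\to 0$, hence $\max_{z\in E}p_\e(z,w)\to 0$ and $\hat p_\e(\hat x,w)\to 0$. Summing finitely many terms gives $\p^x(\tau^+_y<\tau^+_x)\to 0$. Now Lemma~\ref{lem:hsr}, applied to the $\e$-chain for each $\e$, yields $\mu_\e(y)\,\p^y(\tau^+_x<\tau^+_y)=\mu_\e(x)\,\p^x(\tau^+_y<\tau^+_x)$, so $\mu_\e(y)\le c^{-1}\p^x(\tau^+_y<\tau^+_x)\to 0$, using $\mu_\e(x)\le 1$ and the two bounds above.

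The main obstacle is the upper bound: one must show that, started inside an essential class, escaping it before returning is asymptotically negligible, and the essential collapse together with Lemma~\ref{lem:p-cong-max} is the clean way to package this. Everything then rests on the preliminary observation, and this is exactly the place where Assumption~\ref{Assum3} is needed rather than Assumption~\ref{Assum1}: under Assumption~\ref{Assum1} a non-essential transition probability is only $\precsim$-dominated by an essential one and may oscillate with positive $\limsup$ (cf.\ Figure~\ref{fig:no-stable1}), which gives $\liminf_{\e\to 0}\mu_\e(y)=0$ as in Proposition~\ref{prop:essential-graph}.\ref{prop:essential-graph1} but no more; the $o$/$\cong$ trichotomy of Assumption~\ref{Assum3} kills the oscillation and upgrades the conclusion to a genuine limit.
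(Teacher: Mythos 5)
Your proof is correct and follows essentially the same route as the paper's: a uniform positive lower bound on $\p^y_\e(\tau^+_x<\tau^+_y)$ via a simple path to an essential state in the essential graph (Observation~\ref{obs:inf-bound}/Assumption~\ref{Assum2}), an upper bound $\p^x_\e(\tau^+_y<\tau^+_x)\le\sum_{w\notin E}\p^x_\e(X_{\tau^+_{S\setminus E\cup\{x\}}}=w)\cong\sum_{w\notin E}\max_{z\in E}p_\e(z,w)\to 0$ via Lemma~\ref{lem:p-cong-max}, and the conclusion via Lemma~\ref{lem:hsr}. Your explicit ``preliminary observation'' usefully spells out the trichotomy step that the paper leaves implicit when it asserts $\sum_{z\notin E}\max_{t\in E}p_\e(t,z)\to 0$.
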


\noindent Under Assumption~\ref{Assum1} some states may be neither stable nor fully vanishing: $y$ in Figure~\ref{fig:trans-del5} and $x$ in Figure~\ref{fig:no-stable1} where the bottom $\e^2$ is replaced with $\e$. Assumption~\ref{Assum3} rules out such cases, as below.

\begin{corollary}\label{cor:strong-assum-stable}
If a perturbation $p$ satisfies Assumption~\ref{Assum3}, every state is either stable or fully vanishing. 
\end{corollary}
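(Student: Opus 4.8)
The plan is to follow a non-stable state $y$ through the recursive procedure underlying Theorem~\ref{thm:stable-states} and show it must be fully vanishing; the extra leverage that Assumption~\ref{Assum3} provides is Proposition~\ref{prop:transient-vanish}. Recall that the procedure repeatedly applies the outgoing scaling $\sigma$ (when the current perturbation is non-zero with empty essential graph), an essential collapse $\kappa$ (when there is a non-singleton essential class), or the transient deletion $\delta$ (when the essential graph is non-empty with only singleton essential classes --- in which case transient states necessarily exist, since an arc $u\to v$ of the essential graph would otherwise force the singleton SCC of $u$ not to be a sink), and it terminates at a zero perturbation whose states are exactly the stable states. Along a run, each original state is carried by a meta-state (a subset of $S$): $\sigma$ leaves meta-states untouched, $\kappa$ merges the meta-states of an essential class into one, and $\delta$ is the only step that discards meta-states, namely the transient ones. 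So if $y$ is not stable, there is a level whose perturbation $q$ has a transient state $B\ni y$ about to be deleted.

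First I would check that each of $\sigma$, $\kappa$, $\delta$ preserves Assumption~\ref{Assum3}, so that it still holds at level $q$. This is exactly where the $o$-dichotomy --- rather than mere $\precsim$-comparability --- is needed. By Lemmas~\ref{lem:p-cong-max} and \ref{lem:singleton-essential}, the new off-diagonal maps created by $\kappa$ and $\delta$ are $\cong$-equivalent to maxima of products of the original maps, and under Assumption~\ref{Assum3} such a maximum is $\cong$-equivalent to its $\precsim$-largest term (if $f\in o(g)$ then eventually $\max(f,g)=g$), hence to a single product of original maps; so the multiplicative closure of the new maps is, up to $\cong$, contained in that of the old ones, and the trichotomy is inherited. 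For $\sigma$ one first notes that, the perturbation being non-zero, Assumption~\ref{Assum3} forces $m:=|S|\cdot\max_{z\neq t}p(z,t)$ to be everywhere positive, so $\sigma(p)(x,y)=p(x,y)/m$; comparing two products of such quotients reduces, after clearing the common power of $m$, to comparing two products of equally many original maps. The "non-zero implies positive" clause survives because under Assumption~\ref{Assum3} the set of positive transitions does not vary with $\e$, so the hitting probabilities defining $\kappa$ and $\delta$ and the quotients defining $\sigma$ are each either identically $0$ or everywhere positive.

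Next I would transfer the conclusion back to $p$. Each transformation relates the stationary distributions of input and output up to $\cong$: Proposition~\ref{prop:essential-trans}.\ref{prop:essential-trans3} for $\kappa$ (combined with Proposition~\ref{prop:essential-graph}.\ref{prop:essential-graph2}, so that $\mu(z)\cong\mu(x)\cong\tilde\mu(\cup E)$ for $z$ in the collapsed class and $\mu(z)\cong\tilde\mu(z)$ otherwise); the correspondence from the proof of Proposition~\ref{prop:trans-del} for $\delta$ (via Lemma~\ref{lem:state-del}.\ref{lem:state-del2}, with $\sum_{z}\mu(z)$ over essential states bounded away from $0$ for small $\e$ by Observation~\ref{obs:inf-bound} and Lemma~\ref{lem:ess-weight}, whence $\mu(z)\cong\tilde\mu(z)$ on essential states); and an exact identity of stationary distributions for $\sigma$ (since $\mu(\sigma(p)-I)=\mu(p-I)/m$ when $m>0$). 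Composing them, any stationary distribution map $\mu$ of $p$ yields a stationary distribution map $\mu^{q}$ of $q$ with $\mu(y)\cong\mu^{q}(B)$. As $B$ is transient for $q$ and $q$ satisfies Assumption~\ref{Assum3}, Proposition~\ref{prop:transient-vanish} gives $\lim_{\e\to0}\mu^{q}_\e(B)=0$, hence $\lim_{\e\to0}\mu_\e(y)=0$; since $\mu$ was arbitrary, $y$ is vanishing (indeed fully so). Therefore every state is stable or fully vanishing.

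The main obstacle I anticipate is the bookkeeping of the second paragraph: checking carefully that both halves of Assumption~\ref{Assum3} survive $\sigma$, $\kappa$, and $\delta$, which introduce maxima, products, and (for $\sigma$) quotients of the original maps. Everything becomes routine once one observes that under Assumption~\ref{Assum3} the chain's support pattern is $\e$-independent and that the $\cong$-class of a finite maximum of monomials in the original maps equals that of a single such monomial; under the weaker Assumption~\ref{Assum1} this fails, which is precisely the mechanism producing states that are neither stable nor fully vanishing, such as $y$ in Figure~\ref{fig:trans-del5}.
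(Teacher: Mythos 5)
Your proposal is correct and takes essentially the same route as the paper's proof, which simply observes that the procedure underlying Theorem~\ref{thm:stable-states} deletes only states that are transient at some level of the recursion and that Proposition~\ref{prop:transient-vanish} makes those states fully vanishing. The two points you elaborate---that Assumption~\ref{Assum3} survives $\sigma$, $\kappa$, and $\delta$ so that Proposition~\ref{prop:transient-vanish} applies at intermediate levels, and that stationary weights transfer back up to $\cong$---are exactly what the paper's two-sentence argument leaves implicit.
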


\subsection{Generalization of Young's technique}

\noindent Our approach to prove the existence of and compute the stable states of a perturbation is different from Young's approach~\cite{Young93} which uses a finite version of the Markov chain tree theorem. In this section we investigate how far Young's technique can be generalized. This will suggest that we were right to change approaches, but it will also yield a decidability result in Proposition~\ref{prop:stable-dec}.

Lemma~\ref{lem:gen-Young} below is a generalization of \cite[Lemma 1]{Young93}. Both proofs use the Markov chain tree theorem, but they are significantly different nonetheless. Let $p$ be a perturbation with state space $S$. As in~\cite{Young93} or \cite{GKMS15}, for all $x\in S$ let $\mathcal{T}_x$ be the set of the spanning trees of (the complete graph of) $S \times S$ that are directed towards $x$. For all $x\in S$ let $\beta^x_\e := \max_{T\in \mathcal{T}_x} \prod_{(z,t)\in T}p_\e(z,t)$.

\begin{lemma}\label{lem:gen-Young}
A state $x$ of an irreducible perturbation with state space $S$ is stable iff $\beta^{y} \precsim \beta^{x}$ for all $y\in S$.
\end{lemma}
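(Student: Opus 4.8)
The plan is to combine the Markov chain tree theorem (in its classical, irreducible form, as used by Young and stated, e.g., in \cite{GKMS15}) with the characterization of stability via $\precsim$ provided earlier in the paper. Recall that the Markov chain tree theorem states that for an irreducible chain with transition matrix $p_\e$, the unique stationary distribution satisfies $\mu_\e(x) = \frac{\sum_{T\in\mathcal{T}_x}\prod_{(z,t)\in T}p_\e(z,t)}{\sum_{y\in S}\sum_{T\in\mathcal{T}_y}\prod_{(z,t)\in T}p_\e(z,t)}$. So the numerator is $\sum_{T\in\mathcal{T}_x}\prod_{(z,t)\in T}p_\e(z,t)$, and the whole point is to replace this sum by its dominant term $\beta^x_\e = \max_{T\in\mathcal{T}_x}\prod_{(z,t)\in T}p_\e(z,t)$ up to $\cong$, which is legitimate under Assumption~\ref{Assum1}.

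First I would establish that $\sum_{T\in\mathcal{T}_x}\prod_{(z,t)\in T}p_\e(z,t) \cong \beta^x$ for every $x$. This uses the algebraic facts about $\precsim$ collected in the appendix: a finite sum of nonnegative functions is $\precsim$ each of them multiplied by the number of summands from one side (trivially $\beta^x\precsim\sum_T\prod p$), and from the other side, Assumption~\ref{Assum1} guarantees that the finitely many products $\prod_{(z,t)\in T}p_\e(z,t)$ — which all lie in the multiplicative closure of the transition maps — are totally preordered by $\precsim$, so their sum is $\precsim$-bounded by $|\mathcal{T}_x|$ times their maximum $\beta^x$. Hence $\sum_T\prod p\cong\beta^x$. (One should note there is a subtlety: the tree products involve factors $p_\e(z,t)$ which may include the diagonal entries $p_\e(z,z)$; but a spanning tree directed towards $x$ has no self-loops, so every factor is an off-diagonal entry $p_\e(z,t)$ with $z\neq t$, and these are exactly the maps generating the multiplicative closure in Assumption~\ref{Assum1}.)

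Next I would put this together. By the Markov chain tree theorem, $\mu_\e(x) = \beta^x_\e \cdot N_\e^{-1}\cdot(1+o(1))$-type estimate, or more precisely $\mu(x)\cong \beta^x\cdot\big(\sum_{y\in S}\beta^y\big)^{-1}$ as functions of $\e$, where again the denominator $\sum_y\sum_{T\in\mathcal{T}_y}\prod p$ is $\cong \max_y\beta^y = \beta^{x_0}$ for whichever $x_0$ maximizes $\beta^\cdot$ up to $\precsim$ (using Assumption~\ref{Assum1} once more, now on the union over all $y$ of all tree products). Therefore $\mu(x)\cong \beta^x/\beta^{x_0}$. Now $x$ is stable, by Definition~\ref{defn:mcp-ss}, iff $\liminf_{\e\to 0}\mu_\e(x)>0$; since the stationary distribution is unique (irreducibility) this is the only candidate family $(\mu_\e)$, and $\liminf\mu_\e(x)>0$ is equivalent to $\mu(x)\cong 1$, i.e. to $1\precsim\mu(x)$, i.e. to $\beta^{x_0}\precsim\beta^x$. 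Since $\beta^y\precsim\beta^{x_0}$ for all $y$ by choice of $x_0$, this last condition is exactly "$\beta^y\precsim\beta^x$ for all $y\in S$", which is the claimed characterization.

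The main obstacle I anticipate is the bookkeeping needed to move cleanly between the "$\liminf>0$" formulation of stability and the "$\cong$" / "$\precsim$" formulation: I need that for a family bounded in $[0,1]$, $\liminf_{\e\to 0}\mu_\e(x)>0$ is equivalent to $1\precsim\mu(x)$ (i.e. $\mu(x)$ is bounded away from $0$ near $0$), which is immediate, and that $\cong$ really does transport this property, which follows from its definition. A secondary point requiring care is the direction "$\precsim$ respects finite sums and maxima" — these are precisely the appendix lemmas, but one must be sure Assumption~\ref{Assum1} is invoked correctly so that the relevant finite families are genuinely totally preordered (it is applied to the multiplicative closure, and all tree products live there). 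No essential difficulty beyond that; the irreducibility hypothesis is what makes the stationary distribution unique and the Markov chain tree theorem directly applicable, so there is no need for the machinery of Sections~\ref{sect:pp-ess}--\ref{sect:aqa} here.
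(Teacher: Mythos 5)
Your overall route---the Markov chain tree theorem plus replacing each tree-sum by its dominant term---is exactly the paper's route, but there is one genuine gap: you invoke Assumption~\ref{Assum1} twice, and the second invocation is not available. Lemma~\ref{lem:gen-Young} is stated for an arbitrary irreducible perturbation, with no comparability hypothesis, and the paper needs it in exactly that generality: the proof of Observation~\ref{obs:unstable-constr} applies the lemma to a perturbation in which $\beta^{x_1}$ and $\beta^{y_1}$ are \emph{not} $\precsim$-comparable (that is the whole point of that construction), and Proposition~\ref{prop:stable-dec} applies it after explicitly dropping the comparability assumption. Your argument manufactures a state $x_0$ with $\beta^{y}\precsim\beta^{x_0}$ for all $y$ and writes $\mu(x)\cong\beta^x/\beta^{x_0}$; without Assumption~\ref{Assum1} the family $(\beta^y)_{y\in S}$ need not be totally preordered, no such $x_0$ need exist, and your chain of equivalences collapses.

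The fix is small and is essentially what the paper does. First, your initial use of Assumption~\ref{Assum1} is unnecessary: $\beta^x_\e$ is a \emph{pointwise} maximum over the finite set $\mathcal{T}_x$, so $\beta^x_\e\leq q^x_\e\leq|\mathcal{T}_x|\,\beta^x_\e$ holds for every $\e$ with no hypothesis at all, where $q^x_\e:=\sum_{T\in\mathcal{T}_x}\prod_{(z,t)\in T}p_\e(z,t)$ and $|\mathcal{T}_x|\leq 2^{|S|^2}$. For the forward direction, compare $q^x$ to each $q^y$ directly: $\beta^y\precsim\beta^x$ gives constants $c,\e_0$ with $q^y_\e\leq|\mathcal{T}_y|\,\beta^y_\e\leq c\,|\mathcal{T}_y|\,\beta^x_\e\leq c\,|\mathcal{T}_y|\,q^x_\e$ for $\e<\e_0$, whence $\mu_\e(x)\geq (c\,|S|\,2^{|S|^2})^{-1}$. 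For the converse, unfold the negation of $\precsim$: if $\neg(\beta^{y}\precsim\beta^{x})$ then for every $c>0$ there are arbitrarily small $\eta$ with $c\,2^{|S|^2}\beta^x_\eta<\beta^y_\eta$, hence $c\,q^x_\eta<q^y_\eta$ and $c\,\mu_\eta(x)<\mu_\eta(y)\leq 1$, so $\liminf_{\e\to 0}\mu_\e(x)=0$. Everything else in your write-up (uniqueness of the stationary distribution under irreducibility, the equivalence of $\liminf_{\e\to 0}\mu_\e(x)>0$ with $1\precsim\mu(x)$) is fine.
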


\noindent Assumption~\ref{Assum1} and Lemma~\ref{lem:gen-Young} together yield Observation~\ref{obs:gen-pos-ss}, a generalization of existing results about existence of stable states, such as \cite[Theorem 4]{Young93}. The underlying algorithm runs in time $O(n^3)$ where $n$ is the number of states, just like Young's.

\begin{observation}\label{obs:gen-pos-ss}
Let a perturbation $p$ on state space $S$ satisfy Assumption~\ref{Assum1}. If for all $x \neq y$ the map $p(x,y)$ is either identically zero or strictly positive, $p$ has stable states.
\end{observation}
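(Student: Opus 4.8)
The plan is to reduce to the irreducible case and then apply Lemma~\ref{lem:gen-Young}. First I would observe that under Assumption~\ref{Assum1} together with the hypothesis that each $p(x,y)$ with $x\neq y$ is either identically zero or strictly positive, the zero/non-zero pattern of the off-diagonal entries is independent of $\e$; hence the essential graph of $p$ (Definition~\ref{defn:essential-class}) coincides, up to self-loops, with the support digraph of the Markov chains $(X_n^{(\e)})_{n\in\N}$ for every fixed $\e$. In particular the sink SCCs of this digraph are the essential classes $E_1,\dots,E_k$, and restricting $p$ to any one $E_i$ yields a sub-perturbation that is irreducible in the sense of Definition~\ref{defn:mcp-ss}.

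Next I would handle one essential class $E_i$ in isolation. The restriction $p\restriction_{E_i}$ is an irreducible perturbation, so Lemma~\ref{lem:gen-Young} applies: a state $x\in E_i$ is stable for $p\restriction_{E_i}$ iff $\beta^{y}\precsim\beta^{x}$ for all $y\in E_i$, where $\beta^{x}$ is the maximum weight of a spanning tree of $E_i\times E_i$ directed towards $x$. Since Assumption~\ref{Assum1} makes the multiplicative closure of the transition maps totally preordered by $\precsim$, the finite family $\{\beta^{x}\}_{x\in E_i}$ has a $\precsim$-maximum; any $x$ attaining it is therefore stable for $p\restriction_{E_i}$. Thus each essential class, considered on its own, has at least one stable state.

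It then remains to lift stability from the isolated class to the whole perturbation. I would invoke the machinery of Section~\ref{sect:pp-ess}: by Proposition~\ref{prop:essential-trans} stability is preserved and reflected by the essential collapse $\kappa(p,x)$ of an essential class around any of its states, and by Proposition~\ref{prop:essential-graph}.\ref{prop:essential-graph2} all states within a single essential class are simultaneously stable or vanishing. Alternatively, and more directly, one can combine the collapses of all the $E_i$ with the transient deletion $\delta$ (Definition~\ref{defn:td}, Proposition~\ref{prop:trans-del}) and the outgoing scaling (Proposition~\ref{prop:ogs}) exactly as in the recursive procedure preceding Theorem~\ref{thm:stable-states}; the hypothesis on the zero pattern guarantees this procedure stays within perturbations whose off-diagonal entries are identically zero or strictly positive, so Lemma~\ref{lem:gen-Young} remains applicable at each irreducible stage. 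Either way, a state that is stable for some class is stable for $p$, so $p$ has stable states.

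The main obstacle I anticipate is bookkeeping at the interface between the ``irreducible perturbation'' notion used in Lemma~\ref{lem:gen-Young} and the general, possibly disconnected perturbations here: one must be careful that restricting to an essential class $E_i$ genuinely yields an irreducible perturbation (which needs the $\e$-independence of the support pattern, itself a consequence of the zero/strictly-positive hypothesis), and that the collapse/deletion steps do not destroy this hypothesis. Once that is pinned down, the existence of a $\precsim$-maximal spanning-tree weight — immediate from Assumption~\ref{Assum1} and finiteness — does the rest, and the $O(n^3)$ claim follows as in Young's argument since there are at most $n$ candidate roots and minimum directed spanning trees cost $O(n^2)$ each.
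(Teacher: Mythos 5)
There is a genuine error in your first step: you identify the essential graph of the perturbation (Definition~\ref{defn:essential-class}) with the support digraph of the chains, and hence the perturbation's essential classes with the sink SCCs of that support digraph. These are different objects: the essential graph keeps only the arcs $(x,y)$ whose map $p(x,y)$ is $\precsim$-maximal among \emph{all} off-diagonal entries, so it is in general a strict subgraph of the support digraph even when every nonzero $p(x,y)$ is strictly positive. The paper warns about exactly this confusion after Definition~\ref{defn:essential-class}: in Figure~\ref{fig:ess-coll1} the support digraph is strongly connected (every state is essential for the Markov chain at each fixed $\e$), yet the perturbation's essential classes are $\{x,y\}$ and $\{t\}$. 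Consequently your claim that restricting $p$ to a perturbation-essential class $E_i$ yields an irreducible perturbation fails: states of $E_i$ may leak out of $E_i$ with small positive probability, so $p\mid_{E_i\times E_i}$ is not even stochastic, and Lemma~\ref{lem:gen-Young} cannot be applied to it. Your ``lifting'' step inherits the problem: Proposition~\ref{prop:essential-trans} relates stability in $p$ to stability of the merged state $\cup E$ in the \emph{collapsed} perturbation $\kappa(p,x)$, not to stability in the \emph{restriction} $p\mid_{E\times E}$, so it does not bridge the gap you need.

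The repair is small and turns your argument into the paper's proof: decompose along the sink SCCs $E'_1,\dots,E'_{k'}$ of the digraph with an arc $(x,y)$ whenever $p(x,y)>0$ (this digraph is indeed $\e$-independent by the zero/strictly-positive hypothesis, as you correctly note). Each $p\mid_{E'_i\times E'_i}$ is a genuine irreducible perturbation, every stationary distribution of $p$ is a convex combination of the unique stationary distributions of these restrictions, so a state is stable for $p$ iff it is stable for some $p\mid_{E'_i\times E'_i}$; then Lemma~\ref{lem:gen-Young} together with Assumption~\ref{Assum1} (which totally preorders the spanning-tree weights $\beta^x$, hence yields a $\precsim$-maximum) produces a stable state in each piece. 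Your alternative route via the recursive procedure of Theorem~\ref{thm:stable-states} is not wrong, but it already proves existence under Assumption~\ref{Assum1} alone and does not use Lemma~\ref{lem:gen-Young}, so it misses the point of this observation, which is to show how far Young's spanning-tree technique extends.
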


\noindent The stable states of a perturbation are computable even without the positivity assumption from Observation~\ref{obs:gen-pos-ss}, but their existence is no longer guaranteed by the same proof. In this way, Observation~\ref{obs:gen-ss} is like the existential part of Theorem~\ref{thm:teaser}, but with a bad complexity.

\begin{observation}\label{obs:gen-ss}
Let $F$ be a set of perturbation maps of type $I\to [0,1]$ for some $I$. Let us assume that $F$ is closed under multiplication by elements in $F$ and by characteristic functions of decidable subsets of $I$, that $\precsim$ is decidable on $F\times F$, and that the supports of the functions in $F$ are uniformly decidable. If $f\precsim g$ or $g\precsim f$ for all $f,g\in F$, stability is decidable  in $O(n^5)$ for the perturbations $p$ such that $x \neq y \Rightarrow p(x,y) \in F$.
\end{observation}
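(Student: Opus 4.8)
The plan is to mimic the recursive procedure of Theorem~\ref{thm:stable-states}, but carefully track that every transformation stays inside the class of perturbations whose maps lie in $F$ (up to the closure properties we assumed), and that each step is effectively computable. First I would observe that deciding the essential graph is possible: an arc $(x,y)$ is present iff $p(x,y)$ is non-zero (which by decidability of supports we can test) and $p(z,t)\precsim p(x,y)$ for all $z,t$, which is decidable since $\precsim$ is decidable on $F\times F$ and the multiplicative closure stays in $F$. From the essential graph one computes SCCs, sink SCCs, and the resulting case disjunction (zero perturbation / empty essential graph / non-singleton essential class / only singleton classes) in polynomial time.

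Next I would handle each transformation. For the outgoing scaling $\sigma$, Definition~\ref{defn:os} only involves multiplying maps by the constant $m=|S|\cdot\max_{z\neq t}p(z,t)$ and applying $\div_{|S|}$; since $F$ is closed under multiplication by its own elements, and $\div_n$ applied to comparable maps $f\precsim g$ just yields $(f\div_n g)$ with $(f\div_n g)\cdot g=f$ (Observation~\ref{obs:prec-div}) — so $f\div_n g$ is recovered up to $\cong$ inside $\eqclass{F}$ — the scaled maps stay in $F$ up to $\cong$, which is all stability cares about by Lemma~\ref{lem:congp-congmu}. Here the closure under characteristic functions of decidable subsets of $I$ is what lets us realize the case split in $\div_n$ (the "$g(\e)=0$" branch versus the "$g(\e)>0$" branch) by an actual function in $F$. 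For the essential collapse and the transient deletion, rather than compute the dynamic quantities $\p^x(X_{\tau^+}=y)$ directly, I would use Lemma~\ref{lem:p-cong-max} and Lemma~\ref{lem:singleton-essential} (equivalently, the shrinking $\eqclass{\chi}$ of Definition~\ref{defn:ao} and Lemma~\ref{lem:pop}), which replace these quantities up to $\cong$ by finite maxima of products of the $p(x,y)$ along simple paths — finitely many maps, each in $F$ by multiplicative closure, and the maximum is computable because $\precsim$ is decidable, so $\max_{\eqclass{\precsim}}$ of a finite set is computable. Thus each round produces a strictly smaller perturbation whose maps are again (representable by) elements of $F$, and each round is polynomial-time. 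Correctness of the final answer follows from Propositions~\ref{prop:ogs}.\ref{prop:ogs2}, \ref{prop:essential-trans}.\ref{prop:essential-trans4}, and \ref{prop:trans-del}, exactly as in Theorem~\ref{thm:stable-states}: the states surviving to the terminal (empty-essential-graph, zero) perturbation are precisely the stable ones.

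For the complexity bound $O(n^5)$ I would invoke the "rough analysis" already mentioned after Theorem~\ref{thm:stable-states}: the naive procedure (alternating outgoing scaling with a single essential collapse or transient deletion, rather than the fused $\eqclass{\chi}\circ\eqclass{\sigma}$ of Algorithm~\ref{algo:br}) runs in $O(n^5)$, and under the stated assumption that the $F$-operations cost $O(1)$, this carries over verbatim — the only extra cost is the decidability tests for supports and for $\precsim$, which are absorbed into the constant-time assumption. The main obstacle, and the step deserving the most care, is the bookkeeping that the outputs of $\sigma$, $\kappa$, $\delta$ genuinely land in $F$ (not merely in $\eqclass{F}$): the product closure handles $\kappa$ and $\delta$ via the max-of-path-products formulas, but $\sigma$ forces a genuine division, so one must phrase the invariant as "each map is $\cong$-equivalent to some explicitly constructed element of $F$", and verify that the closure under characteristic functions of decidable subsets of $I$ suffices to realize $\div_n$ as such an element — after which Lemma~\ref{lem:congp-congmu} guarantees the $\cong$-slippage is harmless for stability.
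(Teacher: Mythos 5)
Your plan has a genuine gap at exactly the step you flag as ``deserving the most care'': the outgoing scaling. The hypotheses of Observation~\ref{obs:gen-ss} give closure of $F$ under multiplication by elements of $F$ and by characteristic functions of decidable subsets of $I$, and decidability of $\precsim$ and of supports --- but \emph{not} closure under (or computability of) ordered division. The characteristic-function closure only lets you realize the case split in $\div_n$ (zero branch versus positive branch of the divisor); it does not produce any element of $F$ that is $\cong$-equivalent to the pointwise quotient $f(\e)/g(\e)$, and in general no such element exists. This is precisely why Theorem~\ref{thm:vanish-time} explicitly assumes that ordered division between equivalence classes is computable, while Observation~\ref{obs:gen-ss} deliberately does not: the recursion $\eqclass{\chi}\circ\eqclass{\sigma}$ cannot be executed under the stated hypotheses without further devices (e.g., carrying weights as formal fractions with numerator and denominator in $F$ and comparing by cross-multiplication --- workable in principle, but it needs its own treatment of the zero-denominator branch and is not what you propose). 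A second, smaller problem is the complexity citation: the paper's ``rough analysis'' of the naive alternating procedure gives $O(n^7)$, and only says a better analysis \emph{might} find $O(n^5)$, so your bound is not established by what you invoke.

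The paper's proof takes a different route that sidesteps division entirely, and this Observation sits in the section on generalizing Young's technique for a reason. Since $f\precsim g$ forces the support of $f$ to be contained in that of $g$ near $0$, the supports $I_{xy}$ of the transition maps are totally ordered by inclusion; one partitions $I$ into at most $n^2+1$ consecutive differences of these supports (here is where closure under characteristic functions of decidable subsets and uniform decidability of supports are used), and by Lemma~\ref{lem:cong}.\ref{lem:cong8} stability for $p$ is the conjunction of stability for each restriction. On each piece every non-zero transition map is strictly positive, so Observation~\ref{obs:gen-pos-ss} applies: the generalized Young criterion of Lemma~\ref{lem:gen-Young} (comparing maximal weights of directed spanning trees, all products lying in $F$ by multiplicative closure, all comparisons decidable) decides stability in $O(n^3)$ per piece, giving $O(n^5)$ overall. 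If you want to pursue your transformation-based route, you would need to add the formal-fraction bookkeeping for $\sigma$ and then still justify an $O(n^5)$ bound from scratch.
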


\noindent The assumption $f\precsim g$ or $g\precsim f$ for all $f,g\in F$ from Observation~\ref{obs:gen-ss} corresponds to Assumption~\ref{Assum1}. Proposition~\ref{prop:stable-dec} below drops it while preserving decidability of stability, but at the cost of an exponential blow-up because the supports of the maps are no longer ordered by inclusion.

\begin{proposition}\label{prop:stable-dec}
Let $F$ be a set of perturbation maps of type $I\to [0,1]$ for some $I$. Let us assume that $F$ is closed under multiplication by elements in $F$ and by characteristic functions of decidable subsets of $I$, that $\precsim$ is decidable on $F\times F$, and that the supports of the functions in $F$ are uniformly decidable. Then stability is decidable for the perturbations $p$ such that $x \neq y \Rightarrow p(x,y) \in F$.
\end{proposition}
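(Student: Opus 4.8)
The plan is to chop the index set $I$ into finitely many decidable pieces on which stability becomes tractable by the methods of the previous sections, to solve the problem on each piece, and to recombine the answers. Recombination rests on a \emph{gluing property} of stability: because in Definition~\ref{defn:mcp-ss} the stationary distribution $\mu_\e$ may be chosen independently for each $\e$, a state is stable for a perturbation over $I$ iff it is stable for the restricted perturbation over every piece of a finite partition $I = A_1\sqcup\dots\sqcup A_m$ that accumulates at $0$. Pieces not accumulating at $0$ are harmless: being finitely many they affect no $\liminf_{\e\to 0}$; equivalently, using closure of $F$ under multiplication by characteristic functions of decidable sets, on such a piece every transition map is $\cong$-zero after restriction, so the decision procedure trivially returns all states there. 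Hence it suffices to compute a suitable partition and run the procedure piecewise, intersecting the outputs.

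Next I would isolate the finitely many maps the procedure actually inspects. The algorithm \KwHub of Section~\ref{sect:algo} (equally, the slower recursive procedure preceding Theorem~\ref{thm:stable-states}) makes at most $n$ recursive calls on an $n$-state input, each working on a graph with at most $n$ vertices and performing one outgoing scaling (a division by the current maximal weight) followed by maxima of products along simple paths; so every weight it ever produces is, up to $\cong$, obtained from the classes $[p(x,y)]$ with $x\neq y$ by at most $L(n)$ applications of $[\cdot]$, $\max$ and $[\div]$ in the ordered-division semiring of Lemma~\ref{lem:odsm-f}, where $L(n)$ is an explicit computable bound on the size of a run. Let $\mathcal G$ be the set of $\cong$-classes so reachable; it is finite and effectively enumerable, and $|\mathcal G|$ is exponential in $n$ --- this is the announced blow-up. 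Two features make $\mathcal G$ usable. First, by Lemma~\ref{lem:odsm-f} each element of $\mathcal G$ lies in the $\cong$-class of a product of the base maps $p(x,y)$, so a comparison in $\mathcal G$ reduces to a comparison between two products of base maps, hence between two elements of $F$, which is decidable by hypothesis. Second, the supports of those products are finite intersections of the uniformly decidable supports of the $p(x,y)$, hence decidable.

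Now I would build the partition. For each of the finitely many pairs of elements of $\mathcal G$ that are not $\precsim$-comparable over $I$, split $I$ into the subset on which the corresponding pointwise inequality between the representing products of base maps holds, and its complement; on each half that pair becomes $\precsim$-comparable with multiplicative constant $1$, and restricting to a smaller index set only creates further comparabilities, so after these finitely many splits $I$ is partitioned into finitely many decidable pieces on each of which \emph{the whole of $\mathcal G$ is $\precsim$-totally ordered}. On a piece $A$ accumulating at $0$, the restricted family $\{f\cdot\mathbf{1}_A : f\in F\}$ still meets the computability hypotheses, and since the stability procedure only ever inspects members of $\mathcal G$ --- now totally ordered there --- it runs without ambiguity and returns the stable states of the restricted perturbation (on the remaining pieces it returns all states). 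The stable states of $p$ are the intersection of these outputs, and the proof is complete.

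The main obstacle is justifying that last step, since the results of Sections~\ref{sect:pp-ess} and~\ref{sect:aqa} are proved under Assumption~\ref{Assum1}, which demands that the \emph{entire} multiplicative closure of the transition maps be totally preordered, whereas on an accumulating piece one only controls the bounded family $\mathcal G$ --- and no finite partition can force the full closure of the restricted maps to be totally preordered, as $\e\mapsto\e^{2}$ and $\e\mapsto\e^{2+\cos(\e^{-1})}$ show (the subsequential limits of $\e\mapsto\cos(\e^{-1})$ fill $[-1,1]$, so $I$ cannot be cut into finitely many pieces on which that map converges). One discharges this by checking that every comparison, essential-graph computation and path-maximum the procedure performs involves only elements of $\mathcal G$, so that all of Lemmas~\ref{lem:p-cong-max}, \ref{lem:singleton-essential}, \ref{lem:pop}, \ref{lem:gen-Young} and Proposition~\ref{prop:ogs} that it uses go through verbatim once ``Assumption~\ref{Assum1}'' is weakened to ``the products of base maps of degree at most $L(n)$ are totally preordered''. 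Two minor points remain: the splitting sets must be decidable, which is where one uses that pointwise comparison of $F$-maps is effective --- a natural strengthening, in the paper's computational model, of the uniform decidability of their supports (i.e.\ of pointwise comparison with the zero map); and the exponential cost is intrinsic, because after restriction the supports of the $p(x,y)\cdot\mathbf{1}_A$ need no longer be ordered by inclusion, so the pieces cannot be merged and both $|\mathcal G|$ and the number of pieces remain exponential in $n$.
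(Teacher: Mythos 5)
Your outer strategy---cut $I$ into finitely many decidable pieces, decide stability on each piece, and glue the answers with Lemma~\ref{lem:cong}.\ref{lem:cong8}---matches the paper's in shape, but the cuts you make are not licensed by the hypotheses, and that is a genuine gap. You split $I$ along sets of the form $\{\e\in I : f(\e)\le g(\e)\}$ for $f,g$ built from the transition maps, and you need these sets to be decidable in order to feed the restricted maps $f\cdot\mathbf{1}_A$ back into $F$ and into the decision procedure. The proposition grants only decidability of the asymptotic relation $\precsim$ and of the supports $\{\e : f(\e)\neq 0\}$; pointwise comparison of two maps of $F$ is a strictly stronger oracle than pointwise comparison with the zero map, and you concede as much when you call it ``a natural strengthening.'' A proof under strengthened hypotheses does not establish the statement. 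A second, independent weakness is the step asserting that the machinery of Sections~\ref{sect:pp-ess} and~\ref{sect:aqa} ``goes through verbatim'' once Assumption~\ref{Assum1} is weakened to comparability of products of degree at most $L(n)$: since, as you observe yourself, no finite partition can force full Assumption~\ref{Assum1} on the pieces, your argument stands or falls on this bounded-degree variant of the entire correctness proof, and you assert it rather than check it. (Also, elements of $\mathcal G$ need not be $\cong$ to products of base maps---ordered division leaves the multiplicative closure---so even reducing comparisons in $\mathcal G$ to decidable comparisons in $F$ requires a cross-multiplication argument with care around the zero sets, and enumerating $\mathcal G$ before knowing the order is itself circular.)

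The paper avoids all of this. It inducts on the number of transition maps that are non-zero but have zeros, splitting $I$ only along their \emph{supports} (which are assumed decidable) and gluing with Lemma~\ref{lem:cong}.\ref{lem:cong8}; in the base case every non-zero map is strictly positive, one restricts to the sink SCCs of the positivity graph, and stability is read off from Lemma~\ref{lem:gen-Young}, whose spanning-tree criterion $\beta^{y}\precsim\beta^{x}$ characterizes stability for irreducible perturbations with \emph{no} comparability assumption at all. So nothing ever needs to be made totally ordered, and none of the Assumption~\ref{Assum1}-based results are invoked. If you want to salvage your route, the fix is to replace the pointwise cuts by cuts along supports and to replace the run of $\KwHub$ by the Lemma~\ref{lem:gen-Young} test---which is exactly the paper's proof.
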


\subsection{What does Algorithm~\ref{algo:br} compute?}

Applying sequentially outgoing scaling, essential collapse, and transient deletion terminates. So it amounts to an \emph{induction proof principle} for finite graphs with arcs labeled in an ordered-division semiring. Observation~\ref{obs:span-tree} is proved along this principle. It can also be proved by a very indirect argument using Lemma~\ref{lem:gen-Young} and Theorem~\ref{thm:vanish-time}, but the proof using induction is simple and from scratch.

\begin{observation}\label{obs:span-tree}
Let $(F,0,1,\cdot, \leq,\div)$ be an ordered-division semiring, and let $P:S\times S \to F$ correspond to a strongly connected digraph, where an arc is absent iff its weight is $0$. Then $\KwHub(S,P)$ returns the roots of the maximum directed spanning trees.
\end{observation}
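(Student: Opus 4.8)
The plan is to use the induction proof principle that the paper has just highlighted: since applying outgoing scaling, then one essential collapse per essential class, then a transient deletion strictly shrinks the graph and terminates, we can argue by induction on the number of vertices of the strongly connected digraph $P$. The base case is a single vertex, for which $\KwHub$ returns that vertex, which is trivially the unique root of the (empty) maximum spanning tree. For the inductive step we must show two things: first, that the maximum-directed-spanning-tree roots are preserved by one round of $\eqclass{\chi}\circ\eqclass{\sigma}$, i.e.\ one call to \KwHubRec; and second, that strong connectedness is preserved, so the induction hypothesis applies to the smaller graph $\eqclass{\chi}\circ\eqclass{\sigma}(P)$.

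First I would observe that $\eqclass{\sigma}$ divides every proper-arc weight by the common factor $M = \max_{z\neq t}P(z,t)$ and sets self-loops to $1$; since self-loops never appear in any spanning tree, and since dividing all $n(n-1)$ candidate tree edges' weights by $M$ multiplies the weight $\prod_{(z,t)\in T}P(z,t)$ of every spanning tree rooted anywhere by exactly $M^{-(n-1)}$, the $\leq$-ordering of tree weights — and hence the set of roots attaining the maximum — is unchanged by $\eqclass{\sigma}$. (One must check $M\neq 0$: strong connectedness on $\geq 2$ vertices forces some proper arc, hence $M\neq 0$, and then $\div$ is legitimate since every weight is $\leq M$.) After $\eqclass{\sigma}$ the essential graph $A=\{(x,y):P(x,y)=1\}$ is nonempty, and because $P$ is strongly connected the whole vertex set is one SCC in the underlying graph of $P$; the sink SCCs $E_1,\dots,E_k$ of $A$ are condensed and the transient vertices $T$ are deleted. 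The heart of the argument is the standard exchange/contraction lemma for maximum (equivalently, via $\log$, minimum) spanning arborescences: contracting a set of arcs that necessarily lie in every maximum arborescence — here the arcs of weight $1$ inside each $E_i$, which dominate everything — and then taking, between condensed vertices $\cup E_i,\cup E_j$, the best weight over all $T$-internal paths $\Gamma_T(E_i,E_j)$, yields a graph whose maximum arborescences are in weight-preserving bijection with those of $P$, with roots in the corresponding $E_i$. This is exactly what $\eqclass{\chi}$ computes (Definition~\ref{defn:ao}), so the roots of the maximum spanning trees of $\eqclass{\chi}\circ\eqclass{\sigma}(P)$, each a condensed vertex $\cup E_i$, are precisely the $E_i$ containing a root of a maximum spanning tree of $P$; unfolding $\cup$ recovers the original roots. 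Strong connectedness of $\eqclass{\chi}\circ\eqclass{\sigma}(P)$ follows because $P$ strongly connected implies that between any two $E_i,E_j$ there is a path in $P$, which after the scaling and condensation survives as a positive-weight arc in $\eqclass{\chi}\circ\eqclass{\sigma}(P)$ (possibly routed through $T$, whence the Line~\ref{line:global-max} contribution). The induction hypothesis then finishes the step.

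The main obstacle I anticipate is making the contraction/exchange argument rigorous over an abstract ordered-division semiring rather than over $\mathbb{R}_{>0}$ with ordinary multiplication: one cannot literally take logarithms, so the classical proof that every maximum arborescence uses all the weight-$1$ arcs inside an $E_i$, and that inter-component edges may be replaced by best $T$-paths, must be redone using only the semiring axioms (commutativity, $\max_{\leq}$ as addition, distributivity, and $1=\max_{\leq}F$). The key facts one needs are that $1$ is absorbing for $\max$ and multiplicatively neutral, and that $f\cdot g\leq f$ for all $f,g$ (which follows since $g\leq 1$ and multiplication is monotone); these let the exchange argument go through. A secondary subtlety is bookkeeping: matching "roots" across the $\cup$-condensation and verifying, using Observation~\ref{obs:ao-refl-irrel}, that the irrelevance of self-loops causes no mismatch. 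Both are routine once the semiring-level monotonicity lemma $f\cdot g\leq f$ is in hand, so I expect the proof to be short and self-contained, as the paper advertises.
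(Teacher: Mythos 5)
Your plan is essentially the paper's own proof: induction along the termination of the algorithm, with the three facts that outgoing scaling divides every spanning tree's weight by the same factor $M^{|S|-1}$ and so preserves the ordering, that weight-$1$ arcs inside an essential class contribute nothing to the product so the collapse is root-preserving, and that transient vertices cannot be maximum roots while inter-class segments of a maximum tree must be maximum-weight $T$-paths, matching $\eqclass{\chi}$. One small caution: the $E_i$-internal weight-$1$ arcs do not all "necessarily lie in every maximum arborescence" (a tree cannot contain them all); the correct statement, which the paper uses, is that any maximum tree can be rerouted within $E_i$ using such arcs without changing its weight, and your monotonicity observations ($g\leq 1$, hence $f\cdot g\leq f$) are exactly what is needed to carry this out in the abstract semiring.
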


\noindent Note that finding the roots from Observation~\ref{obs:span-tree} is also doable in $O(n^3)$ by computing the maximum spanning trees rooted at each vertex, by~\cite{GGST86} which uses the notion of \emph{heap}, whereas $\KwHub$ uses a less advanced algorithm.

Observation~\ref{obs:span-tree} may be extended to non strongly connected digraphs by considering the sink SCCs independently, but alternatively it is not obvious how to generalize the notion of maximal spanning tree into a notion that is meaningful for non-strongly connected graphs. Nevertheless, the vertices returned by $\KwHub(S,P)$ are the one in $S$ that somehow attract the more flow/traffic according to $P$, hence the name $\KwHub$.

One last algorithmic remark: from the proof of Proposition~\ref{prop:complexity} we see that Tarjan's algorithm is an overkill to get a complexity of $O(n^3)$. Indeed, combining several basic shortest path-algorithms would have done the trick, but using Tarjan's algorithm should make the computation of $\KwHub$ faster by a constant factor.

\subsection{Vanishing time scales}

Under Assumption~\ref{Assum1}, computing $\KwHub$ and considering the intermediate weighted graphs shows the order in which the states are found to be vanishing. Under the stronger Assumption~\ref{Assum3}, a notion of \emph{vanishing time scale} may be defined, with the flavor of non-standard analysis~\cite{Robinson74}. Let $(\mathcal{T},\cdot)$ be a group of functions $I \to ]0,+\infty[$ such that $f \cong g$ or $f\in o(g)$ or $g\in o(f)$ for all $f$ and $g$ in $\mathcal{T}$. The elements of $\eqclass{\mathcal{T}}$ are called the time scales. Let a perturbation $p$ on state space $S$ satisfy Assumption~\ref{Assum3} and let $x \in S$ be deleted at the $d$-th recursive call of $\KwHub(S,\eqclass{p})$. Let $M_1,\dots,M_d$ be the maxima (\textit{i.e.} $M$) from Line~\ref{line:max-label} in Algorithm~\ref{algo:br} at the 1st,...,$d$-th recursive calls, respectively. We say that $x$ vanishes at time scale $\prod_{1 \leq i \leq d} M_i^{-1}$.

Figure~\ref{fig:vanish-ts} suggests that a similar account of vanishing times scales, even just a qualitative one, would be much more difficult to obtain by invoking the Markov chain tree theorem as in~\cite{Young93}. The only stable states is $t$; the state $z$ vanishes at time scale $\eqclass{\e}^{-2}$; and $x$ and $y$ vanish at the same time scale $\eqclass{1}$ although the maximum spanning trees rooteed at $x$ and $y$ have different weights: $\e^4$ and $\e^3$, respectively.

\begin{figure}
\centering
\begin{tikzpicture}[shorten >=1pt,node distance=1.5cm, auto]
 \node[state] (z) {$z$};
  \node[state] (x) [right of = z] {$x$};
  \node[state] (y) [right of = x] {$y$};
   \node[state] (t) [right of = y] {$t$};

\path[->] (z) edge [loop left] node {$1-\e$} ()
		edge  [bend left]node {$\e$} (x)
		(x) edge [bend left]node {$1-\e$} (z)
		edge  [bend left] node {$\e$} (y)
		(y) edge  [bend left] node {$1-\e^2$} (t)
		edge [bend left] node {$\e^2$} (x)
		(t) edge [loop right] node {$1-\e$} ()
		edge  [bend left] node {$\e$} (y);
 \end{tikzpicture}
 \caption{Vanishing time scale}
 \label{fig:vanish-ts}
 \end{figure}
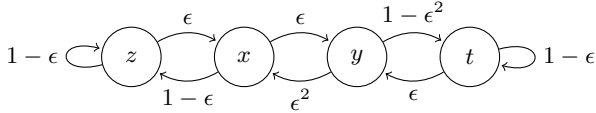

\acks

We thank Ocan Sankur for useful discussions.


\bibliographystyle{abbrvnat}
\bibliography{article}

\begin{thebibliography}{19}
\providecommand{\natexlab}[1]{#1}
\providecommand{\url}[1]{\texttt{#1}}
\expandafter\ifx\csname urlstyle\endcsname\relax
  \providecommand{\doi}[1]{doi: #1}\else
  \providecommand{\doi}{doi: \begingroup \urlstyle{rm}\Url}\fi

\bibitem[Betz and Roux(2014)]{BL14}
V.~Betz and S.~L. Roux.
\newblock Multi-scale metastable dynamics and the asymptotic stationary
  distribution of perturbed {M}arkov chains, 2014.
\newblock http://arxiv-web3.library.cornell.edu/abs/1412.6979v1.

\bibitem[Dijkstra(1959)]{Dijkstra59}
E.~Dijkstra.
\newblock A note on two problems in connexion with graphs.
\newblock \emph{Numerische Mathematik}, 1\penalty0 (1):\penalty0 269--271,
  1959.
\newblock ISSN 0029-599X.
\newblock \doi{10.1007/BF01386390}.
\newblock URL \url{http://dx.doi.org/10.1007/BF01386390}.

\bibitem[Ellison(2000)]{Ellison00}
G.~Ellison.
\newblock Basins of attraction, long-run stochastic stability, and the speed of
  step-by-step evolution.
\newblock \emph{Review of Economic Studies}, 67:\penalty0 17--45, 2000.

\bibitem[Eyring(1935)]{Eyring35}
H.~Eyring.
\newblock The activated complex in chemical reactions.
\newblock \emph{J. Chem. Phys.}, 3:107115, 1935.

\bibitem[Foster and Young(1990)]{FY90}
D.~Foster and P.~Young.
\newblock Stochastic evolutionary game dynamics.
\newblock \emph{Theoretical Population Biology}, 38:\penalty0 219--232, 1990.

\bibitem[Freidlin and Wentzell(1998)]{FW98}
M.~I. Freidlin and A.~D. Wentzell.
\newblock \emph{Random perturbations of dynamical systems}.
\newblock Springer-Verlag, second edition, 1998.

\bibitem[Gabow et~al.(1986)Gabow, Galil, Spencer, and Tarjan]{GGST86}
H.~Gabow, Z.~Galil, T.~Spencer, and R.~Tarjan.
\newblock Efficient algorithms for finding minimum spanning trees in undirected
  and directed graphs.
\newblock \emph{Combinatorica}, 6:\penalty0 109--122, 1986.

\bibitem[Gursoy et~al.(2015)Gursoy, Kirkland, Mason, and Sergeev]{GKMS15}
B.~B. Gursoy, S.~Kirkland, O.~Mason, and S.~Sergeev.
\newblock The {M}arkov chain tree theorem in commutative semirings and the
  state reduction algorithm in commutative semifields.
\newblock \emph{Linear Algebra and its Applications}, 468:\penalty0 184--196,
  2015.
\newblock 18th {ILAS} Conference.

\bibitem[J.R.~Wicks(2005)]{WG05}
A.~G. J.R.~Wicks.
\newblock An algorithm for computing stochastically stable distributions with
  applications to multiagent learning in repeated games.
\newblock In \emph{Proceedings of the 21st Conference in Uncertainty in
  Artificial Intelligence}, 2005.
\newblock http://arxiv.org/abs/1207.1424.

\bibitem[Kandori et~al.(1993)Kandori, Mailath, and Rob]{KMR93}
M.~Kandori, G.~J. Mailath, and R.~Rob.
\newblock Learning, mutation, and long run equilibria in games.
\newblock \emph{Econometrica}, 61:\penalty0 29--56, 1993.

\bibitem[Kohler and Vollmerhaus(1980)]{KV80}
H.-H. Kohler and E.~Vollmerhaus.
\newblock The frequency of cyclic processes in biological multistate systems.
\newblock \emph{J. Math. Biology}, 9:\penalty0 275--290, 1980.

\bibitem[Kramers(1940)]{Kramers40}
H.~Kramers.
\newblock Brownian motion in a field of force and the diffusion model of
  chemical reactions.
\newblock \emph{Physica}, 7:284304, 1940.

\bibitem[Leighton and Rivest(1983)]{LR83}
F.~Leighton and R.~Rivest.
\newblock The {M}arkov chain tree theorem.
\newblock Technical Report LCS-TM-249, MIT, 1983.

\bibitem[Leyzorek et~al.(1957)Leyzorek, Gray, Johnson, Ladew, Meaker, Jr, and
  Seitz]{LGJLMPS57}
M.~Leyzorek, R.~Gray, A.~Johnson, W.~Ladew, S.~Meaker, R.~P. Jr, and R.~Seitz.
\newblock A study of model techniques for communication systems.
\newblock Technical report, Case Institute of Technology, Cleveland, Ohio,,
  1957.
\newblock Investigation of Model Techniques, First Annual Report.

\bibitem[Robinson(1974)]{Robinson74}
A.~Robinson.
\newblock \emph{Non-standard Analysis}.
\newblock Princeton University Press, 1974.

\bibitem[Smith and Price(1973)]{MP73}
J.~M. Smith and G.~Price.
\newblock The logic of animal conflicts.
\newblock \emph{Nature}, 246:\penalty0 15--18, 1973.

\bibitem[Tarjan(1972)]{Tarjan72}
R.~Tarjan.
\newblock Depth-first search and linear graph algorithms.
\newblock \emph{SIAM Journal on Computing}, 1:\penalty0 146--160, 1972.

\bibitem[Wikipedia(2015)]{wiki:TarjanSCC}
Wikipedia.
\newblock Tarjan's strongly connected components algorithm --- wikipedia{,} the
  free encyclopedia, 2015.
\newblock [Online; accessed 25-March-2015].

\bibitem[Young(1993)]{Young93}
P.~Young.
\newblock The evolution of conventions.
\newblock \emph{Econometrica}, 61:\penalty0 57--84, 1993.

\end{thebibliography}

\appendix

\section{Tarjan Modified}

\noindent The function \emph{TarjanSinkSCC} is written in Algorithm~\ref{algo:MT}. It consists of Tarjan's algorithm~\cite{Tarjan72},\cite{wiki:TarjanSCC}, which normally returns all the SCCs of a directed graph, plus a few newly added lines (as mentioned in comments) so that it returns the sink SCCs only. It is not difficult to see that the newly added lines do not change the complexity of the algorithm, which is $O(|S|+|A|)$ where $|S|$ and $|A|$ are the numbers of vertices and arcs in the graph, respectively. The new lines only deal with the new boolean values $v.sink$. These lines are designed so that when popping an SCC with root $v$ from the stack , the value $v.sink$ is true iff the SCC is a sink, hence the test at Line~\ref{line:actual-sink}. All the $v.sink$ are initialized with $true$ at Line~\ref{line:apriori-sink}, and $v.sink$ is set to false at two occasions: at Line~\ref{line:above-new-sink} before a sink SCC with root $v$ is output; and at Line~\ref{line:above-old-SCC} when one successor $w$ of $v$ has already been popped from the stack (since $w.index$ is defined), which means that there is one SCC below that of $v$. These are then propagated upwards at Line~\ref{line:no-sink-up}. The conjunction reflects the facts that a vertex is not in a sink SCC iff one of its successors in the same SCC is not.  

\begin{algorithm}
\SetKwProg{Fn}{Function}{ is}{end}
\SetKwFunction{KwStrongConnect}{StrongConnect}
\Fn{\KwTarjanSinkSCC}{
\SetKwInOut{Input}{input}\SetKwInOut{Output}{output}
\Input{$(S,A)$, where $A\subseteq S\times S$}
\Output{the sink SCC}

\BlankLine

$index \leftarrow 0$\;
$stack \leftarrow \emptyset$\;

\lFor(){$v\in S$}{$v.onstack \leftarrow false$}\label{line:not-on-stack}

\lFor(\tcp*[f]{Newly added.}){$v\in S$}{$v.sink \leftarrow true$}\label{line:apriori-sink}

\For{$v\in V$}{
\lIf(){$v.index$ is undefined}{\KwStrongConnect{$v$}}
}

\BlankLine

\Fn{\KwStrongConnect{$v$}}{
$v.index \leftarrow index$\;
$v.lowlink \leftarrow index$\;
$index \leftarrow index + 1$\;
$stack.push(v)$\;
$v.onstack \leftarrow true$\;

\BlankLine

\For{$(v,w)\in A$}{
\eIf{$w.index$ is undefined}{
\KwStrongConnect{$w$}\;
$v.lowlink \leftarrow \min(v.lowlink,w.lowlink)$\;
$v.sink \leftarrow v.sink\,\wedge\,w.sink$\tcp*[r]{Newly added.}\label{line:no-sink-up}
}{
\eIf{$w.onstack = true$}{
$v.lowlink \leftarrow \min(v.lowlink,w.index)$
}{
$v.sink \leftarrow false$\tcp*[r]{Newly added.}\label{line:above-old-SCC}
}
}
}
\BlankLine

\If{$v.lowlink = v.index$}{
start a new SCC\;

\Repeat{w = v}{
$w \leftarrow stack.pop()$\;
$w.onstack \leftarrow false$\;
add $w$ to the current SCC\;
}

\If(\tcp*[f]{Newly added.}){$v.sink$\label{line:actual-sink}}{
$v.sink \leftarrow false$\tcp*[r]{Newly added.}\label{line:above-new-sink}
output the SCC\;
}
}

}
}
\caption{modification of Tarjan's SCC algorithm\label{algo:MT}}
\end{algorithm}

\section{Proofs and Lemmas}

Lemma~\ref{lem:cong} below relates to Definition~\ref{def:cong}.

\begin{lemma}\label{lem:cong}
\begin{enumerate}
\item\label{lem:cong1} $\precsim$ is a preorder and $\cong$ an equivalence relation.
\item\label{lem:cong2} For all $f,g: I\to ]0,1]$, we have $f \precsim g$ iff $\frac{1}{g} \precsim \frac{1}{f}$, so $f \cong g$ iff $\frac{1}{f} \cong \frac{1}{g}$.
\item\label{lem:cong3} $f \precsim g$ and $f' \precsim g'$ implies $f+f' \precsim g+g'$ and $f\cdot f' \precsim g\cdot g'$.
\item\label{lem:cong4} $f \cong g$ and $f' \cong g'$ and $f \precsim f'$ implies $g \precsim g'$.
\item\label{lem:cong5} $f+f' \cong \max(f,f') := x \mapsto \max(f(x),f'(x))$.
\item\label{lem:cong6} $f \precsim f'$ implies $\max(f,f') \cong f'$.
\item\label{lem:cong7} $f\mid_J \precsim g\mid_J$ and $f\mid_{I\setminus J} \precsim g\mid_{I\setminus J}$ implies $f \precsim g$.
\item\label{lem:cong8} Let $0$ be a limit point of both $J \subseteq I$ and $I \setminus J$. A state $x$ is stable (fully vanishing) for a perturbation $p$ iff it is stable (fully vanishing) for both $p\mid_J$ and $p\mid_{I\setminus J}$.
\end{enumerate}
\end{lemma}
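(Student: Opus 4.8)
The strategy is to dispatch the eight claims roughly in order, since each short item feeds into the next. For (\ref{lem:cong1}), reflexivity of $\precsim$ is immediate (take $b=1$), and transitivity follows by composing the two bounding constants and taking the smaller of the two thresholds $\e$; that $\cong$ is an equivalence relation is then formal. For (\ref{lem:cong2}), from $f(\e')\le b\,g(\e')$ on $(0,\e)$ and strict positivity of $g$ (hence of $f$, which is bounded by it only up to the constant — actually we need $f>0$, which holds since $f:I\to\,]0,1]$), divide through by $b\,f(\e')g(\e')$ to get $\frac1{g(\e')}\le b\,\frac1{f(\e')}$; the converse is symmetric, and the $\cong$ statement follows by applying this to both directions. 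For (\ref{lem:cong3}), given $f(\e')\le b\,g(\e')$ and $f'(\e')\le b'\,g'(\e')$ on a common interval (intersect the two neighbourhoods of $0$), use $f+f'\le \max(b,b')(g+g')$ for the sum and $f f'\le bb'\,gg'$ for the product. Item (\ref{lem:cong4}) is then purely formal: from $f\cong g$, $f'\cong g'$, $f\precsim f'$, chain $g\precsim f\precsim f'\precsim g'$ using transitivity from (\ref{lem:cong1}).

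For (\ref{lem:cong5}), note the pointwise inequalities $\max(f,f')\le f+f'\le 2\max(f,f')$, which give $\max(f,f')\precsim f+f'$ and $f+f'\precsim\max(f,f')$ directly (constants $1$ and $2$, threshold any element of $I$). Item (\ref{lem:cong6}) follows from (\ref{lem:cong5}) and (\ref{lem:cong3}): $\max(f,f')\cong f+f'\precsim f'+f' = 2f'\cong f'$, where $f\precsim f'$ was used to bound $f+f'$; alternatively argue pointwise that on the neighbourhood where $f\le b f'$ one has $\max(f,f')\le \max(b,1)f'$, and $f'\le\max(f,f')$ always. For (\ref{lem:cong7}), take a constant $b_1$ and threshold $\e_1$ witnessing $f\!\restriction_J\precsim g\!\restriction_J$ and likewise $b_2,\e_2$ for $I\setminus J$; then $b:=\max(b_1,b_2)$ and $\e:=\min(\e_1,\e_2)$ witness $f\precsim g$ on all of $I$, since every $\e'<\e$ in $I$ lies in $J$ or in $I\setminus J$.

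The last item (\ref{lem:cong8}) is the one that needs genuine care rather than bookkeeping, and I expect it to be the main obstacle. The forward direction is easy: a stationary-distribution family $(\mu_\e)_{\e\in I}$ for $p$ restricts to stationary-distribution families for $p\!\restriction_J$ and $p\!\restriction_{I\setminus J}$, and $\liminf_{\e\to 0}$ over $I$ dominates the two restricted $\liminf$'s (and similarly $\limsup$ over $I$ is the max of the two restricted $\limsup$'s), so stability over $I$ implies stability over each piece, and likewise for full vanishing — here one uses that $0$ is a limit point of both $J$ and $I\setminus J$, so the restricted $\liminf$/$\limsup$ are meaningful. The reverse direction is the delicate one: given a stable witness $(\mu^1_\e)_{\e\in J}$ with $\liminf_{J\ni\e\to0}\mu^1_\e(x)>0$ and any stationary family $(\mu^2_\e)_{\e\in I\setminus J}$, one glues them into a single family $\mu_\e := \mu^1_\e$ for $\e\in J$ and $\mu_\e:=\mu^2_\e$ for $\e\in I\setminus J$; each $\mu_\e$ is still stationary for $p^{(\e)}$ since $p\!\restriction_J$ and $p$ have the same chains on $J$. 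Then $\liminf_{I\ni\e\to0}\mu_\e(x)\ge\min$ of the two restricted $\liminf$'s — but we are only told the $J$-side is positive, so for \emph{stability over $I$} we should instead take the glued family to realise the positive $\liminf$ on $J$ and any stationary distribution on $I\setminus J$; since $\liminf$ over $I$ is $\le$ the $\liminf$ over $J$ this does not immediately work, so the correct reading is that $x$ stable for $p\!\restriction_J$ \emph{and} for $p\!\restriction_{I\setminus J}$ gives witnesses on both sides, and gluing them yields $\liminf_{I\ni\e\to0}\mu_\e(x)=\min$ of two positive numbers $>0$. For full vanishing: every stationary family over $I$ restricts to stationary families over $J$ and $I\setminus J$, each of which has $\limsup=0$ by hypothesis, hence $\limsup$ over $I$, being their max, is $0$. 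The one subtlety to state carefully is that every stationary family over a sub-index-set extends to one over $I$ (just pick arbitrary stationary distributions on the complement), which is what makes the equivalence an "iff" rather than a one-way implication.
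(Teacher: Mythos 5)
Your proof is correct; the paper in fact states Lemma~\ref{lem:cong} without proof (it is presented as a collection of routine algebraic properties of the big-$O$ relation), and your arguments for items \ref{lem:cong1}--\ref{lem:cong7} are exactly the expected constant-and-threshold bookkeeping. For item \ref{lem:cong8} you correctly identify and handle the two genuinely non-trivial points — gluing a stable witness family on $J$ with one on $I\setminus J$ to get $\liminf_{I\ni\e\to 0}\mu_\e(x)=\min$ of two positive numbers, and extending an arbitrary stationary family on a sub-index-set to all of $I$ so that full vanishing transfers in the forward direction — the only blemish being the phrase ``$\liminf$ over $I$ dominates the two restricted $\liminf$'s,'' where the inequality actually runs the other way ($\liminf_{J\ni\e\to0}\ge\liminf_{I\ni\e\to0}$), though the conclusion you draw from it is the right one.
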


\begin{proof}[Observation~\ref{obs:unstable-constr}]
Let $S := \{x_1,\dots,x_n, y_1,\dots,y_m\}$, for all $i < n-1$ let $p(x_i,x_{i+1}) := f_i$, let $p(x_n,y_0) := f_n$, for all $i$ let $p(x_i,x_1) := 1 - f_i$, for all $j < m-1$ let $p(y_j,y_{j+1}) := g_j$, let $p(y_m,x_0) := g_m$, for all $j$ let $p(y_j,y_1) := 1 - g_j$. It is easy to check that $\beta^{x_i} = \prod_j g_j \cdot \prod_{1 \leq k < i} f_i \cdot \prod_{i \leq k} \max(f_k,1-f_k) \cong \prod_j g_j \cdot \prod_{1 \leq k < i} f_i $ for all $i$. Let us first assume that the functions in $F$ are positive  So by Lemma~\ref{lem:gen-Young}, if one $x_i$ is stable, so is $x_1$. Likewise for the $y_j$. But $\beta^{x_1}$ and $\beta^{y_1}$ are non comparable by assumption, so by Lemma~\ref{lem:gen-Young} there are no stable state.

Let us now prove the claim in the general case, which holds if $n = m = 1$ for the same reason as in Figure~\ref{fig:no-stable1}, so let us assume that $n +m > 2$. Wlog let us also assume that any two products of $n'$ and $m'$ functions from $F$ with $n' + m' < n + m$ are comparable. So the functions in $F$ are pairwise comparable; (up to intersection of $I$ with a neighborhood of $0$) their supports constitute a linearly ordered set for the inclusion; and $\prod_i f_i \precsim \prod_{1 < j} g_j$ since $g_1 \leq 1$ and $\neg( \prod_j g_j \precsim \prod_{i} f_i)$. Up to renaming let us assume that $g_1$ has the smallest support $J$. Up to restriction of $I$ to the support of $\prod_i f_i$, which does not change stability nor non-comparability of $\prod_i f_i$ and $\prod_j g_j$ by Lemmas~\ref{lem:cong}.\ref{lem:cong8} and \ref{lem:cong}.\ref{lem:cong7}, let us assume that the $f_i$ are positive, and so is $g_j$ for $j > 1$ since $\prod_i f_i \precsim \prod_{1 < j} g_j$. If $g_1$ is also positive, we are back to the special case above, so let us assume that it is not. On the one hand, restricting $p$ to $I \setminus J$ shows that only $y_1$ might be stable, by Lemma~\ref{lem:cong}.\ref{lem:cong8}; on the other hand $\neg( \prod_j g_j\mid_J \precsim \prod_{i} f_i \mid_J)$, so let us make a case disjunction: if $\prod_i f_i\mid_J \precsim \prod_{j} g_j \mid_J$ then $\beta^{y_1} \in o(\beta^{x_1})$, so $y_1$ cannot be stable by Lemma~\ref{lem:gen-Young}; if they are not comparable, the special case above says that the $p\mid_J$ has no stable state, so neither has $p$ by Lemma~\ref{lem:cong}.\ref{lem:cong8}.
\end{proof}

\begin{proof}[Lemma~\ref{lem:hsr}]
Let us assume that $\mu$ is stationary, so it is well-known that its support involves only essential states. To prove that the equation holds let us make a case disjunction: if $x$ and $y$ are transient states, $\mu(x) = \mu(y) = 0$; if $x$ is transient and $y$ is essential, $\mu(x) = 0$ and $\p^y(\tau^+_x < \tau^+_y) = 0$; if $x$ and $y$ belong to distinct essential classes,  $\p^x(\tau^+_y < \tau^+_x) = \p^y(\tau^+_x < \tau^+_y) = 0$; if $x$ and $y$ belong to the same essential class $E$, let $E_1,\dots,E_k$ be the essential classes. Then for all $i \in\{1,\dots,k\}$ let  $\mu_{E_i}$ be the extension to $S$ (by the zero-function outside of $E_i$) of the unique stationary distribution of $p\mid_{E_i \times E_i}$.  So $\frac{\mu(x)}{\mu(y)} = \frac{\mu_E(x)}{\mu_E(y)}$ by  \cite[Proposition 2.1]{BL14} and since $\mu$ is a convex combination of the $\mu_{E_1},\dots,\mu_{E_k}$.

Conversely, let us assume that the support of $\mu$ involves only essential states and that the equation holds. Let $E'_1,\dots,E'_{k'}$ be the essential classes with positive $\mu$-measure. Let $i < k'$, and for all $x\in E'_i$ let $\mu_{i}(x) := \frac{\mu\mid_{E'_i}(x)}{\sum_{y\in E'_i}\mu(y)}$ define a distribution for $p\mid_{E'_i\times E'_i}$. Since $\mu_i$ also satisfies the equation and that $p\mid_{E'_i\times E'_i}$ is irreducible, $\mu_i$ is the unique stationary distribution for $p\mid_{E'_i\times E'_i}$. Since $\mu$ is a convex combination of the $\mu_{1},\dots,\mu_{k'}$, it is stationary for $p$.
\end{proof}

\begin{proof}[Lemma~\ref{lem:state-del}]
\begin{enumerate}
\item Let $\sigma_N := \sup\{n\in\N : |\{ k \leq n: X_k\in \tilde{S}\}| = N\}$ be the supremum of the first time that $(X_n)_{n\in\N}$ has visited $N$ states in $\tilde{S}$. Clearly $\sigma_N\stackrel{N\to\infty}{\longrightarrow} \infty$, so $\p^x(\tau^+_y < \tau^+_x) = \lim_{N\to\infty}\p^x(\tau^+_y < \min(\tau^+_x,\sigma_N))$. On the other hand, $\p^x(\tau^+_y < \min(\tau^+_x,\sigma_N))$

\begin{align*}
 & = \p^x(X_{\tau^+_{S\setminus T}} = y)\\
	& + \sum_{z\in S\setminus (T\cup\{x,y\})} \p^x(X_{\tau^+_{S\setminus T}} = z)\p^x(\tau^+_y < \min(\tau^+_x,\sigma_{N-1}))\\
	& = \tilde{p}(x,y) + \sum_{z\in S\setminus (T\cup\{x,y\})} \tilde{p}(x,z)\p^x(\tau^+_y < \min(\tau^+_x,\sigma_{N-1}))\\
	&\mbox{thus by iteration we obtain}\\
& = \sum_{k=1}^{N-1}\sum_{z_1.\dots,z_k\in S\setminus(T\cup\{x,y\})} \tilde{p}(x,z_1)\tilde{p}(z_1\dots z_k)\tilde{p}(z_k,y)\\
	& =\tilde{\p}^x(\tau^+_y < \min(\tau^+_x,\sigma_N)) \stackrel{N\to\infty}{\longrightarrow}\tilde{\p}^x(\tau^+_y < \tau^+_x)
\end{align*}

Thus $\p^x(\tau_y < \tau^+_x) = \tilde{\p}^x(\tau_y < \tau^+_x)$.

\item Let us first assume that $p$ is irreducible, and so is $\tilde{p}$. Let $\mu$ and $\tilde{\mu}$ be their respective unique, positive stationary distributions. By Lemmas~\ref{lem:hsr} and Lemma~\ref{lem:state-del}.\ref{lem:state-del1} we find 
\begin{equation*}\label{eq:ratio-mu-p-2}
\frac{\mu(y)}{\mu(x)} = \frac{\p^x(\tau^+_y < \tau^+_x)}{\p^y(\tau^+_x < \tau^+_y)} = \frac{\tilde{\p}^{x}(\tau^+_y < \tau^+_{x})}{\tilde{\p}^y(\tau^+_{x} < \tau^+_y)} = \frac{\tilde{\mu}(y)}{\tilde{\mu}(x)}
\end{equation*}
Summing this equation over $y\in \tilde{S}$ proves the irreducible case.

To prove the general claim, let $E_1\dots,E_{k}$ be the essential classes of $p$, so the essential classes of $\tilde{p}$ are the non empty sets among $E_1 \cap\tilde{S},\dots, E_k \cap\tilde{S}$. For $i \leq k$ let $\mu_i$ ($\tilde{\mu}_i$) be the extension to $S$ ($\tilde{S}$) of the unique stationary distribution of the irreducible $p\mid_{E_i\times E_i}$ ($\tilde{p}\mid_{E_i\cap \tilde{S}\times E_i\cap\tilde{S}}$). Let $\mu$ be a stationary distribution for $p$, it is well-known that $\mu$ is then a convex combination $\sum_{1\leq i \leq k}\alpha_i\mu_i$, and it is straightforward to check that the convex combination $\tilde{\mu} := \sum_{1\leq i \leq k}\frac{\sum_{y\in \tilde{S} \cap E_i}\mu(y)}{\sum_{y\in \tilde{S}}\mu(y)}\cdot \tilde{\mu}_i$ witnesses the claim. Conversely, let $\tilde{\mu}$ be a stationary distribution of $\tilde{p}$, so it is a convex combination $\sum_{1\leq i \leq k}\beta_i\tilde{\mu}_i$, and it is straightforward to check that the convex combination $\mu := \sum_{1 \leq i\leq k}\frac{L_i}{\sum_{j}L_j}\mu_i$ witnesses the claim, where $L_i := \beta_i \cdot \prod_{j\neq i} \frac{\mu_j(x_j)}{\tilde{\mu}_j(x_j)}$ for any $x_j\in \tilde{S}\cap E_j$.
\end{enumerate}
\end{proof}

\begin{proof}[Lemma~\ref{lem:congp-congmu}]
Let us first prove the claim for irreducible perturbations, and even in the following simpler case: let $x\in S$ be such that for all $y$ and all $z\neq x$ we have $p(x,z) \cong \tilde{p}(x,z)$ and $p(z,y) = \tilde{p}(z,y)$; so $\p^{z}(\tau_y < \tau^{+}_x) = \tilde{\p}^{z}(\tau_y < \tau^{+}_x)$ for all $z \neq x$ since the paths leading from $z$ to $y$ without hitting $x$ do not involve any step from $x$ to another state. So
\begin{align*}
\p^{x}(\tau^{+}_y < \tau^{+}_x)  &= \sum_{z\in S\setminus\{x\}}p(x,z)\p^{z}(\tau_y < \tau^{+}_x) \\
 & \cong \sum_{z\in S\setminus\{x\}}\tilde{p}(x,z)\tilde{\p}^{z}(\tau_y < \tau^{+}_x) = \tilde{\p}^{x}(\tau^{+}_y < \tau^{+}_x)
 \end{align*}
So by Lemmas~\ref{lem:hsr}, \ref{lem:cong}.\ref{lem:cong2}, and \ref{lem:cong}.\ref{lem:cong3}, and since the unique $\mu$ and $\tilde{\mu}$ are positive,
\[\frac{\mu(x)}{\mu(y)} = \frac{\p^{y}(\tau^{+}_x < \tau^{+}_y)}{\p^{x}(\tau^{+}_y < \tau^{+}_x)}  \cong \frac{\tilde{\p}^{y}(\tau^{+}_x < \tau^{+}_y)}{\tilde{\p}^{x}(\tau^{+}_y < \tau^{+}_x)} = \frac{\tilde{\mu}(x)}{\tilde{\mu}(y)}\mbox{\quad for all }y\in S;\]
invoking this equation above twice shows that
\[\frac{\mu(z)}{\mu(y)} = \frac{\mu(z)}{\mu(x)} \cdot \frac{\mu(x)}{\mu(y)}  \cong \frac{\tilde{\mu}(z)}{\tilde{\mu}(x)} \cdot \frac{\tilde{\mu}(x)}{\tilde{\mu}(y)} = \frac{\tilde{\mu}(z)}{\tilde{\mu}(y)}\mbox{\quad for all }z,y\in S;\]
and summing this second equation over $z\in S$ yields $\frac{1}{\mu(y)} \cong \frac{1}{\tilde{\mu}(y)}$, \textit{i.e.}, $\mu(y) \cong \tilde{\mu}(y)$ for all $y\in S$. The irreducible case is then proved by induction on $n := |\{x\in S : \exists y\in S,x\neq y \wedge p(x,y)\neq \tilde{p}(x,y)\}|$, which trivially holds for $n = 0$. For $0 < n$ let distinct $x,y\in S$ be such that $p(x,y)\neq \tilde{p}(x,y)$, and for all $y,z\in S \times S\setminus\{x\}$ let $\hat{p}(z,y) := p(z,y)$, and $\hat{p}(x,y) := \tilde{p}(x,y)$. By the simpler case $\mu \cong \hat{\mu}$; by induction hypothesis $\hat{\mu} \cong \tilde{\mu}$; so $\mu \cong \tilde{\mu}$ by transitivity of $\cong$. 

Let us now prove the general claim by induction on the number of the non-zero transition maps of $p$ that have zeros. Base case, all the non-zero maps are positive. Let $E'_1\dots,E'_{k'}$ be the sink SCCs of the graph on $S$ with arc $(x,y)$ if $p(x,y)$ is non-zero. For $i \leq k'$ let $\mu_i$ ($\tilde{\mu}_i$) be the unique stationary distribution map of the irreducible $p\mid_{E'_i\times E'_i}$ ($\tilde{p}\mid_{E'_i\times E'_i}$). Since $p(x,y)\mid_{E'_i\times E'_i} \cong \tilde{p}(x,y)\mid_{E'_i\times E'_i}$ for all $x \neq y$, the irreducible case implies $\mu_i \cong \tilde{\mu}_i$. Clearly $\mu$ is a convex combination of the $\mu_i$,and  the convex combination $\tilde{\mu}$ of the $\tilde{\mu}_i$ with the same coefficients is a stationary distribution map for $\tilde{p}$, and $\mu \cong \tilde{\mu}$ by Lemma~\ref{lem:cong}.\ref{lem:cong3}.

Inductive case. Let $p(z,t)$ be a non-zero function with support $J \subsetneq I$. Up to focusing we may assume that $0$ is a limit point of both $J$ and $I\setminus J$. By induction hypothesis on $p\mid_{J} \cong \tilde{p}\mid_{J}$ and $p\mid_{I\setminus J} \cong \tilde{p}\mid_{I\setminus J}$ we obtain two distribution maps $\tilde{\mu}_I$ and $\tilde{\mu}_{I\setminus J}$ that can be combined to witness the claim.
\end{proof}

\begin{proof}[Observation~\ref{obs:inf-bound}]
Let $p(x,y)$ be in the essential graph. By the definition of $\precsim$ and finiteness of the state space, let positive $b_{xy}$ and $\e_{xy}$ be such that $p(x,z) \leq b_{xy} \cdot p(x,y)$ for all $\e < \e_{xy}$ and $z\in S$. So for all $\e < \e_{xy}$ we have  $1 = \sum_{z\in S}p_\e(x,z) \leq |S| \cdot b_{xy} \cdot p_\e(x,y)$. Now let $b$ ($\e_0$) be the maximum (minimum) of the $b_{xy}$ ($e_{xy}$) for $(x,y)$ in the essential graph. Thus $0 < (b \cdot |S|)^{-|S|} \leq p_\e(\gamma)$ for all $\e < \e_0$.
\end{proof}

\begin{proof}[Lemma~\ref{lem:ess-weight}]
For all $y\in T$ let $x_y$ be an essential state reachable from $y$ in the essential graph. So $c\cdot \mu(y) \leq \mu(x_y)$ for all $y\in T$ by Lemma~\ref{lem:hsr}. Therefore $1 \leq c^{-1} \sum_{y\in T}\mu(x_y) + \sum_{x\in S\setminus T}\mu(x)$, and the claim is proved by further approximation.
\end{proof}

\noindent Lemma~\ref{lem:et} below is a technical tool proved by a standard argument in Markov chain theory. 

\begin{lemma}\label{lem:et}
Let a perturbation with state space $S$ satisfy Assumption~\ref {Assum2}, and let $x$ be in some essential class $E$. Then for all $n\in\mathbb{N}$
\[\p^x(\tau^+_{(S\setminus E)\cup\{x\}} > n) \leq  (1-c)^{\lfloor\frac{n}{|S|}\rfloor}\]

\begin{proof}
Let $\tau^* := \tau^+_{(S\setminus E)\cup\{x\}}$. For every $y\in E$ let $\gamma_y\in\Gamma_E(y,x)$ be in the essential graph. $c < p(\gamma_y)$ by Assumption~\ref{Assum2}, so $\max_{y\in E} \p_{\e}^y (\tau^* > |S|) \leq 1 - c < 1$, so for all $k\in\mathbb{N}$
\begin{align*}
\p^x (\tau^* > k|S|) & \leq (\max_{y\in E} \p^y (\tau^* > |S|)^k \\
 & \leq (1-c)^k \textrm{ by the strong Markov property, so}\\
\p^x (\tau^* > n)  &\leq \p^x (\tau^* > |S|\cdot\lfloor\frac{n}{|S|}\rfloor) \leq (1-c)^{\lfloor\frac{n}{|S|}\rfloor} \textrm{ for all }n\in\mathbb{N}.
\end{align*}
\end{proof}
\end{lemma}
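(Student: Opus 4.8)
The plan is a standard renewal-type argument: bound, uniformly over the starting state in $E$, the probability that the return time $\tau^* := \tau^+_{(S\setminus E)\cup\{x\}}$ survives a block of $|S|$ steps by $1-c$, and then multiply these block estimates using the strong Markov property.

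First I would prove the one-block estimate $\p^y(\tau^* > |S|) \le 1-c$ for every $y\in E$. If $E=\{x\}$, then $(S\setminus E)\cup\{x\}=S$, so $\tau^*=1$ and the whole statement is trivial; hence assume $|E|\ge 2$. Since $E$ is a (sink) strongly connected component of the essential graph and $x,y\in E$, there is a path $\gamma_y\in\Gamma_E(y,x)$ all of whose arcs lie in the essential graph; being repetition-free up to the $y=x$ convention, it has at most $|E|+1\le |S|+1$ vertices, and — crucially — its internal vertices lie in $E\setminus\{x\}$, so on the event that the chain follows $\gamma_y$ exactly we have $\tau^*=|\gamma_y|-1\le|S|$. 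The probability of that event is $p_\e(\gamma_y)$, which exceeds $c$ by Assumption~\ref{Assum2}; hence $\p^y(\tau^*\le|S|)\ge c$, i.e.\ $\p^y(\tau^*>|S|)\le 1-c$. The corner case $y=x$ only requires noting that $\Gamma_E(x,x)$ consists, by its very definition, of cycles through $x$ whose intermediate vertices avoid $x$, which is exactly what the argument exploits.

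Next I would iterate. On the event $\{\tau^*>k|S|\}$ the chain has, by time $k|S|$, neither returned to $x$ nor left $E$, so $X_{k|S|}\in E\setminus\{x\}$; conditioning on $\mathcal F_{k|S|}$ and applying the strong Markov property together with the one-block estimate at the state $X_{k|S|}$ gives $\p^x(\tau^*>(k+1)|S|)\le(1-c)\,\p^x(\tau^*>k|S|)$, whence $\p^x(\tau^*>k|S|)\le(1-c)^k$ by induction on $k$. Finally, for an arbitrary $n$, since $|S|\lfloor n/|S|\rfloor\le n$ we have $\{\tau^*>n\}\subseteq\{\tau^*>|S|\lfloor n/|S|\rfloor\}$, so $\p^x(\tau^*>n)\le\p^x(\tau^*>|S|\lfloor n/|S|\rfloor)\le(1-c)^{\lfloor n/|S|\rfloor}$.

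The only delicate point is the bookkeeping inside the one-block estimate: verifying that the chosen essential path has length at most $|S|$ and that ``following it'' genuinely realizes $\tau^*\le|S|$ — that is, that the chain neither prematurely hits $x$ nor escapes $E$ along the way. Once that is pinned down, the remainder is the textbook geometric-decay computation.
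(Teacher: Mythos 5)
Your proof is correct and follows essentially the same route as the paper's: a one-block estimate $\p^y(\tau^*>|S|)\le 1-c$ obtained from a simple essential-graph path $\gamma_y\in\Gamma_E(y,x)$ and Assumption~\ref{Assum2}, iterated via the strong Markov property and finished with the floor bound. The extra care you take with the $E=\{x\}$ corner case and the verification that following $\gamma_y$ realizes $\tau^*\le|S|$ are details the paper leaves implicit, but they do not change the argument.
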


\begin{proof}[Lemma~\ref{lem:p-cong-max}]
Up to focusing let $p$ satisfy Assumption~\ref{Assum2}. The second statement boils down to Lemma~\ref{lem:cong}.\ref{lem:cong3}. For the first statement, for every $z\in E$ let $\gamma_z\in \Gamma_E(x,z)$ be a path in the essential graph, so $c < p(\gamma_z)$ by Assumption~\ref{Assum2}. Let $\tau^* := \tau^+_{(S\setminus E)\cup\{x\}}$, so
\begin{align*}
c \cdot p(z,y) & \leq  c \cdot \p^z(X_{\tau^*} = y) \leq p(\gamma_z) \cdot \p^z(X_{\tau^*} = y) \\
 & \leq \p^x(X_{\tau^*} = y) =  \tilde{p}(\cup E,y)
 \end{align*}
which implies $c \cdot \max_{z\in E}p(z,y) \leq \tilde{p}(\cup E,y)$ \textit{i.e.} half of the statement $\tilde{p}(\cup E,y) \cong \max_{z\in E} p(z,y)$. Now, for every positive natural $N$ let ${\bf A}_N := \{x\} \times (E\setminus\{x\})^{N-1} \times \{y\}$.
\begin{align*}
\p^x(X_{\tau^*} = y,\tau^* = N) & = \sum_{\gamma\in {\bf A}_N} p(\gamma)\\
	& \leq \max_{z\in E}p(z,y)\sum_{\gamma t\in {\bf A}_N} p(\gamma)\\
	& = \max_{z\in E}p(z,y) \p^x(\tau^* \geq N-1)
\end{align*}
\noindent Let $q := (1-c)^{|S|} < 1$, so $\p^x(\tau^* \geq N) \leq (1-c)^{-1}\cdot q^N$ by Lemma~\ref{lem:et}, and
\begin{align*}
\p^x(X_{\tau^*} = y) & = \sum_{N=1}^{\infty}\p^x(X_{\tau^*} = y,\tau^* = N)\\
	& \leq \max_{z\in E}p(z,y)\sum_{N=0}^{\infty}(1-c^{-1})\cdot q^N \\
	& = (1-c)^{-1}(1-q)^{-1}\cdot \max_{z\in E}p(z,y)
\end{align*}
\end{proof}

\begin{proof}[Proposition~\ref{prop:essential-graph}]
Up to focusing let $p$ satisfy Assumption~\ref{Assum2}.
\begin{enumerate}
\item Let $y$ be in the set of the transient states $T$, so there exists $x\notin T$ and $\gamma\in \Gamma_T(y,x)$ such that $\gamma$ is also in the essential graph. By Assumption~\ref{Assum2} this implies
\[0 < \liminf_{\e\to 0}p_\e(\gamma) \leq \liminf_{\e\to 0}\p_{\e}^y(\tau^+_x < \tau^+_y).\]
On the other hand, let $E$ be the essential class of $x$. Lemma~\ref{lem:p-cong-max} implies that
\begin{align*}
\p^x(\tau^+_y < \tau^+_x) & \leq 1- \p^x(X_{\tau^+_{S\setminus E\cup\{x\}}} = x) \\
&= \sum_{z\notin E}\p^x(X_{\tau^+_{S\setminus E\cup\{x\}}} = z) \cong \sum_{z\notin E}\max_{t\in E}p(t,z)
\end{align*}
Since $E$ is an essential class, for all $t\in E$ and $z\notin E$ the function $p(t,z)$ is not $\precsim$-maximal. So $\liminf_{\e\to 0} p_\e(t,z) = 0$ by definition of the essential graph, and $\liminf_{\e\to 0} \sum_{z\notin E}\max_{t\in E}p_\e(t,z) = 0$ by Assumption~\ref{Assum1}, Lemma~\ref{lem:cong}.\ref{lem:cong6}, and finiteness. By combining this with the two inequalities above one obtains
\[\liminf_{\e\to 0} \frac{\p_{\e}^x(\tau^+_y < \tau^+_x)}{\p_{\e}^y(\tau^+_x < \tau^+_y)} = 0\]
and noting the following by Lemma~\ref{lem:hsr} allows us to conclude.
\[\tilde{\mu}_{\e}(y) \leq \frac{\tilde{\mu}_{\e}(y)}{\mu_{\e}(x)} = \frac{\p_{\e}^x(\tau^+_y < \tau^+_x)}{\p_{\e}^y(\tau^+_x < \tau^+_y)}\]

\item By Assumption~\ref{Assum2} (twice) and Lemma~\ref{lem:hsr}.
\end{enumerate}
\end{proof}

\begin{proof}[Lemma~\ref{lem:preserve-stable}]
Let us first prove the first conjunct. On the one hand $\p^y(\tau_x < \tau_y) \leq \p^y(\tau_E < \tau_y) = \tilde{\p}^y(\tau_{\tilde{x}} < \tau_y)$, and on the other hand
$\p^y(\tau_x < \tau_y) \geq  \p^y(\tau_E < \tau_y)\cdot \min_{z\in E}\p^z(\tau_x < \tau_y)$ by the strong Markov property. Since $\lim_{\e\to 0}\p^z(\tau_x < \tau_y) = 1$ for all $z$ in the essential class $E$, $\tilde{\p}^y(\tau_{\tilde{x}} < \tau_y)$ and $\p^y(\tau_x < \tau_y)$ are even asymptotically equivalent.

Let us now prove the second conjunct.  Let $\tau^*_{E,0} := 0$ and $\tau^*_{E,n+1} := \inf \{t > \tau^*_{E,n}\, \mid\, X_t\in E\,\wedge\,\exists s\in ]\tau^*_{E,n},t[,\,X_s\notin E\}$ for all $n\in \mathbb{N}$. Informally, when starting in $E$, $\tau^*_{E,n}$ is the first time that the chain is in $E$ after $n$ stays outside that are separated by stays inside.
\begin{align*}
\tilde{\p}^{\cup E}(\tau_y < \tau_{\cup E}) & = \p^x(\tau_y < \tau_x, \tau_y <\tau^*_{E,1}) \\
& \textrm{ by definition of the essential collapse, and } \tau^*_{E,1}\\
	& \leq \p^x(\tau_y < \tau_x) \textrm{, showing half of the first conjunct.} 
\end{align*}
For the other half, let $X$ satisfy Assumption~\ref {Assum2} up to focusing. For all $z\in E \setminus \{x\}$ there is a simple path in the essential graph from $x$ to $z$, so $c < \p^x(\tau_z < \tau_{S\setminus E \cup \{x\}})$, and by the strong Markov property
\begin{align}
\p^x(\tau_y < \tau_x,\tau_y < \tau^*_{E,1}) & \geq \p^x(\tau_z < \tau_{S\setminus E \cup \{x\}})\cdot \p^z(\tau_y < \tau_x,\tau_y < \tau^*_{E,1}) \nonumber\\
	& \geq c \cdot \p^z(\tau_y < \tau_x,\tau_y < \tau^*_{E,1}) \label{ineq1}
\end{align}
For all $z\in E\setminus \{x\}$ there is also a simple path in the essential graph from $z$ to $x$, so $\p^z(\tau_{S\setminus E} < \tau_x) \leq 1 - c$. Also note that $\p^z(\tau^*_{E,1} < \tau_x) \leq \p^z(\tau_{S\setminus E} < \tau_x)$, and let us show by induction on $n$ that $\p^x(\tau^*_{E,n} < \tau_x) \leq (1-c)^n$, which holds for $n =0$.
\begin{align}
\p^x(\tau^*_{E,n+1} < \tau_x) &\leq \p^x(\tau^*_{E,n} < \tau_x) \cdot \max_{z\in E} \p^z(\tau^*_{E,1} < \tau_x) \\
 & \textrm{ by the strong Markov property.} \nonumber\\
	& \leq \p^x(\tau^*_{E,n} < \tau_x) \cdot (1-c) \textrm{ by the remark above.}\nonumber\\
	& \leq (1-c)^{n+1} \textrm{ by induction hypothesis.} \label{ineq2}
\end{align}
Let us now conclude about the second half of the first conjunct.
\begin{align*}
\p^x(\tau_y < \tau_x) & = \sum_{n=0}^\infty \p^x(\tau_y < \tau_x, \tau^*_{E,n} < \tau_y < \tau^*_{E,n+1})\\
& \textrm{ by a case disjunction.}\\
	& \leq \sum_{n=0}^\infty \p^x(\tau_y < \tau_x, \tau^*_{E,n} < \tau_x, \tau_y < \tau^*_{E,n+1})\\
	& \textrm{ since the new conditions are weaker.}\\
	& \leq \sum_{n=0}^\infty \p^x(\tau^*_{E,n} < \tau_x) \max_{z\in E}\p^z(\tau_y < \tau_x, \tau_y < \tau^*_{E,1})\\
	& \textrm{ by the strong Markov property.}\\
	& \leq c^{-1}\p^x(\tau_y < \tau_x, \tau_y < \tau^*_{E,1}) \cdot \sum_{n=0}^\infty (1-c)^n\\
	& \textrm{ by inequalities~\ref{ineq1} and \ref{ineq2}.}\\
	& \leq c^{-2} \tilde{\p}^{\cup E}(\tau_y < \tau_{\cup E})
\end{align*}
\end{proof}

\begin{proof}[Proposition~\ref{prop:essential-trans}]
Up to focusing let $p$ satisfy Assumption~\ref{Assum2}.
\begin{enumerate}
\item Let us first assume that $p$ is irreducible, and so is $\tilde{p}$ by Observation~\ref{obs:ess-coll}. So both $p$ and $\tilde{p}$ have unique, positive stationary distribution maps. Let us prove that $\tilde{\mu}(\tilde{x}) \cong \mu(x)$ where $\tilde{x} := \cup E$ and $\tilde{\mu}(y) \cong \mu(y)$ for all $y\in S\setminus E$. For $y\in S\setminus E$, Lemma~\ref{lem:hsr} and Lemma~\ref{lem:preserve-stable} imply the following. 
\begin{equation}\label{eq:ratio-mu-p-2}
\frac{\mu(y)}{\mu(x)} = \frac{\p^x(\tau^+_y < \tau^+_x)}{\p^y(\tau^+_x < \tau^+_y)} \cong \frac{\tilde{\p}^{\tilde{x}}(\tau^+_y < \tau^+_{\tilde{x}})}{\tilde{\p}^y(\tau^+_{\tilde{x}} < \tau^+_y)} = \frac{\tilde{\mu}(y)}{\tilde{\mu}(\tilde{x})}
\end{equation}
Summing the above equation over $y\in S\setminus E$ and adding $\frac{\mu(x)}{\mu(x)} = \frac{\tilde{\mu}(\tilde{x})}{\tilde{\mu}(\tilde{x})}$ yields $(1-\bar{\mu})\mu(x)^{-1} \cong \tilde{\mu}(x)^{-1}$, where $\bar{\mu} := \sum_{z\in E\setminus\{x\}}\mu(z)$. So by Definition~\ref{def:cong}, let $a$ and $b$ be positive real numbers such that $a\cdot \tilde{\mu}(\tilde{x})^{-1} \leq (1-\bar{\mu})\mu(x)^{-1} \leq b\cdot \tilde{\mu}(\tilde{x})^{-1}$ on a neighborhood of $0$, which yields $a\cdot \mu(x) \leq \tilde{\mu}(\tilde{x}) \leq b\cdot \mu(x) + \bar{\mu}$. Since $\bar{\mu} \leq b'\cdot \mu(x)$ for some $b'$ by Proposition~\ref{prop:essential-graph}.\ref{prop:essential-graph2}, $\tilde{\mu}(\tilde{x}) \cong \mu(x)$. Now by Lemmas~\ref{lem:cong}.\ref{lem:cong2} and \ref{lem:cong}.\ref{lem:cong3}, let us rewrite $\mu(x)$ with $\tilde{\mu}(\tilde{x})$ in Equation~\ref{eq:ratio-mu-p-2}, which shows the claim for irreducible perturbations.

Let us now prove the general claim by induction on the number of the non-zero transition maps of $p$ that have zeros. Base case, all the non-zero maps are positive. Let $E'_1\dots,E'_{k'}$ be the sink SCCs of the graph on $S$ with arc $(x,y)$ if $p(x,y)$ is non-zero. The essential graph is included in this digraph, and the essential class $E$ is included in $E'_j$ for some $j\in\{1,\dots,k'\}$. Let $\tilde{E}'_j := \{E\} \cup E'_j\setminus E$ and for all $i \neq j$ let $\tilde{E}'_i := E'_i$. For all $i\in\{1,\dots,k'\}$ let $\mu_{E'_i}$ ($\tilde{\mu}_{\tilde{E}'_i}$) be the extension to $S$ ($\{E\}\cup S\setminus E$), by the zero-function outside of $E'_i$ ($\tilde{E}'_i$), of the unique stationary distribution of $p\mid_{E'_i \times E'_i}$ ($\tilde{p}\mid_{\tilde{E}'_i \times \tilde{E}'_i}$), and by the irreducible case above let $\tilde{\mu}_{\tilde{E}'_j}$ ($\mu_{E'_i}$) be the corresponding unique distribution of $\tilde{p}\mid_{\{E\}\cup E'_j\setminus E}$ after ($p\mid_{E'_j\times E'_j}$ before) the essential collapse. Since $\mu$ ($\tilde{\mu}$) is a convex combination $\sum_{1\leq i \leq k'}\alpha_i\mu_{E'_i}$ ($\sum_{1\leq i \leq k'}\alpha_i\tilde{\mu}_{\tilde{E}'_i}$), it is easy to check that $\tilde{\mu} := \sum_{1\leq i \leq k'}\alpha_i\tilde{\mu}_{\tilde{E}'_i}$ ($\mu := \sum_{1\leq i \leq k'}\alpha_i\mu_{E'_i}$) is a witness for the base case.

Inductive case. Let $p(z,t)$ be a non-zero function with support $J \subsetneq I$. Up to focusing we may assume that $0$ is a limit point of both $J$ and $I\setminus J$. By induction hypothesis on $p\mid_{J} \cong \tilde{p}\mid_{J}$ and $p\mid_{I\setminus J} \cong \tilde{p}\mid_{I\setminus J}$ we obtain two distribution maps $\tilde{\mu}_I$ and $\tilde{\mu}_{I\setminus J}$ ($\mu_I$ and $\mu_{I\setminus J}$) that can be combined to witness the claim.

\item By Propositions~\ref{prop:essential-graph}.\ref{prop:essential-graph1} and \ref{prop:essential-trans}.\ref{prop:essential-trans3} in both cases, and also by Proposition~\ref{prop:essential-graph}.\ref{prop:essential-graph2} if $y\in E$. 
\end{enumerate}
\end{proof}

\begin{proof}[Proposition~\ref{prop:trans-del}]
Up to focusing let $p$ satisfy Assumption~\ref{Assum2}. By induction on the number of the non-zero transition maps of $p$ that have zeros. Base case, all the non-zero maps are positive. Let $E'_1\dots,E'_{k'}$ be the sink SCCs of the graph on $S$ with arc $(x,y)$ if $p(x,y)$ is non-zero. Note that this digraph includes the essential graph, and that $\delta(p\mid_{E'_i \times E'_i}) = \delta(p)\mid_{E'_i\cap S\setminus T \times E'_i\cap S\setminus T}$. Moreover, by Lemmas~\ref{lem:state-del}.\ref{lem:state-del2} the stable states of each $p\mid_{E'_i \times E'_i}$ are also stable for $\delta(p\mid_{E'_i \times E'_i})$, and the converse holds by Lemmas~\ref{lem:state-del}.\ref{lem:state-del2} and \ref{lem:ess-weight}. Therefore a state is stable for $p$ iff it is stable for some $p\mid_{E'_i \times E'_i}$ iff it is stable for some $\delta(p)\mid_{E'_i\cap S\setminus T \times E'_i\cap S\setminus T}$ iff it is stable for $\delta(p)$.

Inductive case. Let $p(z,t)$ be a non-zero function with support $J \subsetneq I$. Up to focusing we may assume that $0$ is a limit point of both $J$ and $I\setminus J$. By induction hypothesis $p\mid_{J}$ and $\delta(p)\mid_{J}$ have the same stable states, and likewise for $p\mid_{I\setminus J}$ and $\delta(p)\mid_{I\setminus J}$, which shows the claim.
\end{proof}

Lemma~\ref{lem:escape-decomp} below is a generalization to the reducible case of Proposition 2.8 from~\cite{BL14}.

\begin{lemma}\label{lem:escape-decomp}
Let $A$ be a finite subset of the state space $S$ of a Markov chain. Then for all  $x\in A$ and $y\in S\setminus A$
\[\begin{array}{l}
\p^x(X_{\tau_{S\setminus A}} = y) =\\
\qquad \sum_{\gamma\in \Gamma_A(x,y), p(\gamma) > 0} \prod_{i=1}^{|\gamma| -1} \frac{p(\gamma_i,\gamma_{i+1})}{1-\p^{\gamma_i}(X_{\tau^+_{(S\setminus A)\cup \{\gamma_1,\dots,\gamma_i\}}} = \gamma_i)}
\end{array}\]

\begin{proof}
We proceed by induction on $|A|$. The claim trivially holds for $|A| = 0$; so now let $x \in A$. The strong Markov property gives
\begin{align*}
\p^x(X_{\tau_{S \setminus A}} = y)  &= \p^x(X_{\tau^+_{(S\setminus A)\cup\{x\}}} = y) \\
 & + 
\p^x(X_{\tau^+_{(S\setminus A)\cup\{x\}}} = x) \cdot \p^x(X_{\tau_{S\setminus A}} = y)
\end{align*}
If $p(\gamma) = 0$ for all $\gamma\in \Gamma_A(x,y)$, the claim boils down to $0 = 0$, so let us assume that there exists $\gamma\in\Gamma_A(x,y)$ with $p(\gamma) = 0$, so $\p^x(X_{\tau^+_{(S\setminus A)\cup\{x\}}} = x) < 1$, and the above equation may be rearranged into 
\[
\p^x(X_{\tau_{S\setminus A}} = y) = \frac{\p^x(X_{\tau^+_{(S\setminus A)\cup\{x\}}} = y)}{1-\p^x(X_{\tau^+_{(S\setminus A)\cup\{x\}}} = x)}
\]
where the numerator may be decomposed as\\$p(x,y) + \sum_{z\in A\setminus\{x\}}p(x,z)  \p^z(X_{\tau_{(S\setminus A)\cup\{x\}}} = y)$. By the induction hypothesis for the set $A \setminus \{x\}$ let us rewrite 
$\p^z(X_{\tau_{(S\setminus A) \cup\{x\}}} = y)$ for all $z \in A\setminus\{x\}$, and obtain the equation below that may be re-indexed to yield the claim.
\begin{align*}
 \p^x(X_{\tau_{S\setminus A}} = y)  &= \frac{p(x,y)}
{1-\p^x(X_{\tau^+_{(S\setminus A)\cup\{x\}}} = x)} + \sum_{z \in A \setminus \{ x \}, p(x,z) > 0}\\
	& \sum_{ \gamma\in \Gamma_{A\setminus\{x\}}(z,y), p(\gamma) > 0} 
	\frac{p(x,z)\cdot \Pi} {1-\p^x(X_{\tau^+_{(S\setminus A)\cup\{x\}}} = x)} \\
	& \textrm{where } \Pi := \prod_{i=1}^{|\gamma|-1} \frac{p(\gamma_i,\gamma_{i+1})}{1-\p^{\gamma_i}(X_{\tau^+_{(S\setminus A)\cup 
	\{x,\gamma_1,\dots,\gamma_i\}}} = \gamma_i)}
\end{align*}
\end{proof}
\end{lemma}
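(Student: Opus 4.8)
\emph{Proof proposal.} I would prove the identity by induction on $|A|$, the engine at the inductive step being the standard renewal decomposition of the escape trajectory from $x$. Run the chain from $x\in A$ up to the time $\tau^+_{(S\setminus A)\cup\{x\}}$ of its first return to $(S\setminus A)\cup\{x\}$: either this first excursion leaves $A$, contributing $\p^x(X_{\tau^+_{(S\setminus A)\cup\{x\}}}=y)$ when it does so at $y$, or it returns to $x$, after which the strong Markov property lets the chain restart afresh. Hence
\begin{multline*}
\p^x(X_{\tau_{S\setminus A}}=y)=\p^x(X_{\tau^+_{(S\setminus A)\cup\{x\}}}=y)\\
+\,\p^x(X_{\tau^+_{(S\setminus A)\cup\{x\}}}=x)\,\p^x(X_{\tau_{S\setminus A}}=y).
\end{multline*}

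I would first dispose of the degenerate case: if $p(\gamma)=0$ for every $\gamma\in\Gamma_A(x,y)$ then both sides vanish, the right-hand side trivially and the left-hand side because any positive-weight escape trajectory from $x$ to $y$ can be loop-erased into a positive-weight simple $A$-path. Otherwise fix $\gamma\in\Gamma_A(x,y)$ with $p(\gamma)>0$; following $\gamma$ shows $\p^x(X_{\tau^+_{(S\setminus A)\cup\{x\}}}=x)<1$, so the displayed equation can be solved for $\p^x(X_{\tau_{S\setminus A}}=y)$ as the quotient of $\p^x(X_{\tau^+_{(S\setminus A)\cup\{x\}}}=y)$ by $1-\p^x(X_{\tau^+_{(S\setminus A)\cup\{x\}}}=x)$. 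Then I would expand the numerator over the first step, $\p^x(X_{\tau^+_{(S\setminus A)\cup\{x\}}}=y)=p(x,y)+\sum_{z\in A\setminus\{x\}}p(x,z)\,\p^z(X_{\tau_{(S\setminus A)\cup\{x\}}}=y)$, and invoke the induction hypothesis with $A':=A\setminus\{x\}$, noting that $S\setminus A'=(S\setminus A)\cup\{x\}$ and $y\in S\setminus A'$, to rewrite each $\p^z(X_{\tau_{S\setminus A'}}=y)$ as a sum over $\Gamma_{A'}(z,y)$.

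It remains to re-index. Every $\gamma\in\Gamma_A(x,y)$ is either the length-two path $xy$ or of the form $x\gamma'$ with $z:=\gamma'_1$ a neighbour of $x$ and $\gamma'\in\Gamma_{A'}(z,y)$, and this correspondence is a bijection onto the index set produced above. Pushing the factor $p(x,y)\,(1-\p^x(\cdots))^{-1}$, respectively $p(x,z)\,(1-\p^x(\cdots))^{-1}$, into the product shows that the $i$-th factor of the product attached to $x\gamma'$ coincides with the leading factor when $i=1$ and with the $(i-1)$-st factor of the product attached to $\gamma'$ when $i>1$; the base case $|A|\le 1$ is immediate. The single point that demands care — and essentially the only obstacle — is keeping the hitting sets aligned: the denominator indexed by a prefix $\gamma'_1\cdots\gamma'_i$ in the $A'$-sum involves $\tau^+_{(S\setminus A')\cup\{\gamma'_1,\dots,\gamma'_i\}}$, and $(S\setminus A')\cup\{\gamma'_1,\dots,\gamma'_i\}=(S\setminus A)\cup\{x,\gamma'_1,\dots,\gamma'_i\}$ is precisely the set the statement prescribes for the prefix $x\gamma'_1\cdots\gamma'_i$ of $x\gamma'$. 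One should also check in passing that no denominator is zero on the support of the sums — again by exhibiting the relevant simple path as a positive-weight escape route — after which the identity falls out by a routine regrouping of products.
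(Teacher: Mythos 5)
Your proposal is correct and follows essentially the same route as the paper's own proof: induction on $|A|$, the renewal decomposition at the first return time to $(S\setminus A)\cup\{x\}$ via the strong Markov property, the same treatment of the degenerate all-zero-path case, the first-step expansion of the numerator, and the induction hypothesis for $A\setminus\{x\}$ followed by the re-indexing bijection between $\Gamma_A(x,y)$ and length-two paths plus paths $x\gamma'$ with $\gamma'\in\Gamma_{A\setminus\{x\}}(z,y)$. If anything, your explicit check that $(S\setminus A')\cup\{\gamma'_1,\dots,\gamma'_i\}=(S\setminus A)\cup\{x,\gamma'_1,\dots,\gamma'_i\}$ makes the re-indexing more transparent than the paper's version, and you correctly state the non-degeneracy hypothesis as $p(\gamma)>0$ where the paper has an evident sign typo.
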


\begin{proof}[Lemma~\ref{lem:singleton-essential}]
Up to focusing let $p$ satisfy Assumption~\ref{Assum2}. Let $\tau^* := \tau^+_{S\setminus T}$, and consider
\[\p^x(X_{\tau^*} = y) = p(x,y) + \sum_{z\in T}p(x,z)\p^z(X_{\tau^*} = y).\]
For all $z\in T$ there are $z'\in S\setminus T$ and $\gamma_z\in\Gamma_T(z,z')$ in the essential graph, so  for all $K \subseteq S$ we have $\p^z(X_{\tau^+_{(S\setminus T)\cup K}} = z) \leq \p^z(X_{\tau^+_{(S\setminus T)\cup\{z\}}} = z) \leq 1 - p(\gamma) < 1-c$. So by Lemma~\ref{lem:escape-decomp}
\[\p^z(X_{\tau^*} = y) \leq \sum_{\gamma\in \Gamma_T(z,y)}p(\gamma)\cdot c^{-|\gamma|} \leq c^{-|T|}\sum_{\gamma\in \Gamma_T(z,y)}p(\gamma).\] Since $\p^z(X_{\tau^*} = y) \geq \sum_{\gamma\in \Gamma_T(z,y)}p(\gamma)$, thus $\p^z(X_{\tau^*} = y) \cong \sum_{\gamma\in \Gamma_T(z,y)}p(\gamma)$, an by Lemma~\ref{lem:cong}.\ref{lem:cong5} we can replace the sum with the maximum. 
\end{proof}

\begin{proof}[Observation~\ref{obs:prec-div}]
Let $\e\in I$. If $g(\e) \neq 0$ then $((f \div_n g)\cdot g) (\e) = \frac{f(\e)}{g(\e)}\cdot g(\e)= f(\e)$; if $g(\e) = 0$ then $f(\e) = 0 = (f \div_n g)(\e)\cdot g(\e)$.
\end{proof}

\begin{proof}[Proposition~\ref{prop:ogs}]
Let $m$ be as in Definition~\ref{defn:os} and let $J \subseteq I$ be its support.
\begin{enumerate}
\item 
$\sigma(p)(x,y)\mid_{I\setminus J} = |S|^{-1}$ for all $x,y\in S$, so $\sigma(p)_\e$ is stochastic for all $\e\in I\setminus J$. Let $\e\in J$. On the one hand $\sum_{y\in S}\sigma(p)_\e(x,y) = \frac{p_\e(x,x) + m - 1}{m} + \sum_{y\in S\setminus\{x\}} \frac{p_\e(x,y)}{m} = 1$, on the other hand $\sum_{y\in S\setminus\{x\}}\sigma(p)_\e(x,y) = \sum_{y\in S\setminus\{x\}} \frac{p_\e(x,y)}{|S|\cdot \max\{p_\e(t,z)\,\mid\,z,t\in S \wedge z \neq t\}}  \leq \frac{(|S|-1)}{|S|} < 1$, so $\sigma(p)_\e$ is stochastic. If $0$ is not a limit point of $J$, it is clear that $\sigma(p)(x,y) \cong 1$ for all $x,y\in S$, and $\sigma(p)$ satisfies Assumption~\ref{Assum1}. If $0$ is a limit point of $J$, then $p(x,y)\mid_J \precsim p(x',y')\mid_J$ implies $\sigma(p)(x,y)\mid_J \precsim \sigma(p)(x',y')\mid_J$ (since $m$ is non-zero on $J$), so $\sigma(p)$ satisfies Assumption~\ref{Assum1} by Lemma~\ref{lem:cong}.\ref{lem:cong7} (since $p\mid_J$ and then $\sigma(p)\mid_{J}$ do).

\item Let us first prove the claim if $J = I$. The equation below shows that $\mu \cdot p = \mu$ iff $\mu\cdot \sigma(p) = \mu$, for all distribution maps $\mu$ on $S = \{x_1,\dots,x_n\}$, so $p$ and $\sigma(p)$ have the same stable states. 
\begin{align*}
(\mu\cdot \sigma(p))_j & = \frac{p(x_j,x_j) + m - 1}{m}\cdot \mu_j + \sum_{i\neq j} \frac{p(x_i,x_j) \mu_i}{m} \\
&= \frac{(m-1)\mu_j + \sum_{i}p(x_i,x_j) \mu_i}{m}
\end{align*}
Let us now prove the claim if $J \subsetneq I$. The case where $0$ is not a limit point of $I\setminus J$ amounts, up to focusing, to the case $J =I$, so let us assume that $0$ is a limit point of $I\setminus J$. For all $\e\in I\setminus J$ we have $p_\e(x,y) = 0$ for all states $x \neq y$, so all distributions are stationary for $p_\e$. Moreover, the uniform distribution is stationary for $\sigma(p)_\e$, so, all states are stable for $p\mid_{I\setminus J}$ and $\sigma(p)\mid_{I\setminus J}$. Therefore, if $0$ is not a limit point of $J$, we are done, and if it is, we are also done by Lemma~\ref{lem:cong}.\ref{lem:cong8}.

\item Let distinct $x,y\in S$ be such that $p(z,t) \precsim p(x,y)$ for all distinct $z,t\in S$. So $p(x,y) \cong \max\{p(z,t)\,\mid\, z,t\in S \wedge z \neq t\}$ by Lemma~\ref{lem:cong}.\ref{lem:cong6}, so $\sigma(p)(x,y) \cong p(x,y) \div_{|S|} |S| \cdot p(x,y) = \frac{1}{|S|} \cong 1$, so $(x,y)$ is in the essential graph of $\sigma(p)$.
\end{enumerate}
\end{proof}

\begin{lemma}\label{lem:3t}
Let a perturbation $p$ with state space $S$ satisfy Assumption~\ref{Assum1}, let $E_1,\dots, E_k$ be its essential classes, and for all $i$ let $x_i \in E_i$. The state $x$ is stable for $p$ iff $x$ belongs to some $E_i$ such that $\cup E_i$ is stable for $\delta\circ\kappa(\dots\kappa(\kappa(\sigma(p),x_1),x_2)\dots,x_k)$.
\end{lemma}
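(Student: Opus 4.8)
The plan is to peel off the outgoing scaling, then the $k$ essential collapses, then the transient deletion, quoting at each stage the matching preservation statement, and to dispose of the transient states with one short extra argument. I read $E_1,\dots,E_k$ as the essential classes of $\sigma(p)$; the outgoing scaling is applied first precisely so that this graph is non-empty (Proposition~\ref{prop:ogs}.\ref{prop:ogs1}), and when the essential graph of $p$ is itself non-empty these $E_i$ are already the essential classes of $p$, since $\sigma$ leaves every off-diagonal weight unchanged up to $\cong$.

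First I would invoke Proposition~\ref{prop:ogs}: by item~\ref{prop:ogs1}, $\sigma(p)$ again satisfies Assumption~\ref{Assum1} and its essential graph is non-empty, and by item~\ref{prop:ogs2}, a state is stable for $p$ iff it is stable for $\sigma(p)$. So it remains to show that $x$ is stable for $\sigma(p)$ iff $x\in E_i$ for some $i$ with $\cup E_i$ stable for $\delta\circ\kappa(\dots\kappa(\kappa(\sigma(p),x_1),x_2)\dots,x_k)$.

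Next I would collapse the classes one by one. Put $q_0:=\sigma(p)$ and $q_{j+1}:=\kappa(q_j,x_{j+1})$. The bookkeeping core is an induction on $j$ showing that $q_j$ satisfies Assumption~\ref{Assum1} and that its essential classes are exactly $\{\cup E_1\},\dots,\{\cup E_j\},E_{j+1},\dots,E_k$. Assumption~\ref{Assum1} is kept because, by Lemma~\ref{lem:p-cong-max}, every off-diagonal transition map of $\kappa(q_j,x_{j+1})$ either equals an off-diagonal transition map of $q_j$ or is $\cong$ to a $\precsim$-maximum of such maps, hence $\cong$ to one of them; so the multiplicative closure of the maps of $q_{j+1}$ sits, up to $\cong$, inside that of $q_j$, which is totally $\precsim$-preordered. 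The essential-class claim holds because collapsing $E_{j+1}$ leaves all arcs among $S\setminus E_{j+1}$ (in particular the essential arcs internal to $E_{j+2},\dots,E_k$) untouched, while $\max_{z\in E_{j+1}}q_j(z,y)\not\cong 1$ for every $y\notin E_{j+1}$ since $E_{j+1}$ is a sink SCC of the essential graph of $q_j$, so no essential arc leaves the meta-state $\cup E_{j+1}$ and $\{\cup E_{j+1}\}$ becomes a singleton essential class. Given this, $k$ successive applications of Proposition~\ref{prop:essential-trans}.\ref{prop:essential-trans4} (legitimate since at stage $j$ the state $x_{j+1}$ lies in the essential class $E_{j+1}$ of $q_j$) yield: $x$ is stable for $\sigma(p)$ iff either $x\in E_i$ for some $i$ with $\cup E_i$ stable for $q_k$, or $x$ is transient for $\sigma(p)$ and $x$ is stable for $q_k$. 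A transient state of $\sigma(p)$ is vanishing for $\sigma(p)$ by Proposition~\ref{prop:essential-graph}.\ref{prop:essential-graph1}, so the second alternative can never make $x$ stable for $\sigma(p)$; hence $x$ is stable for $\sigma(p)$ iff $x\in E_i$ for some $i$ with $\cup E_i$ stable for $q_k$.

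Finally, $q_k$ satisfies Assumption~\ref{Assum1} and has only singleton essential classes, so Proposition~\ref{prop:trans-del} gives that $q_k$ and $\delta(q_k)$ have the same stable states; in particular $\cup E_i$ is stable for $q_k$ iff it is stable for $\delta(q_k)=\delta\circ\kappa(\dots\kappa(\kappa(\sigma(p),x_1),x_2)\dots,x_k)$. Chaining the three equivalences gives the lemma. I expect the main obstacle to be the inductive bookkeeping of the third paragraph — checking that Assumption~\ref{Assum1} survives each collapse and, above all, that after collapsing $E_1,\dots,E_j$ the sets $E_{j+1},\dots,E_k$ are still precisely the non-singleton essential classes, so that the next collapse around $x_{j+1}$ is well-defined and Proposition~\ref{prop:essential-trans}.\ref{prop:essential-trans4} continues to apply; everything else is a straight composition of results already established.
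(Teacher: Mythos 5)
Your proof is correct and follows exactly the route the paper intends: the paper states Lemma~\ref{lem:3t} without proof, and the text preceding Theorem~\ref{thm:stable-states} justifies the procedure by precisely the composition of Proposition~\ref{prop:ogs}.\ref{prop:ogs2}, Proposition~\ref{prop:essential-trans}.\ref{prop:essential-trans4}, and Proposition~\ref{prop:trans-del} that you carry out, with Proposition~\ref{prop:essential-graph}.\ref{prop:essential-graph1} disposing of the transient states. Your additional bookkeeping induction (that Assumption~\ref{Assum1} and the list of essential classes survive each collapse, via Lemma~\ref{lem:p-cong-max} and Lemma~\ref{lem:cong}) and your reading of $E_1,\dots,E_k$ as the essential classes of $\sigma(p)$ are the details the paper leaves implicit, and both are the right way to fill them in.
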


\begin{proof}[Theorem~\ref{thm:stable-states}]
By applying Lemma~\ref{lem:3t} recursively. If $p$ is the identity matrix then all states are stable. Otherwise the essential graph of $\sigma(p)$ is non-empty, so either one essential class is not a singleton, or one state is transient. If there is a non-singleton essential class, the corresponding essential collapse decreases the number of states; if one state is transient, the transient deletion decreases the number of states. Since these transformations do not increase the number of states, $\delta\circ\kappa(\dots \kappa(\kappa(\sigma(p),x_1),x_2)\dots,x_k)$ has fewer states than $p$, whence termination of the recursion on an identity perturbation whose non-empty state space corresponds to the stable states of $p$.
\end{proof}

\begin{proof}[Observation~\ref{obs:div-semiring}]
If $f \leq 0$ then $f = (f \div 0)\cdot 0 = 0$. Also, $(f \div 1) = (f \div 1) \cdot 1 = f$.
\end{proof}

\begin{proof}[Lemma~\ref{lem:odsm-f}]
\begin{enumerate}
\item By Lemma~\ref{lem:cong}.\ref{lem:cong7} with $J$ the support of $g$ and $I\setminus J$, then by Lemmas~\ref{lem:cong}.\ref{lem:cong2} and  \ref{lem:cong}.\ref{lem:cong3}.

\item By Observations~\ref{obs:ccc-loc}.\ref{obs:ccc-loc2} and~\ref{obs:prec-div}, and since $\eqclass{\e\mapsto 1}$ is the $\eqclass{\precsim}$-maximum of $\eqclass{G\cup\{\e\mapsto 0\}}$.
\end{enumerate}
\end{proof}

\begin{proof}[Lemma~\ref{lem:pop}]
\begin{enumerate}
\item Clear by comparing Definitions~\ref{defn:essential-class} and \ref{defn:ao}.\ref{defn:ao1}.


\item Let $\eqclass{p}(z,t) = \max\{\eqclass{p}(z',t') : (z',t')\in S\times S \wedge z'\neq t'\}$. If $x \neq y$ then
\begin{align*}
\eqclass{\sigma}(\eqclass{p})(x,y) & = \eqclass{p}(x,y) \eqclass{\div} \eqclass{p}(z,t) \textrm{ by definition of }\eqclass{\sigma},\\
	& = \eqclass{p(x,y)} \eqclass{\div} \eqclass{p(z,t)} \textrm{ by definition of }\eqclass{p},\\
	& = \eqclass{p(x,y) \div_{|S|} p(z,t)} \textrm{ by Lemma~\ref{lem:odsm-f}},\\
	& =  \eqclass{p(x,y) \div_{|S|} \max\{p(z,t)\,\mid\,z,t\in S\wedge z \neq t\}}\\
	&\textrm{ by Lemma~\ref{lem:odsm-f} and Lemma~\ref{lem:cong}.\ref{lem:cong6}},\\
	& = \eqclass{\sigma(p)}(x,y) \textrm{ by definition of }\sigma.
\end{align*}

\begin{tikzpicture}[shorten >=1pt,node distance=5cm, auto]

\node (a) {
\begin{tikzpicture}[shorten >=1pt,node distance=2cm, auto]
  \node[state] (x) {$x$};
  \node[state] (y) [right of = x] {$y$};

\path[->] (x) edge [loop above] node {$1-\frac{\e^4}{9}$} ()
		edge [bend left] node {$\frac{\e^4}{9}$} (y)
		(y) edge [loop above] node {$1-\frac{\e^7}{3}$} ()
		edge [bend left] node {$\frac{\e^7}{3}$} (x);
 \end{tikzpicture}
};

\node (b) [right of = a]{
\begin{tikzpicture}[shorten >=1pt,node distance=2cm, auto]
  \node[state] (x) {$x$};
  \node[state] (y) [right of = x] {$y$};

\path[->] (x) edge [loop above] node {$\eqclass{1}$} ()
		edge [bend left] node {$\eqclass{\e^4}$} (y)
		(y) edge [loop above] node {$\eqclass{1}$} ()
		edge [bend left] node {$\eqclass{\e^7}$} (x);
 \end{tikzpicture}};
 
\node (c) [below of = b] {
\begin{tikzpicture}[shorten >=1pt,node distance=2cm, auto]
  \node[state] (x) {$x$};
  \node[state] (y) [right of = x] {$y$};

\path[->] (x) edge [loop above] node {$\eqclass{1}$} ()
		edge [bend left] node {$\eqclass{1}$} (y)
		(y) edge [loop above] node {$\eqclass{1}$} ()
		edge [bend left] node {$\eqclass{\e^3}$} (x);
 \end{tikzpicture}};
 
\node (d) [left of = c]{
\begin{tikzpicture}[shorten >=1pt,node distance=3cm, auto]
  \node[state] (x) {$x$};
  \node[state] (y) [right of = x] {$y$};

\path[->] (x) edge [loop above] node {$1-\frac{1}{2\max(1,3\e^3)}$} ()
		edge [bend left] node  [below]{$\frac{1}{2\max(1,3\e^3)}$} (y)
		(y) edge [loop above] node {$1-\frac{3\e^3}{2\max(1,3\e^3)}$} ()
		edge [bend left] node{$\frac{3\e^3}{2\max(1,3\e^3)}$} (x);
 \end{tikzpicture}};

\draw[->] (a) to node {$\eqclass{\,\,}$} (b);
\draw[->] (b) to node {$\eqclass{\sigma}$} (c);
\draw[->] (a) to node {$\sigma$} (d);
\draw[->] (d) to node {$\eqclass{\,\,}$} (c);
 \end{tikzpicture}

\item The essential classes of $p$ are singletons since $\delta(p)$ is well-defined. Let $\{x\}$ and $\{y\}$ be distinct essential classes of $p$, and of $\eqclass{p}$ by Lemma~\ref{lem:pop}.\ref{lem:pop0}. Let $M := \max_{\eqclass{\precsim}}\{\eqclass{p}(\gamma) : \gamma\in \Gamma_T(x,y)\}$, so $M = \eqclass{\max\{p(\gamma) : \gamma\in \Gamma_T(x,y)\}}$ by Lemma~\ref{lem:cong}.\ref{lem:cong6}. Note that $p(x,x) \cong 1$ since $\{x\}$ is an essential class of $p$, so $\sum_{z\in S\setminus T}\max\{p(\gamma) : \gamma\in \Gamma_T(x,z)\} \cong 1$ too. So 

\begin{align*}
&\eqclass{\chi}(\eqclass{p})(\{x\},\{y\}) = M \textrm{ by definition of }\eqclass{\chi},\\
&	 =  M \eqclass{\div} \eqclass{1} \textrm{ by Observation~\ref{obs:div-semiring}},\\
&	 = M \eqclass{\div} \eqclass{\sum_{z\in S\setminus T}\max\{p(\gamma) : \gamma\in \Gamma_T(x,z)\}}\textrm{ by a remark above},\\
&	 =  \eqclass{\max\{p(\gamma) : \gamma\in \Gamma_T(x,y)\}} \eqclass{\div} \\
&\eqclass{\sum_{z\in S\setminus T}\max\{p(\gamma) : \gamma\in \Gamma_T(x,z)\}}\textrm{ by a remark above},\\
&	 = \eqclass{\max\{p(\gamma) : \gamma\in \Gamma_T(x,y)\} \div_{|S|} \\
&\sum_{z\in S\setminus T}\max\{p(\gamma) : \gamma\in \Gamma_T(x,z)\}} \textrm{ by Lemma~\ref{lem:odsm-f}}, \\
&	 = \eqclass{\delta(p)}(x,y) \textrm{ by definition of }\delta.
\end{align*}

\begin{tikzpicture}[shorten >=1pt,node distance=5cm, auto]

\node (a) {
\begin{tikzpicture}[shorten >=1pt,node distance=2cm, auto]
  \node[state] (x) {$x$};
  \node[state] (z) [above right of = x] {$z$};
  \node[state] (y) [below right of = z] {$y$};

\path[->] (x) edge [loop below] node {$1-\frac{\e^2}{4}-\frac{\e^3}{3}$} ()
		edge [bend left] node {$\frac{\e^2}{4}$} (z)
		edge [bend right] node {$\frac{\e^3}{3}$} (y)	
		(z) edge [loop above] node {$\frac{1}{2}$} ()
		 edge [bend left] node {$\frac{1}{4}$} (x)
		 edge [bend left] node {$\frac{1}{4}$} (y)
		(y) edge [loop below] node {$1-\e$} ()
		edge [bend left] node {$\e$} (z);
 \end{tikzpicture}};

\node (b) [right of = a]{
\begin{tikzpicture}[shorten >=1pt,node distance=2cm, auto]
  \node[state] (x) {$x$};
  \node[state] (z) [above right of = x] {$z$};
  \node[state] (y) [below right of = z] {$y$};

\path[->] (x) edge [loop below] node {$\eqclass{1}$} ()
		edge [bend left] node {$\eqclass{\e^2}$} (z)
		edge [bend right] node {$\eqclass{\e^3}$} (y)	
		(z) edge [loop above] node {$\eqclass{1}$} ()
		 edge [bend left] node {$\eqclass{1}$} (x)
		 edge [bend left] node {$\eqclass{1}$} (y)
		(y) edge [loop below] node {$\eqclass{1}$} ()
		edge [bend left] node {$\eqclass{\e}$} (z);
 \end{tikzpicture}
};
 
\node (c) [below of = b] {
\begin{tikzpicture}[shorten >=1pt,node distance=2cm, auto]
  \node[state] (x) {$x$};
  \node[state] (y) [right of = x] {$y$};

\path[->] (x) edge [loop below] node {$\eqclass{1}$} ()
		edge [bend left] node {$\eqclass{\e^2}$} (y)
		(y) edge [loop below] node{$\eqclass{1}$} ()
		edge [bend left] node {$\eqclass{\e}$} (x);
 \end{tikzpicture}};
 
\node (d) [left of = c]{
\begin{tikzpicture}[shorten >=1pt,node distance=3cm, auto]
  \node[state] (x) {$x$};
  \node[state] (y) [right of = x] {$y$};

\path[->] (x) edge [loop below] node {$1 - \dots$} ()
		edge [bend left] node {$\frac{\max(\frac{3}{4}\e^2,4\e^3)}{12-3\e^2-4\e^3 + \max(\frac{3}{4}\e^2,4\e^3)}$} (y)
		(y) edge [loop below] node {$\frac{ \max(\e,4-4\e)}{\e + \max(\e,4-4\e)}$} ()
		edge [bend left] node [above] {$\frac{\e}{\e + \max(\e,4-4\e)}$} (x);
 \end{tikzpicture}};

\draw[->] (a) to node {$\eqclass{\,\,}$} (b);
\draw[->] (b) to node {$\eqclass{\chi}$} (c);
\draw[->] (a) to node {$\delta$} (d);
\draw[->] (d) to node {$\eqclass{\,\,}$} (c);
 \end{tikzpicture}

\item Let $x,y\in S\setminus E_i$. First, $\eqclass{\kappa}(\eqclass{p},E_i)(x,y) = \eqclass{p}(x,y) = \eqclass{p(x,y)} = \eqclass{\kappa(p,x_i)(x,y)}$ by definitions of $\eqclass{\kappa}$, $\eqclass{p}$, and $\kappa$. Also $\eqclass{\kappa}(\eqclass{p},E_i)(\cup E_i,y) = \max_{\eqclass{\precsim}}\{\eqclass{p}(x,y)\,|\,x\in E_i\} = \eqclass{\max\{p(x,y)\,|\,x\in E_i\}} = \eqclass{\kappa(p,x_i)(\cup E_i,y)}$ by definition, Lemma~\ref{lem:cong}.\ref{lem:cong6}, and Lemma~\ref{lem:p-cong-max}. Likewise $\eqclass{\kappa}(\eqclass{p},E_i)(y,\cup E_i) = \max_{\eqclass{\precsim}}\{\eqclass{p}(y,x)\,|\,x\in E_i\} = \eqclass{\max\{p(y,x)\,|\,x\in E_i\}} = \eqclass{\kappa(p,x_i)(y,\cup E_i)}$.

\begin{tikzpicture}[shorten >=1pt,node distance=5cm, auto]

\node (a) {
\begin{tikzpicture}[shorten >=1pt,node distance=2cm, auto]
  \node[state] (x) {$x$};
  \node[state] (z) [above right of = x] {$z$};
  \node[state] (y) [below right of = z] {$y$};

\path[->] (x) edge [loop below] node {$\frac{1}{2}$} ()
		edge node {$\frac{1}{2}$} (y)	
		(z) edge [loop above] node {$\frac{2-\e^2}{3}$} ()
		 edge [bend right] node {$\frac{\e^2}{3}$} (x)
		 edge [bend left] node {$\frac{1}{3}$} (y)
		(y) edge [loop below] node {$0$} ()
		edge node {$\e$} (z)
		edge [bend left] node {$1-\e$} (x);
 \end{tikzpicture}
};

\node (b) [right of = a]{
\begin{tikzpicture}[shorten >=1pt,node distance=2cm, auto]
  \node[state] (x) {$x$};
  \node[state] (z) [above right of = x] {$z$};
  \node[state] (y) [below right of = z] {$y$};

\path[->] (x) edge [loop below] node {$\eqclass{1}$} ()
		edge node {$\eqclass{1}$} (y)	
		(z) edge [loop above] node {$\eqclass{1}$} ()
		 edge [bend right] node {$\eqclass{\e^2}$} (x)
		 edge [bend left] node {$\eqclass{1}$} (y)
		(y) edge [loop below] node {$\eqclass{0}$} ()
		edge node {$\eqclass{\e}$} (z)
		edge [bend left] node {$\eqclass{1}$} (x);
 \end{tikzpicture}};
 
\node (c) [below of = b] {
\begin{tikzpicture}[shorten >=1pt,node distance=2cm, auto]
  \node[state] (x) {$x\cup y$};
  \node[state] (z) [right of = x] {$z$};

\path[->] (x) edge [loop below] node {$\eqclass{1}$} ()
		edge [bend left] node {$\eqclass{\e}$} (z)	
		(z) edge [loop below] node {$\eqclass{1}$} ()
		 edge [bend left] node {$\eqclass{1}$} (x);
\end{tikzpicture}};
 
\node (d) [left of = c]{
\begin{tikzpicture}[shorten >=1pt,node distance=2.5cm, auto]
  \node[state] (x) {$x\cup y$};
  \node[state] (z) [right of = x] {$z$};

\path[->] (x) edge [loop below] node {$1-\frac{\e}{2}$} ()
		edge [bend left] node {$\frac{\e}{2}$} (z)	
		(z) edge [loop below] node {$\frac{2-\e^2}{3}$} ()
		 edge [bend left] node {$\frac{1+\e^2}{3}$} (x);
\end{tikzpicture}};

\draw[->] (a) to node {$\eqclass{\,\,}$} (b);
\draw[->] (b) to node {$\eqclass{\kappa}(\cdot,x\cup y)$} (c);
\draw[->] (a) to node {$\kappa(\cdot,x)$} (d);
\draw[->] (d) to node {$\eqclass{\,\,}$} (c);
 \end{tikzpicture}

\item Let $P := \eqclass{p}$ and $\leq := \eqclass{\precsim}$, and let us prove the claim abstractly. First note that $\eqclass{\chi}\circ\eqclass{\kappa}(P,E_1)$ and $\eqclass{\chi}(P)$ have the same state space $\{\cup E_1,\dots,\cup E_k\}$. For $i,j\neq 1$ the definition of $\eqclass{\chi}$ gives $\eqclass{\chi}\circ\eqclass{\kappa}(P,E_1)(\cup E_j,\cup E_j) = \max_{\leq}\{\eqclass{\kappa}(P,E_1)(\gamma) : \gamma\in \Gamma_T(E_i,E_j)\}$, where $T := S\setminus \cup_iE_i$. It is equal to $\eqclass{\chi}(P)(\cup E_j,\cup E_j)$ since $\eqclass{\kappa}(P,E_1)(x,y) = P(x,y)$ for all $x,y\in S\setminus E_1$, which then also holds for paths $\gamma\in\Gamma_T(E_i,E_j)$. Let us now show that $\eqclass{\chi}(P)(\cup E_1,\cup E_j) = \eqclass{\chi}\circ\eqclass{\kappa}(P,E_1)(\cup E_1,\cup E_j)$. On the one hand for all paths $x\gamma\in \Gamma_T(E_1,E_j)$ we have $P(x\gamma) \leq \eqclass{\kappa}(P,E_1)((\cup E_1)\gamma)$ since $\eqclass{\kappa}(P,E_1)(\cup E_1,y) := \max_{\leq}\{P(x,y) : x\in E_1\}$, and on the other hand for every $\gamma\in T^*\times E_j$ there exists $x\in E_1$ such that $\eqclass{\kappa}(P,E_1)((\cup E_1)\gamma) = P(x\gamma)$. So 
\begin{align*}
\eqclass{\chi}(P)(\cup E_1,\cup E_j) & = \max_{\leq}\{P(\gamma) : \gamma\in\Gamma_T(E_1,E_j)\}\textrm{ by definition},\\
	& = \max_{\leq}\{\eqclass{\kappa}(P,E_1)(\gamma) : \gamma\in \Gamma_T(\cup E_1,E_j)\}\\
	&\textrm{ by the remark above},\\
	& = \eqclass{\chi}\circ\eqclass{\kappa}(P,E_1)(\cup E_1,\cup E_j) \textrm{ by definition}.
\end{align*}
The equality $\eqclass{\chi}(P)(\cup E_i,\cup E_1) = \eqclass{\chi}\circ\eqclass{\kappa}(P,E_1)(\cup E_i,\cup E_1)$ can be proved likewise.

\item Let us first prove $\eqclass{\delta\circ\kappa(\dots\kappa(\kappa(p,x_1),x_2)\dots,x_k)} (\cup E_i,\cup E_j) = \eqclass{\chi}(\eqclass{p})(\cup E_i,\cup E_j)$ for all $i \neq j$ by induction on the number $k'$ of non-singleton essential classes. Since collapsing a singleton class has no effect, the claim holds for $k' = 0$ by Lemma~\ref{lem:pop}.\ref{lem:pop2}, so let us assume that it holds for some arbitrary $k'$ and that $p$ has $k'+1$ non-singleton essential classes. One may assume up to commuting and renaming that $E_1$ is not a singleton. Since $\kappa(\kappa(p,x_1),\cup E_1) = \kappa(p,x_1)$, also $\delta\circ\kappa(\dots\kappa(\kappa(p,x_1),x_2)\dots,x_{k}) = \delta\circ\kappa(\dots\kappa(\kappa(\kappa(p,x_1),\cup E_1),x_2)\dots,x_{k})$. So $\eqclass{\delta\circ\kappa(\dots\kappa(\kappa(p,x_1),x_2)\dots,x_{k})} (\cup E_i,\cup E_j) =$\\
$ \eqclass{\chi}(\eqclass{\kappa(p,x_1)})(\cup E_i,\cup E_j)$ for all $i \neq j$ by induction hypothesis. Moreover, $\eqclass{\chi}(\eqclass{\kappa(p,x_1)}) = \eqclass{\chi}(\eqclass{\kappa}(\eqclass{p},E_1)) = \eqclass{\chi}(\eqclass{p})$ by Lemmas~\ref{lem:pop}.\ref{lem:pop3} and \ref{lem:pop}.\ref{lem:pop4}. Therefore $\eqclass{\chi}(\eqclass{\sigma(p)}) (\cup E_i,\cup E_j)= \eqclass{\delta\circ\kappa(\dots\kappa(\kappa(\sigma(p),x_1),x_2)\dots,x_k)}(\cup E_i,\cup E_j)$ for all $i \neq j$ by Lemma~\ref{lem:pop}.\ref{lem:pop1} and Observation~\ref{obs:ao-refl-irrel}.

\end{enumerate}
\end{proof}

\begin{proof}[Proposition~\ref{prop:complexity}]
Line~\ref{line:type-vertex} from Algorithm~\ref{algo:br} is performed once and takes $n$ steps; Line~\ref{line:type-edge} takes one step and is performed $n^2$ times. Let us now focus on the recursive function \KwHubRec. If all the arcs of the input are labelled with $0$, the algorithm terminates; if not, $p(s,t) = 1$ at least for some distinct $s,t\in S$ after the outgoing scaling at Line~\ref{line:norm-label}, so either the strongly connected component of $s$ is not a sink, or $s$ is in the same strongly connected component as $t$, which implies in both cases that there are fewer $\cup S_i$ than vertices in $S$, and subsequently that \KwHubRec is recursively called at most $n$ times for an input with $n$ vertices.  Lines~\ref{line:max-label}, \ref{line:norm-label}, \ref{line:max-label-edges}, \ref{line:partial-max}, \ref{line:transient}, \ref{line:reduce-transient}, and \ref{line:remove} take at most $O(n^2)$ steps at each call, thus contributing $O(n^3)$ altogether. Tarjan's algorithm and its modification both run in $O(|A|+|S|)$ which is bounded by $O(n^2)$, and moreover the arcs from different recursive steps are also different, so the overall contribution of Line~\ref{line:Tarjan-SSCC} is $O(n^2)$. 

Let us now deal with the more complex Lines~\ref{line:Dijkstra} and \ref{line:global-max}. Let $r$ be the number of recursive calls that are made to \KwHubRec, and at the $j$-th call let $T_j$ denote the vertices otherwise named $T$. Since the $(j+1)$-th recursive call does not involve vertices in $T_j$, we obtain $\sum_{j=1}^r|T_j| \leq n$. The loop at Line~\ref{line:global-max} is taken at most $n^2|T_j|$ times during the $j$-th call, which yields an overall contribution of $O(n^3)$. Likewise, since a basic shortest-path algorithms terminates within $O(n^2)$ steps and since it is called $|T_j|$ times during the $j$-th recursive call, Line~\ref{line:Dijkstra}'s overall contribution is $O(n^3)$.
\end{proof}

\begin{proof}[Proposition~\ref{prop:transient-vanish}]
Up to focusing let $p$ satisfy Assumption~\ref{Assum2}. Let $y$ be in the set of the transient states $T$, so there exist $x\notin T$ and $\gamma\in \Gamma_T(y,x)$ in the essential graph. By Assumption~\ref{Assum2} this implies
\[0 < \liminf_{\e\to 0}p_\e(\gamma) \leq \liminf_{\e\to 0}\p_{\e}^y(\tau^+_x < \tau^+_y).\]
On the other hand, let $E$ be the essential class of $x$. Lemma~\ref{lem:p-cong-max} implies that
\begin{align*}
\p^x(\tau^+_y < \tau^+_x) & \leq 1- \p^x(X_{\tau^+_{S\backslash E\cup\{x\}}} = x)\\
& = \sum_{z\notin E}\p^x(X_{\tau^+_{S\backslash E\cup\{x\}}} = z) \cong \sum_{z\notin E}\max_{t\in E}p(t,z)
\end{align*}
Since $E$ is an essential class, $\sum_{z\notin E}\max_{t\in E}p_\e(t,z) \stackrel{\e\to 0}{\longrightarrow} 0$, thus by Lemma~\ref{lem:hsr}
\[\tilde{\mu}_{\e}(y) \leq \frac{\tilde{\mu}_{\e}(y)}{\mu_{\e}(x)} = \frac{\p_{\e}^x(\tau^+_y < \tau^+_x)}{\p_{\e}^y(\tau^+_x < \tau^+_y)}\stackrel{\e\to 0}{\longrightarrow} 0.\]
\end{proof}

\begin{proof}[Corollary~\ref{cor:strong-assum-stable}]
In the procedure underlying Theorem~\ref{thm:stable-states}, only the states that are transient at some point during the run are deleted. By Proposition~\ref{prop:transient-vanish} these are exactly the fully vanishing states.
\end{proof}

\begin{proof}[Lemma~\ref{lem:gen-Young}]
For all $x\in S$ let $q^x_\e := \sum_{T\in \mathcal{T}_x} \prod_{(z,t)\in T}p_\e(z,t)$ and let $q := (q^y_\e)_{y\in S,\e\in I}$. By irreducibility $q^z_\e > 0$ for all $z\in S$ and $\e\in I$, so let $\mu_\e^z := \frac{q^z_\e}{\sum_{y\in S}q^y_\e}$ for all $z\in S$ and $\e\in I$. Let us assume that $\beta^{y} \precsim \beta^{x}$ for all $y\in S$, so by finiteness of $S$ there exist positive $c$ and $\e_0$ such that $\beta^y_\e \leq c \cdot \beta^x_\e$ for all $y\in S$ and $\e < \e_0$. For all $y\in S$ and $\e < \e_0$ we have

\[q^x_\e \geq \beta^x_\e \geq \frac{\beta^y_\e}{c} \geq \frac{1}{c \cdot |\mathcal{T}_y|} \cdot \sum_{T\in \mathcal{T}_y} \prod_{(z,t)\in T}p_\e(z,t) = \frac{q^y_\e}{c \cdot |\mathcal{T}_y|}\]
\noindent Note that $|\mathcal{T}_y| \leq 2^{|S|^2}$ since a spanning tree of a graph is a subset of its arcs, so
\[\mu_\e^x = \frac{q^x_\e}{\sum_{y\in S}q^y_\e} \geq \frac{1}{c \cdot \sum_{y\in S}|\mathcal{T}_y|} \geq \frac{1}{c \cdot |S|\cdot 2^{|S|^2}}\]
which ensures that $\liminf_{\e\to 0} \mu^x_\e > 0$. By the Markov chain tree theorem $\mu\cdot p = \mu$, so $x$ is a stable state.

Conversely, let us assume that $\neg(\beta^{y} \precsim \beta^{x})$ for some $y\in S$, so for all $c,\e > 0$ there exists a positive $\eta < \e$ such that $c \cdot \beta^x_{\eta} < \beta^y_{\eta}$. Let $c,\e > 0$ and let a positive $\eta < \e$ be such that $c \cdot 2^{|S|^2} \cdot \beta^x_{\eta} < \beta^y_{\eta}$, so $c \cdot \mu_\eta^x < \mu_\eta^y$. Since $\mu \leq 1$, it shows that  $\liminf_{\e\to 0} \mu^x_\e = 0$. 
\end{proof}

\begin{proof}[Observation~\ref{obs:gen-pos-ss}]
let $G$ be the graph with arc $(x,y)$ if $p(x,y) > 0$. Let $E'_1,\dots,E'_{k'}$ be the sink (aka bottom) strongly connected components of $G$, so a state is stable for $p$ iff it is stable for one of the $p\mid_{E'_i \times E'_i}$. Since the $p\mid_{E'_i \times E'_i}$ are irreducible perturbations, Lemma~\ref{lem:gen-Young} can be applied, and by Assumption~\ref{Assum1} the weights of the spanning trees are totally preordered, so there are stable states.
\end{proof}

\begin{proof}[Observation~\ref{obs:gen-ss}]
For all $x,y\in S$ let $I_{xy}$ be the support of $p(x,y) : I\to[0,1]$. By Assumption~\ref{Assum1} the $I_{xy}$ are totally ordered by inclusion. Among these sets let $0\subsetneq I_1 \subsetneq \dots \subsetneq I_l \subsetneq I$ be the non-trivial subsets of $I$. Up to focusing on a smaller neighborhood of $0$ inside $I$, let us assume that $0$ is a limit point of $I_{1}$, all the ${I_{i+1}\setminus I_i }$, and $I\setminus I_{l}$. By Lemma~\ref{lem:cong}.\ref{lem:cong8} a state is stable for $p$ iff it is stable for $p\mid_{I_1}$, all the $p\mid_{I_{i+1}\setminus I_i }$, and $p\mid_{I\setminus I_l}$. These restrictions all satisfy the positivity assumption of Observation~\ref{obs:gen-pos-ss}, whose underlying algorithm computes the stable states in $O(n^3)$. Since there are at most $n^2$ restrictions, stability is decidable in $O(n^5)$.
\end{proof}

\begin{proof}[Proposition~\ref{prop:stable-dec}]
By induction on $n := |\{p(x,y)\,\mid\, x\neq y \wedge p(x,y)\neq 0 \wedge \neg(0 < p(x,y))\}|$. If $n =0$, let $G$ be the graph with arc $(x,y)$ if $p(x,y) > 0$. Let $E'_1,\dots,E'_{k'}$ be the sink SCCs of $G$, so a state is stable for $p$ iff it is stable for one of the $p\mid_{E'_i \times E'_i}$. By decidability of $\precsim$ and since the $p\mid_{E'_i \times E'_i}$ are irreducible perturbations, Lemma~\ref{lem:gen-Young} allows us to compute their stable states. 

If $n > 0$ let $p(x,y)$ be a non-zero function with zeros, and let $J$ be its support. If $0$ is not a limit point of $J$ ($I\backslash J$), the stable states of $p$ are the stable states of $p_{I\backslash J }$ ($p_{J}$), which are computable by induction hypothesis. If $0$ is a limit point of both $J$ and $I\backslash J$, by Lemma~\ref{lem:cong}.\ref{lem:cong8} the stable states wrt $p$ are the states that are stable wrt both $p_{J}$ and $p_{I\backslash J}$, and we can use the induction hypothesis for both. 
\end{proof}

\begin{proof}[Observation~\ref{obs:span-tree}]
By induction. More specifically, let us prove that these roots are preserved and reflected by outgoing scaling, essential collapse, and transient deletion.
\begin{itemize}
\item Since the outgoing scaling divides all the coefficients by the same scale $f\in F$, the weights of the spanning trees are all divided by $f^{|S|-1}$, and the order between them is preserved. 
\item Let $E$ be a (sink) SCC of the essential graph of $P$, and let $x,y \in E$. It is easy to see that a spanning tree rooted at $x$ can be modified (only within $E$) into a spanning tree rooted at $y$ that has the same weight. Since the arcs in $E$ do not contribute to the weight, the essential collapse is safe.
\item Let the sink SCCs of $P$ be singletons, and let $\{y\}$ not be one of those, so there exists a path from $y$ to a sink SCC $\{x\}$. Let $T$ be a spanning tree rooted at $y$. Following $T$, let $x'$ be the successor of $x$, so the weight of $(x,x')$ is less than $1$. Let us modify $T$ into $T'$ by letting $y$ lead to the new root $x$ by a path of weight $1$. The weight of $T'$ is greater than that of $T$ by at least the weight of $(x,x')$. This shows that only essential vertices may be the roots of spanning trees of maximum weights. Moreover, let $T$ be a spanning tree of maximum weight, and let $x$ and $y$ be essential vertices such that following $T$ from $x$ leads to $y$ without visiting any other essential vertex. Then this path between $x$ and $y$ must have maximal weight among all paths from $x$ to $y$ that avoid other essential vertices. So the weight of maximal spanning trees after  transient deletion correspond to the weight before deletion.
\end{itemize}
\end{proof}

\end{document}